\documentclass{lmcs} 
\pdfoutput=1
\usepackage[utf8]{inputenc}

\usepackage{lastpage}
\lmcsdoi{20}{4}{16}
\lmcsheading{}{\pageref{LastPage}}{}{}%
{Oct.~28,~2021}{Nov.~21,~2024}{}

\usepackage{tikz} 
\usetikzlibrary{arrows}
\usetikzlibrary{arrows,automata}
\usetikzlibrary{automata,positioning,arrows}
\usetikzlibrary{shapes}
\usetikzlibrary{shapes,backgrounds}
\usetikzlibrary{matrix}
\usetikzlibrary{chains,fit,shapes}
\usepackage{amsmath}
\usepackage{amssymb}
\usepackage{amsfonts}
\usepackage{amsthm}
\usepackage[ruled]{algorithm2e}

\usepackage{wrapfig}

\usepackage{stmaryrd}

\usepackage{tensor}

\usepackage{booktabs}

\DeclareMathOperator{\undefin}{\perp}

\DeclareMathOperator{\varno}{\wp}

\DeclareMathOperator{\vset}{\mathsf{VSet}}

\DeclareMathOperator{\corespanners}{\mathsf{core-}\mathfrak{S}}

\newcommand{\msrep}[1]{\mathsf{msrep}(#1)}

\DeclareMathOperator{\TestProb}{\mathsf{Testing}}
\DeclareMathOperator{\NonemptProb}{\mathsf{NonEmptiness}}
\DeclareMathOperator{\SatProb}{\mathsf{Satisfiability}}
\DeclareMathOperator{\ContProb}{\mathsf{Containment}}
\DeclareMathOperator{\EquProb}{\mathsf{Equivalence}}
\DeclareMathOperator{\HierProb}{\mathsf{Hierarchicality}}

\DeclareMathOperator{\regspanners}{\mathsf{reg-}\mathfrak{S}}
\DeclareMathOperator{\reflspanners}{\mathsf{refl-}\mathfrak{S}}

\DeclareMathOperator{\regLanguages}{\mathsf{reg}-\mathfrak{L}}

\DeclareMathOperator{\regswmLanguages}{\mathsf{reg}-\mathsf{swm}-\mathfrak{L}}

\DeclareMathOperator{\regrefLanguages}{\mathsf{reg}-\mathsf{ref}-\mathfrak{L}}

\DeclareMathOperator{\regexform}{\mathsf{RGX}}

\DeclareMathOperator{\spanfusion}{\uplus}
\DeclareMathOperator{\bigspanfusion}{\biguplus}

\DeclareMathOperator{\RE}{\mathsf{RE}}

\DeclareMathOperator{\npclass}{\mathsf{NP}}
\DeclareMathOperator{\pspaceclass}{\mathsf{PSpace}}

\DeclareMathOperator{\expspaceclass}{\mathsf{ExpSpace}}

\DeclareMathOperator{\bigO}{O}

\newcommand{\erasemorphism}[1]{\mathsf{e}_{#1}}

\newcommand{\deref}[1]{\mathfrak{d}(#1)}
\newcommand{\derefmark}{\mathfrak{d}}

\newcommand{\getWord}[1]{\mathfrak{e}(#1)}

\newcommand{\getSpanTuple}[1]{\mathsf{st}(#1)}

\DeclareMathOperator{\NFA}{\mathsf{NFA}}

\DeclareMathOperator{\spans}{\textsf{Spans}}

\DeclareMathOperator{\eword}{\varepsilon}
\DeclareMathOperator{\lang}{\mathcal{L}}

\DeclareMathOperator{\altop}{\vee}

\DeclareMathOperator{\openmark}{\triangleright}
\DeclareMathOperator{\closemark}{\triangleleft}

\DeclareMathOperator{\lce}{\textsf{LCE}}

\newcommand{\open}[1]{\tensor[^{#1}]{\triangleright}{_{}}}
\newcommand{\close}[1]{\tensor[_{}]{\triangleleft}{^{#1}}}

\newcommand{\spann}[2]{\ensuremath{[#1,#2\rangle}}

\newcommand{\varsx}{\ensuremath{\mathsf{x}}}
\newcommand{\varsy}{\ensuremath{\mathsf{y}}}
\newcommand{\varsz}{\ensuremath{\mathsf{z}}}

\newcommand{\varset}{\ensuremath{\mathcal{X}}}
\newcommand{\extVarset}[1]{\textsf{ext}(#1)}
\newcommand{\splitset}[2]{\textsf{split}_{#1}(#2)}

\newcommand{\ta}{\ensuremath{\mathtt{a}}}
\newcommand{\tb}{\ensuremath{\mathtt{b}}}
\newcommand{\tc}{\ensuremath{\mathtt{c}}}

\newcommand{\ncc}[1]{\newcommand{#1}}
\newcommand{\rnc}[1]{\renewcommand{#1}}

\ncc{\myparagraph}[1]{\textbf{#1.}}

\rnc{\leq}{\ensuremath{\leqslant}}
\rnc{\geq}{\ensuremath{\geqslant}}

\rnc{\le}{\leq}
\rnc{\ge}{\geq}

\ncc{\isdef}{\ensuremath{:=}}
\ncc{\deff}{\isdef}

\ncc{\set}[1]{\ensuremath{\{#1\}}}
\ncc{\setsize}[1]{\ensuremath{|#1|}}
\ncc{\Setsize}[1]{\ensuremath{\big|#1\big|}}
\ncc{\Set}[1]{\ensuremath{\big\{#1\big\}}}
\ncc{\setc}[2]{\set{#1 \ | \ #2}}
\ncc{\Setc}[2]{\Set{#1 \ | \ #2}}

\ncc{\aufgerundet}[1]{\ensuremath{\lceil #1 \rceil}}
\ncc{\abgerundet}[1]{\ensuremath{\lfloor #1 \rfloor}}
\ncc{\restrict}[2]{\ensuremath{{#1}_{|#2}}}
\ncc{\extend}[3]{\ensuremath{{#1}\tfrac{#3}{#2}}}

\ncc{\dcup}{\ensuremath{\dot\cup}}

\ncc{\bigoh}{O}
\ncc{\bigOh}{\bigoh}

\ncc{\potenzmengeof}[1]{\ensuremath{\mathcal{P}({#1})}}
\ncc{\Pot}[1]{\potenzmengeof{#1}}

\ncc{\ov}[1]{\ensuremath{\overline{#1}}}

\ncc{\NN}{\ensuremath{\mathbb{N}}}
\ncc{\NNpos}{\ensuremath{\NN_{\scriptscriptstyle\geq 1}}}
\ncc{\RR}{\ensuremath{\mathbb{R}}}
\ncc{\RRpos}{\ensuremath{\RR_{\scriptscriptstyle\geq 0}}}
\ncc{\QQ}{\ensuremath{\mathbb{Q}}}
\ncc{\QQpos}{\ensuremath{\QQ_{\scriptscriptstyle\geq 0}}}

\ncc{\und}{\ensuremath{\wedge}}
\ncc{\Und}{\ensuremath{\bigwedge}}
\ncc{\oder}{\ensuremath{\vee}}
\ncc{\Oder}{\ensuremath{\bigvee}}
\ncc{\nicht}{\ensuremath{\neg}}
\ncc{\impl}{\ensuremath{\to}}
\ncc{\gdw}{\ensuremath{\leftrightarrow}}

\ncc{\free}{\ensuremath{\textrm{\upshape free}}}
\ncc{\quant}{\ensuremath{\textrm{\upshape quant}}}
\ncc{\ar}{\ensuremath{\operatorname{ar}}}

\ncc{\Dom}[1]{\ensuremath{\textup{dom}(#1)}}
\ncc{\Rg}[1]{\ensuremath{\textup{rg}(#1)}}

\ncc{\Structure}[1]{\ensuremath{\mathcal{#1}}}
\ncc{\A}{\Structure{A}}
\ncc{\B}{\Structure{B}}
\ncc{\C}{\Structure{C}}

\ncc{\isom}{\ensuremath{\cong}}
\ncc{\joinable}{\ensuremath{\sim}}

\ncc{\querycont}{\ensuremath{\sqsubseteq}}
\ncc{\eval}[2]{\ensuremath{\llbracket#1\rrbracket^{#2}}}
\ncc{\semantik}[1]{\ensuremath{\left\llbracket#1\right\rrbracket}}
\ncc{\sem}[1]{\semantik{#1}}

\ncc{\Yes}{\texttt{yes}}
\ncc{\No}{\texttt{no}}
\ncc{\True}{\ensuremath{\texttt{true}}}
\ncc{\False}{\ensuremath{\texttt{false}}}

\rnc{\phi}{\varphi}

\ncc{\emptytuple}{\ensuremath{()}}
\ncc{\emptyword}{\ensuremath{\varepsilon}}

\ncc{\proj}{\ensuremath{\pi}}
\ncc{\select}{\ensuremath{\sigma}}
\ncc{\union}{\ensuremath{\cup}}
\ncc{\intersect}{\ensuremath{\cap}}
\ncc{\join}{\ensuremath{\bowtie}}
\ncc{\eqconstr}{\ensuremath{\varsigma^{=}}}
\ncc{\streq}{\eqconstr}

\ncc{\X}{\ensuremath{\varset}}
\ncc{\Y}{\ensuremath{\mathcal{Y}}}
\ncc{\Z}{\ensuremath{\mathcal{Z}}}

\ncc{\vaFont}[1]{\ensuremath{#1}}
\ncc{\vaM}{\vaFont{M}}

\ncc{\tuple}{\ensuremath{t}}

\ncc{\x}{\varsx}
\ncc{\y}{\varsy}
\ncc{\z}{\varsz}

\ncc{\Letters}{\ensuremath{\textsf{letters}}}
\ncc{\suff}{\ensuremath{\textsf{suff}}}

\rnc{\TestProb}{\mathsf{ModelChecking}}
\DeclareMathOperator{\FunctProb}{\mathsf{Functionality}}

\newenvironment{mea}{\begin{enumerate}[(a)]}{\end{enumerate}}

\keywords{Document spanners, regular expressions with backreferences}

\usepackage{hyperref}
\theoremstyle{plain} 


\begin{document}

\title[REFL-SPANNERS: A REGULAR APPROACH TO CORE SPANNERS]{Refl-Spanners: A Purely Regular Approach to Non-Regular Core Spanners}
\titlecomment{This is the full version of the article~\cite{SchmidSchweikardt2021}. The first author has been funded by the German Research Foundation (Deutsche Forschungsgemeinschaft, DFG) -- project number 416776735 (gef\"ordert durch die Deutsche Forschungsgemeinschaft (DFG) -- Projektnummer 416776735).
The second author has been partially supported by the ANR project EQUUS ANR-19-CE48-0019; funded by the Deutsche Forschungsgemeinschaft (DFG,  German Research Foundation) – project number 431183758 (gef\"ordert durch die Deutsche Forschungsgemeinschaft (DFG) -- Projektnummer 431183758).\\
\indent \emph{2012 ACM Subject Classification}: Information systems $\rightarrow$ Information retrieval; Theory of computation $\rightarrow$ Automata extensions; Theory of computation $\rightarrow$ Regular languages; Theory of computation $\rightarrow$ Design and analysis of algorithms; Theory of computation $\rightarrow$ Database query languages (principles)}

\author[M.\ L.\ Schmid]{Markus L.\ Schmid\lmcsorcid{0000-0001-5137-1504}}	
\author[N.\ Schweikardt]{Nicole Schweikardt\lmcsorcid{0000-0001-5705-1675}}	

\address{Humboldt-Universit\"at zu Berlin, Unter den Linden 6, D-10099, Berlin, Germany}	
\email{MLSchmid@MLSchmid.de, schweikn@informatik.hu-berlin.de}  

\begin{abstract}
\noindent The regular spanners (characterised by vset-automata) are closed under the algebraic operations of union, join and projection, and have desirable algorithmic properties. The core spanners (introduced by Fagin, Kimelfeld, Reiss, and Vansummeren (PODS 2013, JACM 2015) as a formalisation of the core functionality of the query language AQL used in IBM's SystemT) additionally need string-equality selections and it has been shown by Freydenberger and Holldack (ICDT 2016, Theory of Computing Systems 2018) that this leads to high complexity and even undecidability of the typical problems in static analysis and query evaluation. We propose an alternative approach to core spanners: by incorporating the string-equality selections directly into the regular language that represents the underlying regular spanner (instead of treating it as an algebraic operation on the table extracted by the regular spanner), we obtain a variant of core spanners that, while being incomparable to the full class of core spanners, arguably still covers the intuitive applications of string-equality selections for information extraction and has much better upper complexity bounds for the typical problems in static analysis and query evaluation.
\end{abstract}

\maketitle

\section{Introduction}\label{sec:intro}

The information extraction framework of \emph{document spanners} has been introduced by Fagin, Kimelfeld, Reiss, and Vansummeren~\cite{FaginEtAl2015} as a formalisation of the query language AQL, which is used in IBM's information extraction engine SystemT. A document spanner performs information extraction by mapping a \emph{document}, formalised as a word $w$ over a finite alphabet $\Sigma$, to a relation over so-called \emph{spans} of $w$, which are intervals $\spann{i}{j}$ with $0 \leq i \leq j \leq |w|+1$. \par
The document spanners (or simply \emph{spanners}, for short)
introduced in~\cite{FaginEtAl2015} follow a two-stage approach:
\emph{Primitive} spanners extract relations directly from the input
document, which are then further manipulated by using particular algebraic operations. 
As primitive spanners,~\cite{FaginEtAl2015} introduces
\emph{vset-automata} and \emph{regex-formulas}, which are variants of
nondeterministic finite automata and regular expressions. They can use meta-symbols $\open{\varsx}$ and
$\close{\varsx}$, where $\varsx$ is a \emph{variable} from a set
$\varset$ of variables, in order to bind those variables to start and
end positions of spans, therefore extracting an $|\varset|$-ary
span-relation, or a table with columns labelled by the variables in
$\varset$. 
For example, $\alpha = (\open{\varsx} (\ta \altop \tb)^*
\close{\varsx}) \cdot (\open{\varsy} (\ta^* \altop \tb^*)
\close{\varsy}) \tc^*$ is a regex-formula and it describes a spanner
$\llbracket \alpha \rrbracket$ by considering for a given word $w$ all
possibilities of how $w$ can be generated by $\alpha$ and for each
such generation of $w$, the variables $\varsx$ and $\varsy$ extract
the spans that correspond to those subwords of $w$ that are generated
by the subexpressions $\open{\varsx} (\ta \altop \tb)^*
\close{\varsx}$ and $\open{\varsy} (\ta^* \altop \tb^*)
\close{\varsy}$, respectively. This means that $\llbracket \alpha
\rrbracket(w)$ is a binary relation over $w$'s spans. For example, on
input $w = \ta \tb \ta \ta \tc$, we have $\llbracket \alpha
\rrbracket(w) = \{(\spann{1}{3}, \spann{3}{5}), (\spann{1}{4},
\spann{4}{5}), (\spann{1}{5}, \spann{5}{5})\}$, since $\alpha$ can
generate $\open{\varsx} \ta \tb \close{\varsx} \open{\varsy} \ta \ta
\close{\varsy} \tc$, $\open{\varsx} \ta \tb \ta \close{\varsx}
\open{\varsy}  \ta \close{\varsy} \tc$ and $\open{\varsx} \ta \tb  \ta
\ta \close{\varsx} \open{\varsy} \close{\varsy} \tc$. The rows of the
extracted relation are also called \emph{span-tuples}. Vset-automata follow the same principle, but take the form of
nondeterministic finite automata.
It is known that, with respect to defining spanners, vset-automata are
more expressive than regex-formulas.
The class of spanners expressible by vset-automata are called
\emph{regular spanners}; for the sake of presentation, we denote this class of regular spanners by $\regspanners$ for the remainder of this introduction (there are different ways of characterising \emph{regular spanners} and also different semantics (see~\cite{MaturanaEtAl2018, FaginEtAl2015}); these aspects shall be discussed in more detail below).\par
The algebraic operations used for further manipulating the extracted
span-relations comprise the union $\cup$, natural join $\bowtie$,
projection $\pi$ (with the obvious meaning) and string-equality
selection $\varsigma^{=}_{\mathcal{Z}}$. The latter is a unary
operator that is parameterised by a set $\mathcal{Z} \subseteq
\varset$ of variables, and it selects exactly those rows of the table
for which all spans of columns in $\mathcal{Z}$ refer to (potentially different occurrences of) the same subwords of $w$.  \par
The \emph{core spanners} (capturing the \emph{core} of SystemT's query language AQL) introduced in~\cite{FaginEtAl2015} are defined as $\regspanners^{\{\union, \bowtie, \pi, \varsigma^{=}\}}$, i.\,e., the closure of regular spanners under the operations $\union$, $\bowtie$, $\pi$ and $\varsigma^{=}$ (these operations are interpreted as operations on spanners in the natural way). A central result of~\cite{FaginEtAl2015} is that the operations $\union$, $\bowtie$ and $\pi$ can be directly incorporated into the regular spanners, i.\,e., $\regspanners^{\{\union, \bowtie, \pi\}} = \regspanners$. This is due to the fact that regular spanners are represented by finite automata and therefore the closure properties for regular languages carry over to regular spanners by similar automaton constructions. 
This also holds in the case of so-called \emph{schemaless semantics}~\cite{MaturanaEtAl2018} (i.\,e., variables in span-tuples can be undefined).  However, as soon as we also consider the operator of string-equality selection, the picture changes considerably. \par
In terms of expressive power, it can be easily seen that not all core spanners are regular spanners, simply because for all regular spanners $S$ the language $\{w \in \Sigma^* \mid S(w) \neq \emptyset\}$ is regular, which is not necessarily the case for core spanners. As shown in~\cite{FaginEtAl2015}, we can nevertheless represent any core spanner $S \in \regspanners^{\{\union, \bowtie, \pi, \varsigma^{=}\}}$ in the form $\pi_{\mathcal{Y}} \varsigma_{\mathcal{Z}_1}^{=} \varsigma_{\mathcal{Z}_2}^{=} \ldots \varsigma_{\mathcal{Z}_k}^{=} (S')$ for a regular spanner $S'$ (this is called the \emph{core-simplification lemma} in~\cite{FaginEtAl2015}).\par
Regular spanners have excellent algorithmic properties: enumerating $S(w)$ can be done with linear preprocessing and constant delay, even if the spanner is given as vset-automaton (see~\cite{AmarilliEtAl2021, FlorenzanoEtAl2020}), while spanner containment or equivalence is generally decidable, and can even be decided efficiently if we additionally require the spanner to be represented by a certain deterministic vset-automaton (see~\cite{DoleschalEtAl2019}).\footnote{As is common in the literature, the input vset-automata are assumed to be \emph{sequential}, which means that every accepting run describes a valid span-tuple; it is well known that dropping this natural requirement yields intractability (see, e.\,g.,~\cite{AmarilliEtAl2021}).} However, in terms of complexity, we have to pay a substantial price for adding string-equality selections to regular spanners. It has been shown in~\cite{FreydenbergerHolldack2018} that for core spanners the typical problems of query evaluation and static analysis are $\npclass$- or $\pspaceclass$-hard, or even undecidable (see Table~\ref{comparisonTable}). \par
The results from~\cite{FreydenbergerHolldack2018} identify features that are, from an intuitive point of view, sources of complexity for core spanners. Thus, the question arises whether tractability can be achieved by restricting core spanners accordingly. We shall illustrate this with some examples. \par
Consider a regex formula $\alpha = \open{\varsx_1} \Sigma^* \close{\varsx_1} \open{\varsx_2} \Sigma^* \close{\varsx_2} \ldots \open{\varsx_n} \Sigma^* \close{\varsx_n}$. Then checking, for some $\mathcal{Z}_1, \mathcal{Z}_2, \ldots, \mathcal{Z}_k \subseteq \{\varsx_1, \varsx_2, \ldots, \varsx_n\}$, whether the empty tuple is in $(\pi_{\emptyset} \varsigma_{\mathcal{Z}_1}^{=} \varsigma_{\mathcal{Z}_2}^{=} \ldots \varsigma_{\mathcal{Z}_k}^{=}(\llbracket \alpha \rrbracket))(w)$, is identical to checking whether $w$ can be factorised into $n$ factors such that for each $\mathcal{Z}_i$ all factors that correspond to the variables in $\mathcal{Z}_i$ are the same. This is the \emph{pattern matching problem with variables} (or the \emph{membership problem for pattern languages}), a well-known $\npclass$-complete problem (see, e.\,g.,~\cite{ManeaSchmid2019}). However, checking for a (non-empty) span-tuple $t$ whether it is in $(\varsigma_{\mathcal{Z}_1}^{=} \varsigma_{\mathcal{Z}_2}^{=} \ldots \varsigma_{\mathcal{Z}_k}^{=}(\llbracket \alpha \rrbracket))(w)$ can be easily done in polynomial time, since the task of checking the existence of a suitable factorisation boils down to the task of evaluating a factorisation that is implicitly given by $t$. Hence, instead of blaming the string-equality selections for intractability, we could as well blame the projection operator. Can we achieve tractability by restricting projections instead of string-equality selections?\par
Another feature that yields intractability is that we can use string-equality selections in order to concisely express the \emph{intersection non-emptiness of regular languages} (a well-known $\pspaceclass$-complete problem). For example, let $r_1, r_2, \ldots, r_n$ be some regular expressions, and let $\alpha = \open{\varsx_1} r_1 \close{\varsx_1} \open{\varsx_2} r_2 \close{\varsx_2} \ldots \open{\varsx_n} r_n \close{\varsx_n}$. Then there is a word $w$ with $(\varsigma_{\{\varsx_1, \varsx_2, \ldots, \varsx_n\}}^{=} (\llbracket \alpha \rrbracket))(w) \neq \emptyset$ if and only if $\bigcap^n_{i = 1} \lang(r_i) \neq \emptyset$. So string-equality selections do not only check whether the same subword has several occurrences, but also, as a ``side-effect'', check membership of this repeated subword in the intersection of several regular languages. Can we achieve tractability by somehow limiting the power of string-equality selections to the former task?\par

\begin{table}
\begin{tabular}{llll}
\textbf{Problem} & \textbf{Regular sp.} & \textbf{Refl-sp.} & \textbf{Core sp.} \cite{FreydenbergerHolldack2018}\\\toprule
$\TestProb$& $\bigO(|w|\cdot|M| \log |\X|)$ & $\bigO(|w|\cdot|M| \log |\X|)$ \hfill [T.~\ref{checkingTheorem}] & $\npclass$-c \\
$\NonemptProb$& $\bigO(|w|\cdot|M|)$ & $\npclass$-c \hfill [T.~\ref{thm:nonempRefl}] &  $\npclass$-h \\\midrule
$\SatProb$& $\bigO(|M|)$ & $\bigO(|M|)$ \hfill [T.~\ref{thm:SatAndHier}] & $\pspaceclass$-c\\
$\ContProb$& $\pspaceclass$-c~\cite{MaturanaEtAl2018} & $\pspaceclass$-c (for str.~ref.) \hfill [T.~\ref{mainContainmentDecidabilityTheorem}] & undec.\\
$\EquProb$& $\pspaceclass$-c~\cite{MaturanaEtAl2018} & $\pspaceclass$-c (for str.~ref.) \hfill [T.~\ref{mainContainmentDecidabilityTheorem}] & undec.\\
$\HierProb$& $\bigO(|M|\cdot|\varset|^3)$ & $\bigO(|M|\cdot|\varset|^3)$ \hfill [T.~\ref{thm:SatAndHier}] & $\pspaceclass$-c\\\bottomrule
\end{tabular}
\caption{Comparison of decision problems of regular spanners, core
  spanners and refl-spanners. A formal definition of the problems can
  be found in Section~\ref{sec:decisionProblems}. In the case of
  regular spanners and refl-spanners, the input spanner is represented
  by an $\NFA$ $M$ (accepting a subword-marked language or a ref-language, respectively). The abbreviation ``str.~ref.'' means
  \emph{strongly reference extracting}, a restriction for
  refl-spanners to be formally defined in
  Section~\ref{sec:decisionProblems}. The bounds for $\TestProb$ are
  under the (rather weak) assumption $|\varset| = \bigO(|w|)$; without
  this assumption, $|w|$ has to be replaced with $|w| + |\varset|$.
}
\label{comparisonTable}
\end{table}

A third observation is that by using string-equality selections on \emph{overlapping} spans, we can use core spanners to express rather complex word-combinatorial properties. For example, we can even express word equations as core spanners (see~\cite[Proposition~3.7, Example~3.8, Theorem~3.13]{FreydenbergerHolldack2018} for details). Can we achieve tractability by requiring all variables that are subject to string-equality selections to extract only pairwise non-overlapping spans?

\subsection{Our Contribution} 

We introduce \emph{refl-spanners} (based on \emph{regular
  ref-languages}), a new formalism for spanners that properly extends
regular spanners, describes a large class of core spanners, and has
better upper complexity bounds than core spanners. Moreover, the
formalism is purely based on regular language description
mechanisms. The main idea is a paradigm shift in the two-stage
approach of core spanners: instead of extracting a span-relation with
a regular spanner and then applying string-equality selections on it,
we handle string-equality selections directly with the finite
automaton (or regular expression) that describes the regular
spanner. However, checking the equality of unbounded factors in
strings is a task that, in most formalisms, can be considered highly
``non-regular'' (the well-known \emph{copy-language} $\{ww \mid w \in
\Sigma^*\}$ is a textbook example for demonstrating the limits of
regular languages in this regard). We deal with this obstacle by
representing the factors that are subject to string-equality
selections as \emph{variables} in the regular language. For example,
while $L = \{\ta^n \tb \ta^n \mid n \geq 0\}$ is non-regular, the
language $L' = \{\open{\varsx} \ta^n \close{\varsx} \tb \varsx \mid n
\geq 0\}$ can be interpreted as a regular description of $L$ by means
of meta-symbols $\open{\varsx}$ and $\close{\varsx}$ to \emph{capture}
a factor, and a meta-symbol $\varsx$ to \emph{copy} or
\emph{reference} the captured factor (conceptionally, this is similar
to so-called backreferences in practical regular expressions;
see~\cite{FaginEtAl2015, FreydenbergerHolldack2018} for comparisons of
regular expressions with backreferences and core spanners). In
particular, all words of $L$ can be easily obtained from the words of
$L'$ by simply replacing the occurrence of $\varsx$ with the factor it
refers to. As long as core spanners use string-equality selections in
a not too complicated way, this simple formalism seems also to be
suited for describing particular core spanners, e.\,g., the core spanner $\pi_{\{\varsx, \varsy\}}\varsigma^{=}_{\{\varsx, \varsx'\}} \varsigma^{=}_{\{\varsy, \varsy'\}} (\llbracket \alpha \rrbracket)$ with $\alpha = \open{\varsx} \ta^* \tb \open{\varsy} \tc \close{\varsx} \tb^* \open{\varsx'} \ta^* \tb \tc \close{\varsx'} \close{\varsy} \open{\varsy'} \tc \tb^* \ta^* \tb \tc \close{\varsy'}
$ could be represented as $\llbracket \open{\varsx} \ta^* \tb \open{\varsy} \tc \close{\varsx} \tb^* \varsx \close{\varsy} \varsy \rrbracket$. \par
The class of refl-spanners can now informally be described as the
class of all spanners that can be represented by a regular language
over the alphabet $\Sigma \cup \varset \cup \{\open{\varsx},
\close{\varsx} \mid \varsx \in \varset\}$ that has the additional
property that the meta-symbols $\varset \cup \{\open{\varsx},
\close{\varsx} \mid \varsx \in \varset\}$ are ``well-behaved'' in the
sense that each word describes a valid span-tuple (one of this paper's main conceptional contributions is to formalise this idea in a sound way). \par
The refl-spanner formalism automatically avoids exactly the features of core spanners that we claimed above to be sources of complexity. More precisely, refl-spanners cannot project out variables, which means that they cannot describe the task of checking the existence of some complicated factorisation. Furthermore, it can be easily seen that in the refl-spanner formalism, we cannot describe \emph{intersection non-emptiness of regular languages} in a concise way, as is possible by core spanners. Finally, we can only have overlaps with respect to the spans captured by $\open{\varsx} \ldots \close{\varsx}$, but all references $\varsx$ represent pairwise non-overlapping factors, which immediately shows that we cannot express word equations as core spanners can. This indicates that refl-spanners are restricted in terms of expressive power, but it also gives hope that for refl-spanners we can achieve better upper complexity bounds for the typical decision problems compared to core spanners, and, in fact, this is the case (see Table~\ref{comparisonTable}).\par

It is obvious that not all core spanners can be represented as refl-spanners, but we can nevertheless show that a surprisingly large class of core spanners can be handled by the refl-spanner formalism. 
Recall that the core simplification lemma from~\cite{FaginEtAl2015} states that, in every core spanner $S \in \regspanners^{\{\cup, \pi, \bowtie, \varsigma^{=}\}}$, we can ``push'' all applications of $\cup$ and $\bowtie$ into the automaton that represents the regular spanner, leaving us with an expression $\pi_{\mathcal{Y}}\varsigma_{\mathcal{Z}_1}^{=} \varsigma_{\mathcal{Z}_2}^{=} \ldots \varsigma_{\mathcal{Z}_k}^{=}(M)$ for an automaton $M$ that represents a regular spanner. We can show that if the string-equality selections $\varsigma_{\mathcal{Z}_1}^{=} \varsigma_{\mathcal{Z}_2}^{=} \ldots \varsigma_{\mathcal{Z}_k}^{=}$ apply to a set of variables that never capture overlapping spans, then we can even ``push'' all string-equality selections into $M$, turning it into a representation of a refl-spanner that ``almost'' describes $S$: in order to get $S$, we only have to merge certain columns into a single one by creating the fusion of the corresponding spans. 

\subsection{Related Work} 

Spanners have recently received a lot of attention~\cite{FaginEtAl2015, FreydenbergerEtAl2018, PeterfreundEtAl2019, AmarilliEtAl2021, MaturanaEtAl2018, FlorenzanoEtAl2020, PeterfreundEtAl2019_2, Freydenberger2019, FreydenbergerHolldack2018,FreydenbergerThompson2020, Peterfreund2019PhD, SchmidSchweikardt2022, SchmidSchweikardt2021_PODS, DBLP:journals/sigmod/AmarilliBMN20, AmarilliEtAl2022, DoleschalEtAl2020, DoleschalEtAl2021, FreydenbergerThompson2022, MunozRiveros2023}. However, as it seems, most of the recent progress on document spanners concerns regular spanners. For example, it has recently been shown that results of regular spanners can be enumerated with linear preprocessing and constant delay~\cite{AmarilliEtAl2021, FlorenzanoEtAl2020}, the paper~\cite{MaturanaEtAl2018} is concerned with different semantics of regular spanners and their expressive power, and~\cite{PeterfreundEtAl2019} investigates the evaluation of relational algebra expressions over regular spanners.\par
Papers that are concerned with string-equality selection are the following. In~\cite{FreydenbergerHolldack2018} many negative results for core spanner evaluation are shown. By presenting a logic that exactly covers core spanners, the work~\cite{Freydenberger2019} answers questions on the expressive power of core spanners. That datalog over regular spanners covers the whole class of core spanners is shown in~\cite{PeterfreundEtAl2019_2}. In~\cite{FreydenbergerEtAl2018}, the authors consider conjunctive queries on top of regular spanners and, among mostly negative results, they also show the positive result that such queries with equality-selections can be evaluated efficiently if the number of string equalities is bounded by a constant. The paper~\cite{FreydenbergerThompson2020} investigates the dynamic descriptive complexity of regular spanners and core spanners. While all these papers contribute deep insights with respect to document spanners, positive algorithmic results for the original core spanners from~\cite{FaginEtAl2015} seem scarce and the huge gap in terms of tractability between regular and core spanners seems insufficiently bridged by tractable fragments of core spanners.\par
A rather recent paper that also deals with non-regular document
spanners is~\cite{Peterfreund2021}. However, the non-regular aspect
of~\cite{Peterfreund2021} does not consist in string-equality
selections, but rather that spanners are represented by context-free
language descriptors (in particular grammars) instead of regular
ones. The spanner class from~\cite{Peterfreund2021} is incomparable with core spanners, and the main focus of~\cite{Peterfreund2021} is on enumeration.

\subsection{Differences to the Preliminary Conference Version}\label{subsection:DifferencesToPreviousVersion}

This paper is the full and substantially revised version of the extended abstract~\cite{SchmidSchweikardt2021}, presented at the 24th International Conference on Database Theory (ICDT~2021). We will briefly describe the main changes.\par
The upper and lower complexity bounds for evaluation and static analysis problems of refl-spanners presented in Section~\ref{sec:decisionProblems} (see also Table~\ref{comparisonTable}) have been improved as follows. The upper bound for $\TestProb$ for refl-spanners reported in~\cite{SchmidSchweikardt2021} has been improved substantially, and, in fact, coincides with the (to our knowledge) best known upper bound for $\TestProb$ for regular spanners (note that~\cite{SchmidSchweikardt2021} did not mention any upper bound for $\TestProb$ for regular spanners). The upper bounds for $\ContProb$ and $\EquProb$ for strongly reference extracting refl-spanners have been improved from membership in $\expspaceclass$ to $\pspaceclass$-completeness. \par
With respect to the results on the expressive power of refl-spanners, this full version serves also as an erratum to~\cite{SchmidSchweikardt2021}. The main result with respect to the expressive power of refl-spanners holds as stated in~\cite[Theorem 6.4]{SchmidSchweikardt2021}, and is proven here in full detail (see Theorem~\ref{coreSpannersToReflSpannersTheorem}). Unfortunately, the secondary result~\cite[Theorem 6.5]{SchmidSchweikardt2021} does not hold as stated in~\cite{SchmidSchweikardt2021}. This error is corrected here.

\subsection{Organisation}
The rest of the paper is structured as follows. 
Section~\ref{sec:prelim} fixes the basic notation concerning spanners and lifts the core-simplification lemma of \cite{FaginEtAl2015} to the schemaless case.
In Section~\ref{section:declarative}, we develop a simple declarative approach to spanners by establishing a natural one-to-one correspondence between spanners and so-called subword-marked languages.
In Section~\ref{section:refl} we extend the concept of subword-marked
languages in order to describe spanners with string-equality
selections which we call refl-spanners.
Section~\ref{sec:decisionProblems} is devoted to the complexity of evaluation and static analysis problems for refl-spanners. 
Section~\ref{sec:ExpressivePower} studies the expressive power of refl-spanners. We conclude the paper in Section~\ref{sec:conclusions}.

\section{Preliminaries}\label{sec:prelim}

Let $\mathbb{N} = \{1, 2, 3, \ldots\}$ and $[n] = \{1, 2, \ldots, n\}$ for $n \in \mathbb{N}$.
For a (partial) mapping $f \colon X \to Y$, we write $f(x) = \undefin$ for some $x \in X$ to denote that $f(x)$ is not defined; we also set $\Dom{f} = \{x \mid f(x) \neq \undefin\}$.
By $\mathcal{P}(A)$ we denote the power set of a set $A$, and
$A^+$ denotes the set of non-empty words over $A$, and $A^* = A^+ \cup \{\eword\}$, where $\eword$ is the empty word. For a word $w \in A^*$, $|w|$ denotes its length (in particular, $|\eword| = 0$), and for every $b \in A$, $|w|_{b}$ denotes the number of occurrences of $b$ in $w$. 
Let $A$ and $B$ be alphabets with $B \subseteq A$, and let $w \in
A^*$. Then $\erasemorphism{B} \colon A \to A \cup \{\eword\}$ is a mapping
with $\erasemorphism{B}(b) = \eword$ if $b \in B$ and
$\erasemorphism{B}(b) = b$ if $b \in A \setminus B$; we also write
$\erasemorphism{B}$ to denote the natural extension of
$\erasemorphism{B}$ to a morphism $A^* \to A^*$. Intuitively, for
$w\in A^*$, the word $\erasemorphism{B}(w)$ is obtained from $w$ by
erasing all occurrences of letters in $B$. 
Technically, $\erasemorphism{B}$ depends on the alphabet $A$, but whenever we use $\erasemorphism{B}(w)$ we always assume that $\erasemorphism{B} \colon A \to A \cup \{\eword\}$ for some alphabet $A$ with $w \in A^*$.

\subsection{Regular Language Descriptors}

For an alphabet $\Sigma$, the set $\RE_{\Sigma}$ of \emph{regular
  expressions} (\emph{over $\Sigma$}) is defined as usual:
$\emptyset$ is in $\RE_{\Sigma}$ with $\lang(\emptyset) = \emptyset$.
Every $a \in \Sigma \cup \{\eword\}$ is in $\RE_{\Sigma}$ with
$\lang(a) = \{a\}$. For $r, s \in \RE_{\Sigma}$, $(r \cdot s), (r \altop s), (r)^+ \in \RE_{\Sigma}$ with $\lang((r \cdot s)) = \lang(r) \cdot \lang(s)$, $\lang((r \altop s)) = \lang(r) \cup \lang(s)$, $\lang((r)^+) = (\lang(r))^+$. For $r \in \RE_{\Sigma}$, we use $r^*$ as a shorthand form for $((r)^+ \altop \eword)$, and we usually omit the operator `$\cdot$', i.\,e., we use juxtaposition. For the sake of readability, we often omit parentheses (and use the usual
precedences of operands), if this does not cause ambiguities.\par
A \emph{nondeterministic finite automaton} ($\NFA$ for short) is a
tuple $M = (Q, \Sigma, \delta, q_0, F)$ with a finite set $Q$ of
states, a finite alphabet $\Sigma$, a start state $q_0$, a set
$F\subseteq Q$ of accepting states, and a transition function $\delta \colon Q \times (\Sigma \cup \{\eword\}) \to \mathcal{P}(Q)$. We also interpret $\NFA$ as directed, edge-labelled graphs in the obvious way. A word $w \in \Sigma^*$ is accepted by $M$ if there is a path from $q_0$ to some $q_f \in F$ that is labelled by $w$; $\lang(M)$ is the accepted language, i.\,e., the set of all accepted words. The size $|M|$ of an $\NFA$ is measured as $|Q| + |\delta|$. However, we will mostly consider $\NFA$ with constant out-degree, which means that $|M| = \bigO(|Q|)$. For a language descriptor $D$ (e.\,g., an $\NFA$ or a regular expression), we denote by $\lang(D)$ the language defined by $D$. The class of languages described by $\NFA$ or regular expressions is the class of regular languages, denoted by $\regLanguages$.

\subsection{Spans and Spanners} \label{sec:spanSpanners}

For a word $w \in \Sigma^*$ and for every $i, j \in [|w| {+} 1]$ with
$i\leq j$, $\spann{i}{j}$ is a \emph{span of $w$} and its
\emph{value}, denoted by $w\spann{i}{j}$, is the substring of $w$ from
symbol $i$ to symbol $j{-}1$. In particular, $w\spann{i}{i} = \eword$
(this is called an \emph{empty span}) and $w\spann{1}{|w|{+}1} = w$. By
$\spans(w)$, we denote the set of spans of $w$, and by $\spans$ we
denote the set 
$\setc{\spann{i}{j}}{i,j\in\NN,\ i\leq j}$ (elements from $\spans$ shall simply be called \emph{spans}). A span
$\spann{i}{j}$ can also be interpreted as the set $\{i, i{+}1, \ldots,
j{-}1\}$ and therefore we can use set operations for spans (note,
however, that the union of two spans is not necessarily a span
anymore). Two spans $s = \spann{i}{j}$ and $s' = \spann{i'}{j'}$ are
\emph{equal} if $s = s'$ (i.e., $i=i'$ and $j=j'$), they are \emph{disjoint} if $j \leq i'$ or $j' \leq i$ and they are \emph{non-overlapping} if they are equal or disjoint. Note that $s$ and $s'$ being disjoint is sufficient but not necessary for $s \cap s' = \emptyset$, e.\,g., $\spann{3}{6}$ and $\spann{5}{5}$ are not disjoint, but $\spann{3}{6} \cap \spann{5}{5} = \emptyset$. \par
For a finite set of variables $\varset$, an \emph{$(\varset, w)$-tuple} (also simply called \emph{span-tuple}) is a partial function $\varset \to \spans(w)$,
and a  \emph{$(\varset, w)$-relation}
is a set of $(\varset, w)$-tuples. For simplicity, we usually denote $(\varset, w)$-tuples in tuple-notation, for which we assume an order on $\varset$ and use the symbol `$\undefin$' for undefined variables, e.\,g., $(\spann{1}{5}, \undefin, \spann{5}{7})$ describes a $(\{\varsx_1, \varsx_2, \varsx_3\}, w)$-tuple that maps $\varsx_1$ to $\spann{1}{5}$, $\varsx_3$ to $\spann{5}{7}$, and is undefined for $\varsx_2$. Since the dependency on the word $w$ is often negligible, we also use the term \emph{$\varset$-tuple} or \emph{$\varset$-relation} to denote an $(\varset, w)$-tuple or $(\varset, w)$-relation, respectively.\par
An $(\varset, w)$-tuple $t$ is \emph{functional} if it is a total function, $t$ is \emph{hierarchical} if, for every $\varsx, \varsy \in \Dom{t}$, $t(\varsx) \subseteq t(\varsy)$ or $t(\varsy) \subseteq t(\varsx)$ or $t(\varsx) \cap t(\varsy) = \emptyset$, and $t$ is \emph{non-overlapping} if, for every $\varsx, \varsy \in \Dom{t}$, $t(\varsx)$ and $t(\varsy)$ are non-overlapping. An $(\varset, w)$-relation is \emph{functional}, \emph{hierarchical} or \emph{non-overlapping}, if all its elements are functional, hierarchical or non-overlapping, respectively. \par
A \emph{spanner} (\emph{over terminal alphabet $\Sigma$ and variables $\varset$}) is a function that maps every $w \in \Sigma^*$ to an $(\varset, w)$-relation (note that the empty relation $\emptyset$ is also a valid image of a spanner). 

\begin{exa}\label{firstSpannerExample}
Let $\Sigma = \{\ta, \tb\}$ and let $\varset = \{\varsx, \varsy, \varsz\}$. Then the function $S$ that maps words $w \in \Sigma^*$ to the $(\varset, w)$-relation $\{(\spann{1}{i}, \spann{i}{i + 1}, \spann{i+1}{|w| + 1}) \mid 1 \leq i < |w|, w\spann{i}{i+1} = \tb\}$ is a spanner. For example, $S(\ta \tb \ta \tb \tb \ta \tb) = \{t_1, t_2, t_3, t_4\}$ with $t_1 = (\spann{1}{2}, \spann{2}{3}, \spann{3}{8})$, $t_2 = (\spann{1}{4}, \spann{4}{5}, \spann{5}{8})$, $t_3 = (\spann{1}{5}, \spann{5}{6}, \spann{6}{8})$ and $t_4 = (\spann{1}{7}, \spann{7}{8}, \spann{8}{8})$.
\end{exa}

Let $S_1$ and $S_2$ be spanners over $\Sigma$ and $\varset$. Then $S_1$ and $S_2$ are said to be \emph{equal} if, for every $w \in \Sigma^*$, $S_1(w) = S_2(w)$ (this coincides with the usual equality of functions and shall also be denoted by $S_1 = S_2$).  We say that $S_2$ \emph{contains} $S_1$, written as $S_1 \subseteq S_2$, if, for every $w \in \Sigma^*$, we have $S_1(w) \subseteq S_2(w)$. A spanner $S$ over $\Sigma$ and $\varset$ is \emph{functional}, \emph{hierarchical} or \emph{non-overlapping} if, for every $w$, $S(w)$ is functional, hierarchical or non-overlapping, respectively. Note that, for span-tuples, span-relations and spanners, the non-overlapping property implies hierarchicality. \par
Next, we define operations on spanners.
The \emph{union} $S_1\cup S_2$ of two spanners $S_1$ and $S_2$ over $\Sigma$ and $\varset$ is defined via  $(S_1 \cup S_2)(w) = S_1(w) \cup S_2(w)$ for all
 $w \in \Sigma^*$.\par
To define the \emph{natural join} $S_1\join S_2$ we need further
notation:
Two $(\varset, w)$-tuples $t_1$ and $t_2$ are \emph{compatible}
(denoted by $t_1 \sim t_2$) if $t_1(\varsx) = t_2(\varsx)$ for every
$\varsx \in \Dom{t_1} \cap \Dom{t_2}$. For compatible $(\varset,
w)$-tuples $t_1$ and $t_2$, the $(\varset, w)$-tuple $t_1 \bowtie t_2$
is defined by $(t_1 \bowtie t_2)(\varsx) = t_i(\varsx)$ if $\varsx \in
\Dom{t_i}$ for $i\in\set{1,2}$.
For two $(\X,w)$-relations $R_1, R_2$ we let 
$R_1 \bowtie R_2 = \{t_1 \bowtie t_2 \mid t_1 \in R_1, t_2 \in R_2,
t_1 \sim t_2\}$.  
Finally, the \emph{natural join} $S_1\join S_2$ is defined via $(S_1
\join S_2)(w) = S_1(w) \join S_2(w)$ for all $w \in \Sigma^*$. 
\par
The \emph{projection} $\pi_{\mathcal{Y}}(S_1)$ for a set $\Y\subseteq\X$ is defined by letting 
$(\pi_{\mathcal{Y}} (S_1))(w)=\setc{t_{|\Y}}{t\in S_1(w)}$, where $t_{|\Y}$ is the restriction of $t$ to domain $\Dom{t}\cap \Y$.\par
The \emph{string-equality selection} $\varsigma_{\mathcal{Y}}^{=}(S_1)$ for a set $\Y\subseteq\X$ is defined by letting $(\varsigma_{\mathcal{Y}}^{=}(S_1))(w)$ contain all $t \in S_1(w)$ such that, for every $\varsx, \varsy \in \mathcal{Y} \cap \Dom{t}$ with $t(\varsx) = \spann{i}{j}$ and $t(\varsy) = \spann{i'}{j'}$, we have that $w\spann{i}{j} = w\spann{i'}{j'}$. 
Note that here we require $w\spann{i}{j} = w\spann{i'}{j'}$ only for
$\varsx, \varsy \in \mathcal{Y} \cap \Dom{t}$ instead of all $\varsx, \varsy \in \mathcal{Y}$. \par
For convenience, we omit the parentheses if we apply sequences of unary operations of spanners, e.\,g., we write $\pi_{\mathcal{Z}}\varsigma^{=}_{\mathcal{Y}_1}\varsigma^{=}_{\mathcal{Z}_1}(S)$ instead of $\pi_{\mathcal{Z}}(\varsigma^{=}_{\mathcal{Y}_1}(\varsigma^{=}_{\mathcal{Z}_1}(S)))$. For any $E = \{\mathcal{Y}_1, \mathcal{Y}_2, \ldots, \mathcal{Y}_\ell\} \subseteq \mathcal{P}(\varset)$, we also write $\varsigma^{=}_{E}(S)$ instead of $\varsigma^{=}_{\mathcal{Y}_1} \varsigma^{=}_{\mathcal{Y}_2} \ldots \varsigma^{=}_{\mathcal{Y}_\ell}(S)$, and in this case we will call $\varsigma^{=}_{E}$ a \emph{generalised} string-equality selection, or also just string-equality selection if it is clear from the context that $E \subseteq \mathcal{P}(\varset)$. 

For a class $\mathfrak{S}$ of spanners and a set $P$ of operations on spanners, $\mathfrak{S}^{P}$ (or $\mathfrak{S}^{p}$ if $P = \{p\}$) denotes the closure of $\mathfrak{S}$ under the operations from $P$.\par
Whenever formulating complexity bounds, we consider the terminal alphabet $\Sigma$ to be constant, but we always explicitly state any dependency on $|\varset|$. 

\subsection{Regular Spanners and Core Spanners}

In~\cite{FaginEtAl2015}, the class of \emph{regular spanners}, denoted
by $\regspanners$, is defined  as the class of spanners represented by
\emph{vset-automata}, and the class of \emph{core spanners} is defined
as $\corespanners = \llbracket \regexform \rrbracket^{\{\cup, \pi,
  \bowtie, \varsigma^{=}\}}$, where $\regexform$ is the class of
so-called \emph{regex-formulas} (we refer to~\cite{FaginEtAl2015} for
a formal definition of vset-automata and regex-formulas).
It is known that, with respect to defining spanners, vset-automata are
more expressive than regex-formulas;
but for
defining the class of core spanners, it does not matter whether one uses
the class of regex-formulas or the class of vset-automata, i.\,e.,
$\llbracket \regexform \rrbracket^{\{\cup, \pi, \bowtie,
  \varsigma^{=}\}} = \llbracket \vset \rrbracket^{\{\cup, \pi,
  \bowtie, \varsigma^{=}\}}$). A crucial result
from~\cite{FaginEtAl2015} is the \emph{core-simplification lemma}:
every $S \in \corespanners$ can be represented as
$\pi_{\mathcal{Y}}\varsigma^{=}_E(S')$, where $S'$ is a regular
spanner, $\Y \subseteq\X$ and $E \subseteq \mathcal{P}(\varset)$. 
The setting in~\cite{FaginEtAl2015} uses a \emph{function semantics} for spanners, i.\,e., $(\varset, w)$-tuples are always functional. In our definitions above, we allow variables in span-tuples and spanners to be undefined, i.\,e., we use partial mappings as introduced in~\cite{MaturanaEtAl2018}, and in the terminology of~\cite{PeterfreundEtAl2019}, we consider the \emph{schemaless} semantics.

In~\cite{MaturanaEtAl2018} it is shown that the classical framework
for \emph{regular spanners} with function semantics introduced
in~\cite{FaginEtAl2015} can be extended to the schemaless case,
i.\,e., vset-automata and regex-formulas are extended to the case of
schemaless semantics, and it is shown that the basic results still
hold (e.\,g., vset-automata are equally powerful as regex-formulas when considering the closure under union, projection and natural join,
i.e.,
$
\llbracket \vset \rrbracket^{\{\cup, \pi, \bowtie\}}
=
\llbracket \regexform \rrbracket^{\{\cup, \pi, \bowtie\}}
$).
However, the string-equality selection operator --- which turns regular spanners into the more powerful core spanners --- is not treated in~\cite{MaturanaEtAl2018}. Our definition of the string-equality selection operator given above extends the definition from~\cite{FaginEtAl2015} from the functional to the schemaless case by interpreting $\varsigma^{=}_{\mathcal{Y}}$ to apply only to those variables from $\mathcal{Y}$ that are in the domain of the span-tuple. This way of treating undefined variables is natural and also corresponds to how the join operator is extended to the schemaless case in~\cite{MaturanaEtAl2018}.
Due to~\cite{MaturanaEtAl2018}, we can also in the schemaless case define $\regspanners$ as the class of spanners defined by vset-automata (with schemaless semantics), and we can also define the class of core spanners with schemaless semantics as $\corespanners = \llbracket \regexform \rrbracket^{\{\cup, \pi, \bowtie, \varsigma^{=}\}}$. However, to the knowledge of the authors, the core-simplification lemma from~\cite{FaginEtAl2015} has so far not been extended to the schemaless semantics. Since we wish to apply the core-simplification lemma in the context of our results for schemaless semantics (see Theorem~\ref{coreSpannersToReflSpannersTheorem}), and since this seems to be a worthwhile task in its own right, we show that the core-simplification lemma from~\cite{FaginEtAl2015} holds verbatim for the schemaless case. For those parts of the proof's argument that are not concerned with string-equality selections, we heavily rely on the results from~\cite{MaturanaEtAl2018}. 

\begin{lem}[Core Simplification Lemma]\label{lemma:CoreSimplificationLemma}
For every $S \in \corespanners$ over $\X$ there are $S' \in \regspanners$, $\Y \subseteq \X$ and $E \subseteq \Pot{\X}$ such that $S = \proj_{\Y}\eqconstr_E(S')$.
\end{lem}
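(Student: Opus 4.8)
The plan is to proceed by structural induction on the algebraic expression witnessing $S \in \corespanners = \sem{\regexform}^{\{\union, \proj, \join, \eqconstr\}}$, maintaining throughout the \emph{invariant} that every subexpression defines a spanner expressible in the normal form $\proj_{\Y} \eqconstr_E(R)$, where $R \in \regspanners$ is over some variable set $\mathcal{V}$, the set $\Y \subseteq \mathcal{V}$ collects the output variables, and $E \subseteq \Pot{\mathcal{V}}$ is a collection of \emph{pairwise disjoint} sets each of which is moreover disjoint from $\Y$ (so that all variables touched by string-equality selections are auxiliary variables that are eventually projected away). As is standard, I treat the variable universe as unbounded and take $\X$ large enough to contain both the output variables of $S$ and all auxiliary variables introduced below, so that in the end $\Y \subseteq \X$ and $E \subseteq \Pot{\X}$ as required. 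For the base case, a single regex-formula $\gamma$ defines a regular spanner $\sem{\gamma}$ in the schemaless framework of~\cite{MaturanaEtAl2018}, hence $\sem{\gamma} = \proj_{\X}\eqconstr_{\emptyset}(\sem{\gamma})$ already has the desired form. Throughout, I rely on the schemaless closure $\regspanners^{\{\union, \join, \proj\}} = \regspanners$ from~\cite{MaturanaEtAl2018}.

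The central technical device, and the point where the schemaless semantics forces more care than the functional case of~\cite{FaginEtAl2015}, is a \emph{copy} operation: given $R \in \regspanners$ over $\mathcal{V}$ and $\varsx \in \mathcal{V}$, one builds a regular spanner over $\mathcal{V} \cup \{\varsx'\}$ (with $\varsx'$ fresh) that behaves like $R$ but additionally binds $\varsx'$ to the same span as $\varsx$ exactly when $\varsx$ is defined, and leaves $\varsx'$ undefined otherwise; on vset-automata this is realised by letting every transition opening or closing $\varsx$ simultaneously open or close $\varsx'$. Using this, I first establish a \emph{disjointification} step: for any regular $R$, any $E \subseteq \Pot{\mathcal{V}}$ (neither pairwise disjoint nor disjoint from $\Y$ a priori) and any $\Y$, one replaces each $\Z \in E$ by the set $\Z'$ of fresh copies of the variables in $\Z$, yielding a regular $R'$ and a pairwise-disjoint $E' = \{\Z' \mid \Z \in E\}$ of sets disjoint from $\Y$ with $\proj_{\Y}\eqconstr_E(R) = \proj_{\Y}\eqconstr_{E'}(R')$. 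Correctness uses that a copy $\varsx'$ is defined precisely when $\varsx$ is and carries the same value, so $\eqconstr_{\Z'}$ on $R'$ imposes exactly the constraint of $\eqconstr_{\Z}$ on $R$; and since distinct sets receive distinct copies, no spurious merging occurs across a variable shared by two sets whose shared variable happens to be undefined --- exactly the phenomenon that invalidates the functional-case identity $\eqconstr_{\Z_1}\eqconstr_{\Z_2} = \eqconstr_{\Z_1 \cup \Z_2}$ in the schemaless setting.

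With these tools the inductive cases reduce to commutation arguments. Projection is immediate: $\proj_{\Y'}(\proj_{\Y}\eqconstr_E(R)) = \proj_{\Y' \cap \Y}\eqconstr_E(R)$. For a selection $\eqconstr_{\Z}$ applied to $\proj_{\Y}\eqconstr_E(R)$, I use that in the schemaless semantics $\eqconstr_{\Z}$ constrains only variables in $\Z \cap \Dom{t}$, so on the projected spanner $\eqconstr_{\Z} = \eqconstr_{\Z \cap \Y}$; since $\Z \cap \Y \subseteq \Y$, this selection commutes with $\proj_{\Y}$, producing $\proj_{\Y}\eqconstr_{E \cup \{\Z \cap \Y\}}(R)$, which is returned to normal form by one disjointification. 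For $\union$ and $\join$ of two normal forms $\proj_{\Y_i}\eqconstr_{E_i}(R_i)$, I first relabel all auxiliary variables so that the $E_1$- and $E_2$-variables are mutually disjoint and disjoint from every output variable in $\Y_1 \cup \Y_2$, and extend each $R_i$ over the common variable set (leaving foreign variables undefined). In $R_1 \union R_2$ every selection set of the other branch then consists solely of undefined variables and is vacuous, so $\eqconstr_{E_1 \cup E_2}$ applies $E_i$ precisely to the tuples coming from $R_i$, and $\proj_{\Y_1 \cup \Y_2}$ recovers $S_1 \union S_2$. For the join, the disjointness of the auxiliary sets $\mathsf{aux}_1, \mathsf{aux}_2$ forces the common variables of $\eqconstr_{E_1}(R_1)$ and $\eqconstr_{E_2}(R_2)$ to be exactly $\Y_1 \cap \Y_2$, which lets both $\eqconstr$ and $\proj$ commute out of the join, giving $\proj_{\Y_1}\eqconstr_{E_1}(R_1) \join \proj_{\Y_2}\eqconstr_{E_2}(R_2) = \proj_{\Y_1 \cup \Y_2}\eqconstr_{E_1 \cup E_2}(R_1 \join R_2)$ with $R_1 \join R_2 \in \regspanners$.

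The main obstacle is not the induction but verifying these commutation laws in the presence of undefined variables, where the naive functional-case identities fail. The construction is engineered so that (i) selection sets only ever touch auxiliary copy variables, keeping intact the reading of each set as an independent equality constraint, and (ii) auxiliary variables are private to their branch, so cross-branch selections are vacuous and projections never discard a join-relevant variable. The one genuinely delicate computation is checking, for each identity, that the domains and values of the joined/projected tuples coincide on both sides: using $\Dom{t_i} \subseteq \Y_i \cup \mathsf{aux}_i$ together with $\mathsf{aux}_1 \cap \mathsf{aux}_2 = \emptyset$ and $\mathsf{aux}_i \cap (\Y_1 \cup \Y_2) = \emptyset$, one verifies that $\Dom{t_1} \cap \Dom{t_2} \subseteq \Y_1 \cap \Y_2$, which is exactly what makes projection commute past the join and selection become vacuous on the foreign branch. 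This is the step for which a verbatim citation of~\cite{FaginEtAl2015} does not suffice and which the disjointification device is designed to handle.
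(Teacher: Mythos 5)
Your proof is correct and takes essentially the same route as the paper's: structural induction on the expression with the normal-form invariant $\proj_{\Y}\eqconstr_E(S')$, the base case and closure of $\regspanners$ under $\union$, $\join$, $\proj$ imported from~\cite{MaturanaEtAl2018}, and fresh copy variables opened/closed in lockstep with the originals (the paper's $z(y,i)$) so that selection constraints live on branch-private variables and foreign constraints become vacuous on undefined variables in the schemaless semantics. Your only deviation is stylistic: you maintain the disjointification invariant (selection sets auxiliary and disjoint from $\Y$) at every inductive step, whereas the paper introduces copies only in the binary cases and only for the shared variables $\Y_1\cap\Y_2$, handling the unary selection case directly via $E'=\setc{C\cap\Y_1}{C\in E}$.
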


\begin{proof}
We proceed by induction on the construction of $S\in \corespanners = \llbracket \regexform \rrbracket^{\{\cup, \pi, \bowtie, \varsigma^{=}\}}$.

For the induction base we consider $S\in \llbracket \regexform \rrbracket$ over $\X$. 
By Theorem~4.4 of \cite{MaturanaEtAl2018} there exists a (hierarchical) 
vset-automaton $\vaM$ such that $S=\sem{\vaM}$.
In particular, $S\in\regspanners$. Furthermore, it is obvious that 
$S=\proj_{\X}\eqconstr_\emptyset (S)$, and hence we are done.

For the induction step we distinguish between two cases.
\medskip

\emph{Case~1:} $S$ is of the form $\proj_{\Y}(S_1)$ or $\eqconstr_{E}(S_1)$, where $S_1\in\corespanners$ over $\X$ and $\Y\subseteq\X$ or $E\subseteq\Pot{\X}$.
Applying the induction hypothesis to $S_1$, we obtain that there exist 
$S'_1\in\regspanners$, $\Y_1\subseteq\X$, and $E_1\subseteq\Pot{\X}$ such that 
$S_1=\proj_{\Y_1}\eqconstr_{E_1}(S'_1)$.

If $S$ is of the form $\proj_{\Y}(S_1)$, we are done by noting that 
$S=\proj_{\Y\cap\Y_1}\eqconstr_{E_1}(S'_1)$.

If $S$ is of the form $\eqconstr_E(S_1)$, we are done by noting that 
$S=\eqconstr_E(\proj_{\Y_1}\eqconstr_{E_1}(S'_1)) = \proj_{\Y_1}\eqconstr_{E'\union E_1}(S'_1)$ where
$E'\deff \setc{C\cap \Y_1}{C\in E}$.

\medskip
\emph{Case~2:} $S$ is of the form $(S_1\ast S_2)$ where $\ast\in\set{\union,\join}$ and $S_i\in\corespanners$ over $\X$ for each $i\in\set{1,2}$.
Applying the induction hypothesis to $S_i$ for each $i\in\set{1,2}$, we obtain that there exist $S'_i\in\regspanners$, $\Y_i\subseteq\X$, and $E_i\subseteq\Pot{\X}$ such that 
$S_i=\proj_{\Y_i}\eqconstr_{E_i}(S'_i)$.
In particular, for each $i\in\set{1,2}$ there is a vset-automaton $\vaM_i$ such that 
$S'_i=\sem{\vaM_i}$.

Let $\tilde{\Y}\deff\Y_1\cap\Y_2$. 
Let $\Z_i$ denote the set of all variables that occur in $\vaM_i$ and let $\Z'_i\deff \Z_i\setminus\tilde{\Y}$.
By suitably renaming variables we can assume w.l.o.g.\ that $\Z'_1\cap \Z'_2=\emptyset$.

Furthermore, we assume w.l.o.g.\
(this can be achieved by suitably modifying $M_i$) 
that for every $y\in\tilde{\Y}$ and for each $i\in\set{1,2}$, there is a unique $z(y,i)\in \Z_i\setminus\Y_i$ such that whenever $\vaM_i$ reads $\open{y}$, it immediately afterwards has to read $\open{z(y,i)}$, and whenever it reads $\close{y}$, this immediately follows after having read $\close{z(y,i)}$. 
This ensures that for every $w\in\Sigma^*$ and every $\tuple\in S'_i(w)$ we have 
\[
\big(\, y\in\Dom{\tuple} \iff z(y,i)\in\Dom{\tuple}\,\big)
\quad\text{and}\quad \big(\, y\in\Dom{\tuple}\ \Longrightarrow \ \tuple(y)=\tuple(z(y,i)) \,\big)\,.
\]
Therefore, we can assume w.l.o.g.\ that $E_i\subseteq\Pot{\Z'_i}$
(this can be achieved by replacing $y$ with $z(y,i)$ for every $y\in\tilde{\Y}$ that occurs in some string-equality constraint specified by $E_i$).

Recall that we assume that $S=(S_1\ast S_2)$ with $\ast\in\set{\union,\join}$.
Let 
\[
\tilde{S} \ = \ \ \proj_{\Y_1\cup\Y_2}\streq_{E_1\cup E_2}(S'_1\ast S'_2)\,.
\]
Since $S'_1,S'_2\in \regspanners$ and $\ast\in\set{\union,\join}$ we obtain from Theorem~4.5 of 
\cite{MaturanaEtAl2018} that $(S'_1\ast S'_2)\in \regspanners$.
Therefore, in order to finish the proof of Lemma~\ref{lemma:CoreSimplificationLemma} it suffices to prove the following claim.
\begin{clm}\label{claim:ClaimInCoreSimplificationProof}
$S=\tilde{S}$, i.e., $S(w)=\tilde{S}(w)$ for every $w\in\Sigma^*$.
\end{clm}
The proof is straightforward, but somewhat tedious: Consider an arbitrary $w\in\Sigma^*$ and show that $S(w)\subseteq\tilde{S}(w)$ and $S(w)\supseteq\tilde{S}(w)$. 
\medskip

\noindent
We first consider the $\subseteq$-part and choose an arbitrary $\tuple\in S(w)$. Our aim is to show that $\tuple \in\tilde{S}(w)$.

\emph{Case 1:} $\ast=\union$.
Since $S(w)=S_1(w)\cup S_2(w)$, 
there exists $i\in\set{1,2}$ such that $\tuple\in S_i(w)$.
Since $S_i=\proj_{\Y_i}\streq_{E_i}(S'_i)$, there exists $\tuple'\in S'_i(w)$ such that
$\tuple = \proj_{\Y_i}(\tuple')$ and $\tuple'$ satisfies the string-equality constraints
specified by $E_i$ --- i.e., for all $C\in E$ and all variables $u,v\in C$ with
 $u,v\in\Dom{\tuple'}$ we have $\tuple'(u)=\tuple'(v)$. 

Furthermore, from $S'_i=\sem{\vaM_i}$ we know that $\Dom{\tuple'}\subseteq \Z'_i\cup\tilde{\Y}$.
Since $E_{3-i}\subseteq \Pot{\Z'_{3-i}}$ and $(\Z'_i\cup\tilde{\Y})\cap \Z'_{3-i}=\emptyset$, 
$\tuple'$ trivially satisfies the string-equality constraints specified by $E_{3-i}$.
Therefore, $\tuple'\in (\streq_{E_1\cup E_2}(S'_1\cup S'_2))(w)$, and hence
$\tuple = \proj_{\Y_i}(\tuple') = \proj_{\Y_1\cup\Y_2}(\tuple')\in(\proj_{\Y_1\cup\Y_2}\streq_{E_1\cup E_2}(S'_1\cup S'_2))(w) =\tilde{S}(w)$.

\emph{Case 2:} $\ast={\join}$.
Since $S(w)=S_1(w)\join S_2(w)$, for each $i\in\set{1,2}$ there is $\tuple_i\in S_i(w)$ such that 
$\tuple_1\joinable\tuple_2$ and $\tuple=\tuple_1\join\tuple_2$ (i.e., $\tuple_1$ and $\tuple_2$ are compatible, and $\tuple$ is their join result).

Since $S_i=\proj_{\Y_i}\streq_{E_i}(S'_i)$, there exists $\tuple'_i\in S'_i(w)$ such that
$\tuple_i = \proj_{\Y_i}(\tuple'_i)$ and $\tuple'_i$ satisfies the string-equality constraints
specified by $E_i$.
Moreover, we know that $\Dom{\tuple'_i}\subseteq \Z_i$. Hence, since $\Z_1\cap\Z_2\subseteq\tilde{\Y}=\Y_1\cap\Y_2$, we obtain that $\tuple'_1$ and $\tuple'_2$ are compatible (because $\tuple_1\joinable\tuple_2$ and $\tuple'_i=\proj_{\Y_i}(\tuple'_i)$).
Let $\tuple'\deff\tuple'_1\join\tuple'_2$ be their join result. 
Clearly, $\tuple'\in (S'_1\join S'_2)(w)$. Furthermore, $\tuple'$ satisfies all string-equality constraints specified by $E_1\cup E_2$, and therefore, 
$\tuple'\in (\streq_{E_1\cup E_2} (S'_1\join S'_2))(w)$.
Finally, observe that $\proj_{\Y_1\cup\Y_2}(\tuple')=\tuple$, and hence
$\tuple\in (\proj_{\Y_1\cup\Y_2}\streq_{E_1\cup E_2} (S'_1\join S'_2))(w) = \tilde{S}(w)$.
This finishes the proof of the $\subseteq$-part.
\medskip

\noindent
Now, we consider the $\supseteq$-part and choose an arbitrary $\tuple\in\tilde{S}(w)$.
Our aim is to show that $\tuple \in S(w)$.
By definition of $\tilde{S}$ there exists $\tuple'\in (S'_1\ast S'_2)(w)$ such that 
$\tuple=\proj_{\Y_1\cup\Y_2}(\tuple')$ and $\tuple'$ satisfies the string-equality constraints specified by $E_1\cup E_2$.

\emph{Case 1:} $\ast=\union$.
Since $\tuple'\in(S'_1\union S'_2)(w)$, there is an $i\in\set{1,2}$ such that $\tuple'\in S'_i(w)$. 
In particular, $\Dom{\tuple'}\subseteq \Z_i$, and $\tuple'$ satisfies the string-equality constraints specified by $E_i$.
Hence, $\tuple'\in (\streq_{E_i} (S'_i))(w)$.
Furthermore, 
$\tuple=\proj_{\Y_1\cup \Y_2}(\tuple')=\proj_{\Y_i}(\tuple')$, and hence
$\tuple \in (\proj_{\Y_i}\streq_{E_i} (S'_i)(w) = S_i(w)\subseteq S_1(w)\cup S_2(w) = S(w)$.

\emph{Case 2:} $\ast=\,\join$.
Since $\tuple'\in(S'_1\join S'_2)(w)$, for each $i\in\set{1,2}$ there exists $\tuple'_i\in S'_i(w)$ such that $\tuple'_1\joinable\tuple'_2$ and $\tuple'=\tuple'_1\join \tuple'_2$ (i.e., $\tuple'_1$ and $\tuple'_2$ are compatible and $\tuple'$ is their join result). In particular, $\Dom{\tuple'_i}\subseteq \Z_i$, and $\tuple'_i$ satisfies the string-equality constraints specified by $E_i$. Thus, $\tuple_i\deff \proj_{\Y_i}(\tuple'_i) \in (\proj_{\Y_i}\streq_{E_i}(S'_i))(w) = S_i(w)$. Furthermore, $\tuple_1\joinable \tuple_2$ and $\tuple=\tuple_1\join\tuple_2$. I.e., $\tuple\in S_1(w)\join S_2(w) = S(w)$.

This finishes the proof of the $\supseteq$-part, the proof of Claim~\ref{claim:ClaimInCoreSimplificationProof}, and the proof of Lemma~\ref{lemma:CoreSimplificationLemma}.
\end{proof}

\section{A Declarative Approach to Spanners}\label{section:declarative}

In this section, we develop a simple declarative approach to spanners by establishing a natural one-to-one correspondence between spanners over $\Sigma$ and so-called \emph{subword-marked languages} over $\Sigma$. This approach conveniently allows to investigate or define non-algorithmic properties of spanners completely independently from any machine model or other description mechanisms (e.\,g., types of regular expressions, automata, etc.), while at the same time we can use the existing algorithmic toolbox for formal languages whenever required (instead of inventing special-purpose variants of automata or regular expressions to this end).\footnote{In the literature on spanners, subword-marked words have previously been used as a tool to define the semantics of regex-formulas or vset-automata (see, e.\,g.,~\cite{DoleschalEtAl2019, Freydenberger2019, FreydenbergerEtAl2018, FreydenbergerThompson2020}). However, in these papers, the term \emph{ref-word} is used instead of subword-marked word, which is a bit of a misnomer due to the following reasons. Ref-words have originally been used in~\cite{Schmid2016} (in a different context) as words that contain \emph{references} to some of their subwords, which are explicitly marked. In the context of spanners, only ref-words with marked subwords, but \emph{without} any references have been used so far. Since in this work we wish to use ref-words in the sense of~\cite{Schmid2016}, i.\,e., with actual references, but also the variants without references, we introduce the term subword-marked word for the latter.} \par
In particular, this declarative approach is rather versatile and provides some modularity in the sense that we could replace ``regular languages'' by any kind of language class (e.\,g., (subclasses of) context-free languages, context-sensitive languages, etc.) to directly obtain (i.\,e., without any need to adopt our definitions) a formally sound class of document spanners and also have the full technical machinery that exists for this language class at our disposal. Note that the idea of using non-regular languages from the Chomsky hierarchy to define more powerful classes of document spanners has been recently used in~\cite{Peterfreund2021}. \par
In the context of this paper, however, the main benefit is that this approach provides a suitable angle to treat string-equality selections in a regular way. 

\subsection{Subword-Marked Words}

For any set $\varset$ of variables, we shall use the set $\Gamma_{\varset} = \{\open{\varsx}, \close{\varsx} \mid \varsx \in \varset\}$ as an alphabet of meta-symbols. In particular, for every $\varsx \in \varset$, we interpret the pair of symbols $\open{\varsx}$ and $\close{\varsx}$ as a pair of opening and closing parentheses. 

\begin{defi}[Subword-Marked Words]\label{subwordMarkedWordsDefinition}
A \emph{subword-marked} word (\emph{over terminal alphabet $\Sigma$ and variables $\varset$}) is a word $w \in (\Sigma \cup \Gamma_{\varset})^*$ such that, for every $\varsx \in \varset$, $\erasemorphism{\Sigma \cup \Gamma_{\varset\setminus\{\varsx\}}}(w) \in \{\eword, \open{\varsx} \close{\varsx}\}$. A subword-marked word is \emph{functional} if $|w|_{\open{\varsx}} = 1$ for every $\varsx \in \varset$. For a subword-marked word $w$ over $\Sigma$ and $\varset$, we set $\getWord{w} = \erasemorphism{\Gamma_{\varset}}(w)$.
\end{defi}

A subword-marked word $w$ can be interpreted as a word over $\Sigma$, i.\,e., the word $\getWord{w}$, in which some subwords are marked by means of the parentheses $\open{\varsx}$ and $\close{\varsx}$. In this way, it represents an $(\varset, \getWord{w})$-tuple, i.\,e., every $\varsx \in \varset$ is mapped to $\spann{i}{j} \in \spans(\getWord{w})$, where $w = w_1 \open{\varsx} w_2 \close{\varsx} w_3$ with $i = |\getWord{w_1}| + 1$ and $j = |\getWord{w_1 w_2}| + 1$. In the following, the $(\varset, \getWord{w})$-tuple defined by a subword-marked word $w$ is denoted by $\getSpanTuple{w}$. 
We note that $\getSpanTuple{w}$ is a total function if and only if $w$ is functional. Moreover, we say that a subword-marked word $w$ is \emph{hierarchical} or \emph{non-overlapping}, if $\getSpanTuple{w}$ is hierarchical or non-overlapping, respectively.

\begin{exa}
Let $\varset = \{\varsx, \varsy, \varsz\}$ and $\Sigma = \{\ta, \tb,
\tc\}$. Then $\open{\varsx} \ta \ta \close{\varsx} \ta \tb
\open{\varsy} \open{\varsz} \tc \ta \close{\varsz} \ta \close{\varsy}$
is a functional and hierarchical subword-marked word. The
subword-marked word $u = \tb \open{\varsx} \ta \open{\varsy} \ta \tb
\ta \open{\varsz} \ta \close{\varsz} \tc \close{\varsx} \ta \tb
\close{\varsy} \tc$ is functional, but not hierarchical, while $v =
\open{\varsx} \ta \open{\varsy} \tb \ta \close{\varsy} \tc \ta \tb
\close{\varsx} \tc \ta \ta$ is a non-functional, but hierarchical
subword-marked word. Moreover, $\getSpanTuple{u} = (\spann{2}{8},
\spann{3}{10}, \spann{6}{7})$ and $\getSpanTuple{v} = (\spann{1}{7},
\spann{2}{4}, \undefin)$. On the other hand, neither $\open{\varsx}
\ta \ta \close{\varsx} \ta \tb \open{\varsy} \open{\varsx} \tc \ta
\close{\varsx} \ta \close{\varsy}$ nor $\open{\varsx} \ta
\open{\varsy} \tb \ta \close{\varsx} \tc$ are valid subword-marked
words. 
\end{exa}

\subsection{Subword-Marked Languages and Spanners}

A set $L$ of subword-marked words (over $\Sigma$ and $\varset$) is a \emph{subword-marked language} (\emph{over $\Sigma$ and $\varset$}). A subword-marked language $L$ is called \emph{functional}, \emph{hierarchical} or \emph{non-overlapping} if all $w \in L$ are functional, hierarchical or non-overlapping, respectively. Since every subword-marked word $w$ over $\Sigma$ and $\varset$ describes a $(\varset, \getWord{w})$-tuple, subword-marked languages can be interpreted as spanners as follows.

\begin{defi}\label{subwordMarkedLanguageSpannersDefinition}
Let $L$ be a subword-marked language (over $\Sigma$ and $\varset$). Then the spanner $\llbracket L \rrbracket$ (over $\Sigma$ and $\varset$) is defined as follows: for every $w \in \Sigma^*$, $\llbracket L \rrbracket(w) = \{\getSpanTuple{v} \mid v \in L, \getWord{v} = w\}$. For a class $\mathcal{L}$ of subword-marked languages, we set $\llbracket \mathcal{L} \rrbracket = \{\llbracket L \rrbracket \mid L \in \mathcal{L}\}$.
\end{defi}

\begin{exa}
Let $\Sigma = \{\ta, \tb\}$ and $\varset = \{\varsx_1, \varsx_2,
\varsx_3\}$. Let $\alpha = \open{\varsx_1}(\ta \altop
\tb)^*\close{\varsx_1} \open{\varsx_2} \tb \close{\varsx_2}
\open{\varsx_3} (\ta \altop \tb)^* \close{\varsx_3}$ be a regular
expression over the alphabet $\Sigma \cup \Gamma_{\varset}$. We can
note that $\lang(\alpha)$ is a subword-marked language (over terminal
alphabet $\Sigma$ and variables $\varset$) and therefore $\llbracket
\lang(\alpha) \rrbracket$ is a spanner over $\varset$. In fact, $\llbracket \lang(\alpha) \rrbracket$ is exactly the spanner described by the function $S$ in Example~\ref{firstSpannerExample}. 
\end{exa}

In this way, every subword-marked language $L$ over $\Sigma$ and $\varset$ describes a spanner $\llbracket L \rrbracket$ over $\Sigma$ and $\varset$, and since it is also easy to transform any $(\varset, w)$-tuple $t$ into a subword-marked word $v$ with $\getWord{v} = w$ and $\getSpanTuple{v} = t$, also every spanner $S$ over $\Sigma$ and $\varset$ can be represented by a subword-marked language over $\Sigma$ and $\varset$. Moreover, for a subword-marked language $L$ over $\Sigma$ and $\varset$, $\llbracket L \rrbracket$ is a functional, hierarchical or non-overlapping spanner if and only if $L$ is functional, hierarchical or non-overlapping, respectively. This justifies that we can use the concepts of spanners (over $\Sigma$ and $\varset$) and the concept of subword-marked languages (over $\Sigma$ and $\varset$) completely interchangeably. By considering only \emph{regular} subword-marked languages, we automatically obtain the class of regular spanners (usually defined as the class of spanners that can be described by vset-automata~\cite{FaginEtAl2015,MaturanaEtAl2018}). More formally, let $\regswmLanguages_{\Sigma, \varset}$ be the class of regular subword-marked languages over $\Sigma$ and $\varset$ and let $\regswmLanguages = \bigcup_{\Sigma, \varset} \regswmLanguages_{\Sigma, \varset}$.

\begin{prop}\label{regSpannerProposition}
$\regspanners = \llbracket \regswmLanguages \rrbracket$.
\end{prop}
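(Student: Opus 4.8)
The plan is to prove the two inclusions $\regspanners \subseteq \llbracket \regswmLanguages \rrbracket$ and $\llbracket \regswmLanguages \rrbracket \subseteq \regspanners$ separately, exploiting that both a vset-automaton and a regular subword-marked language are, at bottom, finite automata over the extended alphabet $\Sigma \cup \Gamma_{\varset}$, and that the span-tuple they induce is in both cases exactly the map $\getSpanTuple{\cdot}$ applied to the word read along an accepting run. The whole argument hinges on matching the two semantic definitions; no genuinely new construction is needed, only a careful translation between the two presentations in the schemaless setting of~\cite{MaturanaEtAl2018}.

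For the inclusion $\regspanners \subseteq \llbracket \regswmLanguages \rrbracket$, I would take an arbitrary $S \in \regspanners$, so that $S = \sem{\vaM}$ for some vset-automaton $\vaM$ with schemaless semantics. Viewing $\vaM$ as an ordinary $\NFA$ $M$ over the alphabet $\Sigma \cup \Gamma_{\varset}$ --- forgetting the span interpretation and reading each variable operation $\open{\varsx}, \close{\varsx}$ as a plain letter --- the accepted language $\lang(M)$ is regular. The validity conditions built into the definition of schemaless vset-automata guarantee that every accepted word $v$ satisfies, for each $\varsx \in \varset$, that $\erasemorphism{\Sigma \cup \Gamma_{\varset \setminus \{\varsx\}}}(v) \in \{\eword, \open{\varsx}\close{\varsx}\}$; hence $\lang(M)$ is a regular subword-marked language $L$. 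It then remains to observe that the semantics coincide: the span-tuple assigned by $\vaM$ to an accepting run reading $v$ is precisely $\getSpanTuple{v}$ on the word $\getWord{v}$, so $\sem{\vaM}(w) = \{\getSpanTuple{v} \mid v \in L, \getWord{v} = w\} = \llbracket L \rrbracket(w)$ for every $w \in \Sigma^*$. Thus $S = \llbracket L \rrbracket \in \llbracket \regswmLanguages \rrbracket$.

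For the converse inclusion, I would start from $S = \llbracket L \rrbracket$ with $L \in \regswmLanguages$ recognised by an $\NFA$ $M$ over $\Sigma \cup \Gamma_{\varset}$, and turn $M$ into a vset-automaton. A direct reinterpretation is not immediately safe, because an arbitrary $\NFA$ for $L$ may admit partial runs along which the variable operations are not consistently trackable, whereas the vset-automaton definition requires a well-defined variable status at every state. I would therefore first use the fact that the set of all subword-marked words over $\Sigma$ and $\varset$ is itself regular (being defined by the local, per-variable conditions of Definition~\ref{subwordMarkedWordsDefinition}) and form the product automaton recognising $\lang(M) \cap \{\text{subword-marked words}\} = L$ (the intersection equals $L$ since $L$ already consists only of subword-marked words). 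In this product, the component coming from the subword-marked-word automaton tracks, for each variable, whether it is currently unopened, open, or already closed, so that every state carries a consistent variable status. This product $\NFA$ can then be read directly as a vset-automaton $\vaM$ with schemaless semantics: terminal-labelled transitions become terminal transitions and $\open{\varsx}/\close{\varsx}$-labelled transitions become the corresponding variable operations. By the same semantic identification as above, $\sem{\vaM} = \llbracket L \rrbracket = S$, whence $S \in \regspanners$.

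The main obstacle I anticipate is not any clever construction but the bookkeeping needed to match, formally and in full, the schemaless vset-automaton semantics of~\cite{MaturanaEtAl2018} with the map $\getSpanTuple{\cdot}$: one must verify that an undefined variable corresponds exactly to the absence of its opening and closing markers along the run (the $\eword$ case of the projection), and that the start and end positions read off from the sequence of markers agree with the indices $i = |\getWord{w_1}| + 1$ and $j = |\getWord{w_1 w_2}| + 1$ from the definition of $\getSpanTuple{w}$. Once this identification is established, both inclusions follow immediately from the respective automata-to-language translations, completing the proof.
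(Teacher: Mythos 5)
There is a genuine gap, and it sits in your first inclusion. You claim that for a schemaless vset-automaton $\vaM$, the forgetful $\NFA$ $M$ over $\Sigma \cup \Gamma_{\varset}$ automatically satisfies $\lang(M) \in \regswmLanguages$ because ``the validity conditions are built into the definition.'' They are not: in the definitions of \cite{FaginEtAl2015,MaturanaEtAl2018}, a vset-automaton is structurally just an $\NFA$ with variable operations, and \emph{validity} (each variable opened at most once, and closed exactly when it has been opened) is a property of individual \emph{runs}; the semantics $\sem{\vaM}$ simply discards invalid accepting runs. The forgetful $\NFA$ does not discard them: a path to a final state that, say, reads $\open{\varsx}$ twice, or reads $\close{\varsx}$ with no preceding $\open{\varsx}$, contributes its label to $\lang(M)$, so $\lang(M)$ need not be a subword-marked language at all, and $\llbracket \lang(M) \rrbracket$ is then not even well-defined via Definition~\ref{subwordMarkedLanguageSpannersDefinition} (the map $\getSpanTuple{\cdot}$ is only defined on subword-marked words). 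This is exactly why the paper's proof, in this direction, additionally keeps track in the finite state control of whether the run is valid and accepts only in that case --- i.e., it runs in parallel an automaton monitoring the per-variable status, which prunes $\lang(M)$ down to the labels of valid accepting runs.

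Ironically, you have the needed repair in hand but deploy it in the wrong direction. Your second inclusion inserts a product with the (regular) language of all subword-marked words before reinterpreting the $\NFA$ as a vset-automaton; there this is superfluous, since the hypothesis $\lang(M) \in \regswmLanguages$ already guarantees that every accepting run of the reinterpreted automaton reads a subword-marked word and is therefore valid --- the paper's proof does the plain reinterpretation with no extra construction (and note the definition of vset-automata does not demand ``a well-defined variable status at every state,'' so your stated worry does not arise). Moving your product construction to the first inclusion fixes the proof: intersecting $\lang(M)$ with the set of subword-marked words over $\Sigma$ and $\varset$ yields precisely the labels of the \emph{valid} accepting runs of $\vaM$, which is a regular subword-marked language $L$ with $\sem{\vaM} = \llbracket L \rrbracket$. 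Your semantic identification of $\sem{\vaM}$ with $\getSpanTuple{\cdot}$ on valid run labels, and the undefined-variable bookkeeping you flag at the end, are otherwise fine.
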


\begin{proof}
Let $M$ be a vset-automaton over $\Sigma$ and with variables
$\varset$. Then we can intrepret $M$ as an $\NFA$ $M'$ over alphabet
$\Sigma \cup \Gamma_{\varset}$, by interpreting every variable
operation $\varsx \vdash$ and $\dashv \varsx$ as an $\open{\varsx}$- and
$\close{\varsx}$-transition, respectively. We can then further modify $M'$ such that it rejects all inputs that are not subword-marked words (this can be done by simply keeping track in the finite state control which markers have been read so far). With this further modification, we have that $\llbracket M \rrbracket = \llbracket \lang(M') \rrbracket$.\par
On the other hand, let $M$ be an $\NFA$ such that $\lang(M)$ is a subword-marked language over $\Sigma$ and $\varset$. Then we can simply interpret all $\open{\varsx}$- and $\close{\varsx}$-transitions as variable operations $\varsx \vdash$ and $\dashv \varsx$ in order to obtain a vset-automaton $M'$ with $\llbracket \lang(M) \rrbracket = \llbracket M' \rrbracket$.
\end{proof}

Proposition~\ref{regSpannerProposition} justifies that, throughout this paper,
instead of using the vset-automata of \cite{FaginEtAl2015}, we will
represent regular spanners by ordinary nondeterministic finite
automata ($\NFA$) over $\Sigma \cup \Gamma_{\varset}$ that accept
subword-marked languages. It is an easy exercise to see that any
vset-automaton (as defined in \cite{FaginEtAl2015}) can be transformed into an according $\NFA$ by
increasing the automaton's state space by
at most the factor $3^{|\varset|}$. If the vset-automaton is \emph{sequential} (see, e.\,g.,~\cite{AmarilliEtAl2021}), then the $\NFA$ is even of linear size.

It is a straightforward but important observation that for any given $\NFA$ over $\Sigma \cup \Gamma_{\varset}$, we can efficiently check whether $\lang(M)$ is a subword-marked language. Since most description mechanisms for regular languages (e.\,g., expressions, grammars, logics, etc.) easily translate into $\NFA$, they can potentially all be used for defining regular spanners.

\begin{prop}\label{CheckSubwordMarkedLanguagesProposition}
Given an $\NFA$ $M$ over alphabet $\Sigma \cup \Gamma_{\varset}$, we can decide in time $\bigO(|M|\cdot|\varset|^2)$ if $\lang(M)$ is a subword-marked language, and, if so, whether $\lang(M)$ is functional in time $\bigO(|M|\cdot|\varset|^2)$, and whether it is hierarchical or non-overlapping in time $\bigO(|M|\cdot|\varset|^3)$.
\end{prop}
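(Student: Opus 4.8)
The plan is to exploit that ``being a subword-marked language'', ``functional'', ``hierarchical'' and ``quasi-disjoint'' are all \emph{universally quantified} properties of $\lang(M)$: each requires that \emph{every} $w \in \lang(M)$ satisfy a condition, and each such condition decomposes into conditions concerning only a single variable $\varsx$ (for being subword-marked and functional) or a single pair $\{\varsx, \varsy\}$ (for hierarchical and quasi-disjoint). For each such local condition I would construct a fixed, constant-size deterministic \emph{monitor} automaton $A$ over $\Sigma \cup \Gamma_{\varset}$ that reads a word and enters a distinguished sink as soon as the condition is violated, form the product $M \times A$, and test by a reachability computation whether $M \times A$ has an accepting run witnessing a violation. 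Since the conditions are universal, $\lang(M)$ has the property in question if and only if \emph{no} monitor (ranging over all variables, resp.\ all pairs) admits such a witnessing run. Each product-and-reachability test runs in time linear in $|M|$ (the monitor has $\bigO(1)$ states, so the product has $\bigO(|M|)$ states and transitions and reachability is linear); summing over the $\bigO(|\varset|)$ variables and the $\bigO(|\varset|^2)$ pairs yields running times of $\bigO(|M||\varset|)$ and $\bigO(|M||\varset|^2)$, which in particular respects the bounds claimed in the statement.

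For the subword-marked test, fix $\varsx$ and let $A_{\varsx}$ have states $\{s_0, s_1, s_2, \bot\}$: it starts in $s_0$, ignores every symbol other than $\open{\varsx}, \close{\varsx}$, moves $s_0 \to s_1$ on $\open{\varsx}$ and $s_1 \to s_2$ on $\close{\varsx}$, and moves to the sink $\bot$ on any further occurrence of $\open{\varsx}$ or $\close{\varsx}$. By Definition~\ref{subwordMarkedWordsDefinition}, a word violates the $\varsx$-condition exactly if after reading it $A_{\varsx}$ is in $s_1$ (only $\open{\varsx}$ read) or in $\bot$; hence $\lang(M)$ is subword-marked iff for no $\varsx$ can the product $M \times A_{\varsx}$ reach a state $(q, s_1)$ or $(q, \bot)$ with $q$ accepting. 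Once this holds, functionality is checked by the \emph{same} products, now additionally forbidding the final monitor state $s_0$ (which would mean $|w|_{\open{\varsx}} = 0$); this is again a reachability question and costs no more.

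The genuinely delicate part is the hierarchical and quasi-disjoint tests, since here one must recognise from the subword-marked word alone whether the two spans of a pair are nested, disjoint, overlapping or equal. Assuming $\lang(M)$ is already known to be subword-marked, each of $\open{\varsx}, \close{\varsx}, \open{\varsy}, \close{\varsy}$ occurs at most once and in the correct order, so the only ``crossing'' reading orders are $\open{\varsx} \ldots \open{\varsy} \ldots \close{\varsx} \ldots \close{\varsy}$ and its mirror image. The key combinatorial observation the monitor encodes is that \emph{the reading order of the four brackets does not by itself determine the relationship of the two spans}: one also has to know whether a terminal symbol occurs in the regions between consecutive brackets. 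For instance, two spans \emph{properly overlap} (the only way to fail hierarchicality) precisely when the brackets cross \emph{and} at least one terminal lies in each of the three intervening regions; without a terminal in some region the spans degenerate into nested, disjoint, or equal ones. I would therefore let the monitor $A_{\varsx,\varsy}$ track, in constantly many states, which brackets have been seen together with a frozen boolean for each already-closed region recording whether it contained a terminal, and move to the sink exactly when these flags certify a forbidden configuration.

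For quasi-disjointness the monitor has the same shape but flags more configurations, since \emph{nested-but-unequal} spans are now disallowed as well; here the empty-span corner cases require the most care, as illustrated by $\spann{3}{6}$ and $\spann{5}{5}$, which are \emph{not} disjoint (hence violate quasi-disjointness) even though their intersection is empty. Consequently the monitor must treat an empty inner span sitting strictly inside another span as a violation; this is detected by the same terminal-in-region flags, since a span is empty exactly when its two brackets are not separated by a terminal. Getting this bracket-order-plus-terminal-flag case distinction exactly right --- and proving that the resulting constant-size monitor accepts a word iff its induced span-tuple $\getSpanTuple{w}$ is non-hierarchical (resp.\ non-quasi-disjoint) for the pair $\{\varsx, \varsy\}$ --- is the main obstacle; once it is settled, the complexity bound follows from running the $\bigO(|\varset|^2)$ pairwise reachability tests, each in time $\bigO(|M|)$.
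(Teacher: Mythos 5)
Your proposal is correct and takes essentially the same approach as the paper: both characterise the complements by per-variable (resp.\ per-pair) forbidden patterns combining bracket order with terminal-occurrence flags --- including exactly the paper's crossing-with-three-nonempty-regions condition for non-hierarchicality and the empty-span-strictly-inside corner case for non-quasi-disjointness --- and decide them via a product automaton and a reachability/non-emptiness test. The only (harmless) difference is bookkeeping: you run $\bigO(|\varset|)$ resp.\ $\bigO(|\varset|^2)$ constant-state monitors separately, whereas the paper builds a single $\NFA$ $N$ of size $\bigO(|\varset|^2)$ resp.\ $\bigO(|\varset|^3)$ that nondeterministically guesses the offending variable or pair, so your totals $\bigO(|M||\varset|)$ and $\bigO(|M||\varset|^2)$ even come in slightly below the stated bounds.
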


\begin{proof}
A word $w \in (\Sigma \cup \Gamma_{\varset})^*$ is \emph{not} a subword-marked word if and only if there is an $\varsx \in \varset$ such that one of the following properties is satisfied:
\begin{itemize}
\item $w = w_1 \close{\varsx} w_2$ with $|w_1|_{\open{\varsx}} = 0$,
\item $w = w_1 \open{\varsx} w_2$ with $|w_2|_{\close{\varsx}} = 0$,
\item $|w|_{\open{\varsx}} \geq 2$ or $|w|_{\close{\varsx}} \geq 2$.
\end{itemize}
Moreover, a subword-marked word over $\Sigma$ and $\varset$ is \emph{not} functional if and only if there is an $\varsx \in \varset$ such that $|w|_{\open{\varsx}} = 0$; it is \emph{not} hierarchical if there are $\varsx, \varsy \in \varset$ such that $w = w_1 \open{\varsx} w_2 \open{\varsy} w_3 \close{\varsx} w_4 \close{\varsy} w_5$ with $\getWord{w_2} \neq \eword$, $\getWord{w_3} \neq \eword$ and $\getWord{w_4} \neq \eword$; and it is \emph{not} non-overlapping if there are $\varsx, \varsy \in \varset$ such that one of the following properties is satisfied:
\begin{itemize}
\item $w = w_1 \open{\varsx} w_2 \open{\varsy} w_3 \close{\varsx} w_4 \close{\varsy} w_5$ and $\getWord{w_3} \neq \eword$ and $\getWord{w_2 w_4} \neq \eword$
\item $w = w_1 \open{\varsx} w_2 \open{\varsy} w_3 \close{\varsy} w_4 \close{\varsx} w_5$ and 
\begin{itemize}
\item $\getWord{w_3} = \eword$ and $\getWord{w_2} \neq \eword$ and $\getWord{w_4} \neq \eword$,
\item $\getWord{w_3} \neq \eword$ and $\getWord{w_2w_4} \neq \eword$.
\end{itemize}
\end{itemize}
These considerations show that we can construct an $\NFA$ $N$ that checks whether a given word over $\Sigma \cup \Gamma_{\varset}$ is \emph{not} a subword-marked word over $\Sigma$ and $\varset$, or whether a given subword-marked word over $\Sigma$ and $\varset$ is \emph{not} functional. Note that $N$ needs a constant number of states per $\varsx \in \varset$, and the outdegree is bounded by $\bigO(|\Sigma \cup \Gamma_{\varset}|) = \bigO(|\varset|)$; thus, $|N| = \bigO(|\varset|^2)$.\par
Moreover, we can construct an $\NFA$ $N$ that checks whether a given subword-marked word over $\Sigma$ and $\varset$ is \emph{not} hierarchical or \emph{not} non-overlapping. Since we now need a constant number of states for each two variables $\varsx, \varsy \in \varset$, we can assume that $|N| = \bigO(|\varset|^3)$. \par
Finally, we can check whether $\lang(M)$ is a subword-marked language by checking non-emptiness for the cross-product automaton of $M$ and $N$ (i.\,e., the automaton that accepts the intersection $\lang(M) \cap \lang(N)$). Since the cross-product automaton has size $\bigO(|M|\cdot|N|) = \bigO(|M|\cdot|\varset|^2)$, this can be done in time $\bigO(|M|\cdot|\varset|^2)$. Analogously, checking functionality for a subword-marked language given by an $\NFA$ $M$ can be done in time $\bigO(|M|\cdot|\varset|^2)$ by exploiting the respective automata $N$ for checking non-functionality, and checking hierarchicality or the non-overlapping property can be done in time $\bigO(|M|\cdot|\varset|^3)$ by exploiting the respective automata $N$ for checking non-hierarchicality or the overlapping property, respectively.
\end{proof}

\section{Refl-Spanners: Spanners with Built-In String-Equality Selections}\label{section:refl}

In this section, we extend the concept of subword-marked words and
languages in order to describe spanners with string-equality
selections.

\subsection{Ref-Words and Ref-Languages}

We consider subword-marked words with extended terminal alphabet $\Sigma \cup \varset$, i.\,e., in addition to symbols from $\Sigma$, also variables from $\varset$ can appear as terminal symbols (the marking of subwords with symbols $\Gamma_{\varset}$ remains unchanged).

\begin{defi}[Ref-Words]\label{refWordDefinition}
A \emph{ref-word} \emph{over $\Sigma$ and $\varset$} is a subword-marked word over terminal alphabet $\Sigma \cup \varset$ and variables $\varset$, such that, for every $\varsx \in \varset$, if $w = w_1 \varsx w_2$, then there exist words $v_1, v_2, v_3$ such that $w_1 = v_1 \open{\varsx} v_2 \close{\varsx} v_3$.
\end{defi}

Since ref-words are subword-marked words, the properties \emph{functional}, \emph{hierarchical} and \emph{non-overlapping} are well-defined.

\begin{exa}
Let $\Sigma = \{\ta, \tb, \tc\}$ and $\varset = \{\varsx, \varsy\}$. The subword-marked words $u = \ta \tb \open{\varsx} \ta \tb \close{\varsx} \tc \open{\varsy} \varsx \ta \ta \close{\varsy} \varsy$ and $v = \ta \open{\varsx} \ta \tb \open{\varsy} \ta \tb \close{\varsx} \ta \close{\varsy} \varsx \varsy$ (over terminal alphabet $\Sigma \cup \varset$ and variables $\varset$) are valid ref-words (over terminal alphabet $\Sigma$ and variables $\varset$). Note that both $u$ and $v$ are functional, and $u$ is also hierarchical, while $v$ is not. On the other hand, $\ta \varsx \tb \open{\varsx} \ta \tb \close{\varsx} \tc \open{\varsy} \varsx \ta \ta \close{\varsy} \varsy$ and $\ta \ta \open{\varsx} \ta \tb \close{\varsx} \tc \open{\varsy} \varsy \ta \close{\varsy}$ are subword-marked words (over terminal alphabet $\Sigma \cup \varset$ and variables $\varset$), but not ref-words.
\end{exa}

The idea of ref-words is that occurrences of $\varsx \in \varset$ are interpreted as \emph{references} to the subword $\open{\varsx} v \close{\varsx}$, which we will call the \emph{definition} of $\varsx$.  Note that while a single variable can have several references, there is at most one definition per variable, and if there is no definition for a variable, then it also has no references. Variable definitions may contain other references or definitions of other variables, i.\,e., there may be chains of references, e.\,g., the definition of $\varsx$ contains references of $\varsy$, and the definition of $\varsy$ contains references of $\varsz$ and so on. Next, we formally define this nested referencing process encoded by ref-words. Recall that for a subword-marked word $w$ we denote by $\getWord{w}$ the word obtained by removing all meta-symbols from $\Gamma_{\varset}$ from $w$ (however, if $w$ is a ref-word, then $\getWord{w}$ is a word over $\Sigma \cup \varset$).

\begin{defi}[Deref-Function]\label{derefDefinition}
For a ref-word $w$ over $\Sigma$ and $\varset$, the subword-marked word $\deref{w}$ over $\Sigma$ and $\varset$ is obtained from $w$ by repeating the following steps until we have a subword-marked word over $\Sigma$ and $\varset$:
\begin{enumerate} 
\item\label{refStepTwo} Let $\open{\varsx} v_{\varsx} \close{\varsx}$ be a definition such that $\getWord{v_{\varsx}} \in \Sigma^*$.  
\item\label{refStepThree} Replace all occurrences of $\varsx$ in $w$ by $\getWord{v_{\varsx}}$.
\end{enumerate} 
\end{defi}

It is straightforward to verify that the function $\deref{\cdot}$ is well-defined. By using this function and because ref-words encode subword-marked words, they can be interpreted as span-tuples. More precisely, for every ref-word $w$ over $\Sigma$ and $\varset$, $\deref{w}$ is a subword-marked word over $\Sigma$ and $\varset$, $\getWord{\deref{w}} \in \Sigma^*$ and $\getSpanTuple{\deref{w}}$ is an $(\varset, \getWord{\deref{w}})$-tuple. 

\begin{exa}
Let $\Sigma = \{\ta, \tb, \tc\}$, let $\varset = \{\varsx_1, \varsx_2, \varsx_3, \varsx_4\}$ and let
\[
w = \ta \ta \open{\varsx_1} \ta \tb \open{\varsx_2} \ta \tc \tc \close{\varsx_2} \ta \varsx_2 \close{\varsx_1} \open{\varsx_4} \varsx_1 \ta \varsx_2 \close{\varsx_4} \varsx_4 \tb \varsx_1\,.
\]
Due to the definition $\open{\varsx_2} \ta \tc \tc \close{\varsx_2}$, the procedure of Definition~\ref{derefDefinition} will initially replace all references of $\varsx_2$ by $\ta \tc \tc$. Then, the definition for variable $\varsx_1$ is $\open{\varsx_1} \ta \tb \open{\varsx_2} \ta \tc \tc \close{\varsx_2} \ta \ta \tc \tc \close{\varsx_1}$, so $\ta \tb \ta \tc \tc \ta \ta \tc \tc$ can be substituted for the references of $\varsx_1$. After replacing the last variable $\varsx_4$, we obtain 
\[
\deref{w} = \ta \ta \open{\varsx_1} \ta \tb \open{\varsx_2} \ta \tc \tc \close{\varsx_2} \ta \ta \tc \tc \close{\varsx_1} \open{\varsx_4} \ta \tb \ta \tc \tc \ta \ta \tc \tc \ta \ta \tc \tc \close{\varsx_4} \ta \tb \ta \tc \tc \ta \ta \tc \tc \ta \ta \tc \tc \tb \ta \tb \ta \tc \tc \ta \ta \tc \tc\,.
\]
Moreover, we have $\getSpanTuple{\deref{w}} = (\spann{3}{12}, \spann{5}{8}, \undefin, \spann{12}{25})$.\par
As a non-hierarchical example, consider $u = \open{\varsx_1} \ta \open{\varsx_2} \ta \ta \close{\varsx_1} \tc \varsx_1 \open{\varsx_3} \ta \tc \close{\varsx_2} \varsx_2 \ta \varsx_1 \close{\varsx_3}$. It can be easily verified that $\deref{u} = \open{\varsx_1} \ta \open{\varsx_2} \ta \ta \close{\varsx_1} \tc \ta \ta \ta \open{\varsx_3} \ta \tc \close{\varsx_2} \ta \ta \tc \ta \ta \ta \ta \tc \ta \ta \ta \ta \close{\varsx_3}$ and $\getSpanTuple{\deref{u}} = (\spann{1}{4}, \spann{2}{10}, \spann{8}{22}, \undefin)$.
\end{exa}

A set $L$ of ref-words is called a \emph{ref-language}. We extend the $\deref{\cdot}$-function to ref-languages $L$ in the obvious way, i.\,e., $\deref{L} = \{\deref{w} \mid w \in L\}$. Note that $\deref{L}$ is necessarily a subword-marked language. As for subword-marked languages, we are especially interested in ref-languages that are regular. By $\regrefLanguages_{\Sigma, \varset}$ we denote the class of regular ref-languages over $\Sigma$ and $\varset$, and we set $\regrefLanguages = \bigcup_{\Sigma, \varset} \regrefLanguages_{\Sigma, \varset}$. \par
In analogy to Proposition~\ref{CheckSubwordMarkedLanguagesProposition}, we can easily check for a given $\NFA$ over alphabet $\Sigma \cup \Gamma_{\varset} \cup \varset$ whether it accepts a ref-language over $\Sigma$ and $\varset$:

\begin{prop}\label{checkRefPropProposition}
Given an $\NFA$ $M$ where $\lang(M)$ is a subword-marked language over $\Sigma \cup \varset$ and $\varset$, we can decide in time $\bigO(|M|\cdot|\varset|^2)$ if $\lang(M)$ is a ref-language over $\Sigma$ and $\varset$.
\end{prop}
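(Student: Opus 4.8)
The plan is to reduce the question ``is $\lang(M)$ a ref-language?'' to a non-emptiness check for an auxiliary $\NFA$, exactly mirroring the strategy of Proposition~\ref{CheckSubwordMarkedLanguagesProposition}. Since we are already given that $\lang(M)$ is a subword-marked language over $\Sigma \cup \varset$ and $\varset$, the only additional requirement to verify is the \emph{reference-ordering} condition of Definition~\ref{refWordDefinition}: for every $\varsx \in \varset$, if $w = w_1 \varsx w_2$, then $w_1$ must contain a complete definition $\open{\varsx} v_2 \close{\varsx}$, i.\,e., the first occurrence of the terminal symbol $\varsx$ must be preceded by its closing marker $\close{\varsx}$.

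First I would characterise the negation. A subword-marked word $w$ over $\Sigma \cup \varset$ and $\varset$ fails to be a ref-word if and only if there is an $\varsx \in \varset$ such that $w = w_1 \varsx w_2$ where $w_1$ does not contain the factor $\close{\varsx}$ (equivalently $|w_1|_{\close{\varsx}} = 0$), since in a subword-marked word the definition $\open{\varsx} \ldots \close{\varsx}$, if present, is unique and the reference is illegal precisely when it occurs before $\close{\varsx}$. Next I would construct, for each fixed $\varsx$, a small gadget $\NFA$ that accepts exactly those words containing an occurrence of the terminal $\varsx$ not preceded by $\close{\varsx}$: it reads arbitrary symbols while no $\close{\varsx}$ has yet been seen, and accepts upon reading the terminal $\varsx$ in that ``pre-definition'' phase. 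Taking the union over all $\varsx \in \varset$ yields an $\NFA$ $N$ recognising all words over $\Sigma \cup \Gamma_\varset \cup \varset$ that violate the reference condition. As in Proposition~\ref{CheckSubwordMarkedLanguagesProposition}, $N$ uses a constant number of states per variable and outdegree $\bigO(|\Sigma \cup \Gamma_\varset \cup \varset|) = \bigO(|\varset|)$, so $|N| = \bigO(|\varset|^2)$.

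Finally I would form the product automaton of $M$ and $N$ and check it for non-emptiness: $\lang(M)$ is a ref-language if and only if $\lang(M) \cap \lang(N) = \emptyset$, because a single word of $\lang(M)$ violating the condition suffices to disqualify the whole language (and conversely, if no word of $\lang(M)$ is accepted by $N$, then every accepted word satisfies Definition~\ref{refWordDefinition}). The product has size $\bigO(|M||N|) = \bigO(|M||\varset|^2)$ and non-emptiness is decidable in linear time in the automaton size, giving the claimed $\bigO(|M||\varset|^2)$ bound.

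The only subtle point — and the step I would be most careful with — is justifying the clean ``first occurrence'' characterisation of the negation. Because we are promised that $\lang(M)$ is already a subword-marked language over $\Sigma \cup \varset$, each $\varsx$ has at most one matched pair $\open{\varsx}\close{\varsx}$, so the reference condition collapses to the simple local test ``every terminal-$\varsx$ occurrence comes after $\close{\varsx}$'', which an $\NFA$ can track with a single bit per variable. This is precisely why the gadget stays constant-size per variable; without the subword-marked-language promise one would have to also re-verify the well-parenthesisation, but here that work has already been done.
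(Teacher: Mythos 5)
Your proposal is correct and follows essentially the same route as the paper's proof: characterise the negation as ``some occurrence of the terminal $\varsx$ with $|w_1|_{\close{\varsx}} = 0$ in $w = w_1 \varsx w_2$'' (valid because the subword-marked-language promise guarantees at most one matched pair $\open{\varsx}\cdots\close{\varsx}$), build an $\NFA$ $N$ of size $\bigO(|\varset|^2)$ recognising the violating words, and test emptiness of the product of $M$ and $N$ in time $\bigO(|M||\varset|^2)$. Your closing remark spelling out why the promise lets the gadget track a single bit per variable is a justified elaboration of a point the paper leaves implicit.
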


\begin{proof}
Since $M$ accepts a subword-marked language over $\Sigma \cup \varset$
and $\varset$, it does \emph{not} accept a ref-language over $\Sigma$
and $\varset$ if and only if it accepts a word $w \in (\Sigma\cup \varset \cup \Gamma_{\varset})^*$ such that, for some $\varsx \in \varset$, $w = w_1 \varsx w_2$ with $|w_1|_{\close{\varsx}} = 0$.
It is straightforward to construct an $\NFA$ $N$ that accepts all words over $\Sigma \cup \Gamma_{\varset} \cup \varset$ with this property, and also $|N| = \bigO(|\varset|^2)$ (per each variable, we need a constant number of states with an outdegree bounded by $\bigO(|\Sigma \cup \Gamma_{\varset} \cup \varset|) = \bigO(|\varset|)$). Therefore, we can check whether $\lang(M)$ is a ref-language over $\Sigma$ and $\varset$ by constructing the cross-product automaton of $M$ and $N$ and check whether it accepts $\emptyset$. This can be done in time $\bigO(|M|\cdot|N|) = \bigO(|M|\cdot|\varset|^2)$.
\end{proof}

\subsection{Refl-Spanners} 

We shall now define spanners based on regular ref-languages.

\begin{defi}[Refl-Spanners]
Let $L$ be a ref-language (over $\Sigma$ and $\varset$). Then the \emph{refl-spanner} $\llbracket L \rrbracket_{\derefmark}$ (over $\Sigma$ and $\varset$) is defined by $\llbracket L \rrbracket_{\derefmark} = \llbracket \deref{L} \rrbracket$. For a class $\mathfrak{L}$ of ref-languages, we set $\llbracket \mathfrak{L} \rrbracket_{\derefmark} = \{\llbracket L \rrbracket_{\derefmark} \mid L \in \mathfrak{L}\}$. The class of \emph{refl-spanners} is $\reflspanners = \llbracket \regrefLanguages \rrbracket_{\derefmark}$.
\end{defi}   

Since any regular ref-language $L$ over $\Sigma$ and $\varset$ is also a regular subword-marked language over $\Sigma \cup \varset$ and $\varset$, $\llbracket L \rrbracket$ is, according to Definition~\ref{subwordMarkedLanguageSpannersDefinition}, also a well-defined spanner (but over $\Sigma \cup \varset$ and $\varset$). However, whenever we are concerned with a ref-language $L$ over $\Sigma$ and $\varset$ that is not also a subword-marked language over $\Sigma$ and $\varset$ (i.\,e., $L$ contains actual occurrences of symbols from $\varset$), then we are never interested in $\llbracket L \rrbracket$, but always in $\llbracket L \rrbracket_{\derefmark}$. Consequently, by a slight abuse of notation, we denote in this case $\llbracket L \rrbracket_{\derefmark}$ simply by $\llbracket L \rrbracket$. \par
For a regular ref-language $L$ the corresponding refl-spanner $\llbracket L \rrbracket$ produces for a given $w \in \Sigma^*$ all $(\varset, w)$-tuples $t$ that are represented by some $u \in \deref{L}$ with $\getWord{u} = w$, or, equivalently, all $(\varset, w)$-tuples $t$ with $t = \getSpanTuple{\deref{v}}$ and $\getWord{\deref{v}} = w$ for some $v \in L$. It is intuitively clear that the use of variable references of refl-spanners provides a functionality that resembles string-equality selections for core spanners. However, there are also obvious differences between these two spanner formalisms (as already mentioned in the introduction and as investigated in full detail in Section~\ref{sec:ExpressivePower}).\par
Before moving on to the actual results about refl-spanners, we shall briefly discuss another example. 

\begin{exa}\label{JACMExample}
Assume that we have a document $w = \# p_1 \# p_2 \# \ldots \# p_n \#$, where each $p_i \in \Sigma^*$ is the title page of a scientific paper and $\# \notin \Sigma$ is some separator symbol (e.\,g., a list of all title pages of papers in the issues of \emph{Journal of the ACM} (JACM) from 2000 to 2010). Let $\Sigma' = \Sigma \cup \{\#\}$. We define a refl-spanner 
\[
\alpha \quad = \quad\textsf{$\Sigma'^* \: \# \: \Sigma^*$ email: $\open{\varsx} \Sigma^+ \close{\varsx}$ @ $r_{\textsf{dom}} \: \Sigma^* \: \# \: \Sigma'^*$ email: $\varsx$ @ $r_{\textsf{dom}} \: \Sigma^* \: \# \: \Sigma'^*$}\,,
\]
where $r_{\textsf{dom}} = \textsf{hu-berlin.de} \altop \textsf{tu-berlin.de} \altop \textsf{fu-berlin.de}$ is a regular expressions that matches the email-domains of the three universities in Berlin. It can be easily seen that $\llbracket \alpha \rrbracket (w)$ contains the first parts of the email-addresses of authors that have at least two JACM-papers between year 2000 and 2010 while working at a university in Berlin.
\end{exa}

\section{Evaluation and Static Analysis of Refl-Spanners}\label{sec:decisionProblems}

The problem $\TestProb$ is to decide whether $t \in S(w)$ for a given
spanner $S$ over $\Sigma$ and $\varset$, $w \in \Sigma^*$ and
$(\varset, w)$-tuple $t$. The problem $\NonemptProb$ is to decide whether $S(w) \neq \emptyset$ for given $S$ and $w$. 
For the problem $\SatProb$, we get a single spanner $S$ as input and ask whether there is a $w \in \Sigma^*$ with $S(w) \neq \emptyset$, and for the problems $\HierProb$ and $\FunctProb$, we get a single spanner $S$ and ask whether $S$ is hierarchical, or whether $S$ is functional, respectively. 
Finally, $\ContProb$ and $\EquProb$ is to decide whether $S_1 \subseteq S_2$ or $S_1 = S_2$, respectively, for given spanners $S_1$ and $S_2$. The input refl-spanners are always given as $\NFA$ that accept a ref-language. Recall that a summary of our results is provided by Table~\ref{comparisonTable} in the introduction.\par
We first define some concepts that shall be helpful for dealing with the issue that different subword-marked words $w$ and $w'$ with $\getWord{w} = \getWord{w'}$ can nevertheless describe the same span-tuple, i.\,e., $\getSpanTuple{w} = \getSpanTuple{w'}$, since the order of consecutive occurrences of symbols from $\Gamma_{\varset}$ has no impact on the represented span-tuple. \par
Let $w$ be a subword-marked word over $\Sigma$ and $\varset$, let $t = \getSpanTuple{w}$, and let $\getWord{w} = w_1 w_2 \ldots w_n$ with $w_i \in \Sigma$ for every $i \in [n]$. The \emph{marker-set representation} of $w$ is a word $\msrep{w}$ over the extended alphabet $\Sigma\cup\Pot{\Gamma_\X}$, where each subset of $\Gamma_{\X}$ serves as a letter of the alphabet, as follows. For each $i\in \set{1,\ldots,n{+}1}$ let $\Gamma_i$ be the set comprising of all symbols $\open{\x}$ where $\x\in\Dom{t}$ and $t(\x)=\spann{i}{j}$ for some $j$, and all symbols $\close{\x}$ where $\x\in \Dom{t}$ and $t(\x)=\spann{j}{i}$ for some $j$.  Then $\msrep{w} = (\Gamma_1)^{|\Gamma_1|} w_1 (\Gamma_2)^{|\Gamma_2|} w_2 \ldots (\Gamma_n)^{|\Gamma_n|} w_n (\Gamma_{n+1})^{|\Gamma_{n+1}|}$. Let us illustrate these definitions with an example.\footnote{Also note that the marker-set representation is similar to the extended vset-automata used in~\cite{AmarilliEtAl2021}.}

\begin{exa}\label{msrepExample}
If $\X = \{\varsx_1, \varsx_2, \varsx_3\}$, $w = \open{\varsx_1} \open{\varsx_3} \ta \tb \close{\varsx_1} \open{\varsx_2} \tc \tb \close{\varsx_2} \ta \tb \tc \ta \close{\varsx_3}$ (and therefore $\getSpanTuple{w} = (\spann{1}{3}, \spann{3}{5}, \spann{1}{9})$), then $\Gamma_1 = \{\open{\varsx_1}, \open{\varsx_3}\}$, $\Gamma_3 = \{\close{\varsx_1}, \open{\varsx_2}\}$, $\Gamma_5 = \{\close{\varsx_2}\}$, $\Gamma_9 = \{\close{\varsx_3}\}$, and $\Gamma_i = \emptyset$ for every $i \in \{2,4,6,7,8\}$. Consequently, 
\begin{equation*}
\msrep{w} = \{\open{\varsx_1}, \open{\varsx_3}\} \{\open{\varsx_1}, \open{\varsx_3}\} \ta \tb \{\close{\varsx_1}, \open{\varsx_2}\} \{\close{\varsx_1}, \open{\varsx_2}\} \tc \tb \{\close{\varsx_2}\} \ta \tb \tc \ta \{\close{\varsx_3}\}\,.
\end{equation*}
\end{exa}

We also extend the function $\msrep{\cdot}$ in the natural way to a subword-marked language $L$ by setting $\msrep{L} = \{\msrep{w} \mid w \in L\}$. \par
As indicated above, the main idea of the marker-set representation is to represent a subword-marked word in a way that abstracts from the differences caused by reading the same markers from $\Gamma_{\varset}$ just in a different order, which, for subword-marked words considered as strings makes a difference, but has no effect on the represented document or the represented span-tuple. Such a representation could also be defined without repeating each $\Gamma_i$ for $|\Gamma_i|$ times, i.\,e., by defining $\msrep{w} = \Gamma_1 w_1 \Gamma_2 w_2 \ldots \Gamma_n w_n \Gamma_{n+1}$ (or, illustrated with Example~\ref{msrepExample} from above, as $\msrep{w} = \{\open{\varsx_1}, \open{\varsx_3}\} \ta \tb \{\close{\varsx_1}, \open{\varsx_2}\} \tc \tb \{\close{\varsx_2}\} \ta \tb \tc \ta \{\close{\varsx_3}\}$). However, the definition with repeated occurrences of $\Gamma_i$ will be more convenient for proving the results of this section. Later on, in Section~\ref{nonOverlappingCoreSpanner}, we shall also use this simplified version of the marker-set representation.\par
The next propositions point out the relevant properties of the marker-set representation. The first one is a direct consequence of the definition of the marker-set representation, the second one follows from the first one. We state these propositions mainly for demonstrating the meaning of the marker-set representation and for general illustrational purposes; for our results, the ref-word analogues of Proposition~\ref{setRepSimpleProp}~and~\ref{setRepInclusionProp} (formally stated as Lemmas~\ref{setRepSimpleLemmaRefVersion}~and~\ref{setRepInclusionLemmaRefVersion}) are more important and shall be proven in detail.

\begin{prop}\label{setRepSimpleProp}
Let $w_1, w_2$ be subword-marked words. Then $\getWord{w_1} = \getWord{w_2}$ and $\getSpanTuple{w_1} = \getSpanTuple{w_2}$ if and only if $\msrep{w_1} = \msrep{w_2}$.
\end{prop}

\begin{prop}\label{setRepInclusionProp}
Let $L_1, L_2$ be subword-marked languages. Then $\llbracket L_1 \rrbracket \subseteq \llbracket L_2 \rrbracket$ if and only if $\msrep{L_1} \subseteq \msrep{L_2}$.
\end{prop}

Note that the above proposition states that $\msrep{L_1} \subseteq \msrep{L_2}$ is characteristic for the inclusion of the corresponding spanners (i.\,e., $\llbracket L_1 \rrbracket \subseteq \llbracket L_2 \rrbracket$), but not for the inclusion of the corresponding subword-marked languages $L_1$ and $L_2$. More precisely, $\msrep{L_1} \subseteq \msrep{L_2}$ is not a sufficient condition for $L_1  \subseteq  L_2$ (although, as can be easily seen, it is a necessary condition).

\subsection{Model Checking and Non-Emptiness}

We now consider the problem $\TestProb$ (recall that this is the problem to decide whether $t \in S(w)$ for given spanner $S$ over $\Sigma$ and $\varset$, $w \in \Sigma^*$ and $(\varset, w)$-tuple $t$).

\begin{thm}\label{checkingTheorem}
$\TestProb$ for $\reflspanners$ can be solved in time $\bigO((|w| + |\X|) |M| \log(|\X|))$, where $M$ is an $\NFA$ that represents a refl-spanner $S = \sem{\lang(M)}$ over $\Sigma$ and $\X$, $w\in\Sigma^*$, and $t$ is an $(\X,w)$-tuple. 
\end{thm}

\begin{proof}
For convenience, we will assume that every symbol in $\Gamma_\X$ occurs in some transition of $M$ (this is w.l.o.g., because if $\open{\x}$ or $\close{\x}$ does not, we can safely remove $\x$ from $\X$ and reject $t$ in case that $\x\in\Dom{t}$). In particular, $|\X|\leq |M|$. \par
The proof is split in two parts. First, we consider the \emph{regular case}, where $\lang(M)$ is not a ref-language, but just a subword-marked language over $\Sigma$ and $\varset$ (i.\,e., $\llbracket \lang(M) \rrbracket$ is a regular spanner). The algorithm for this case is then used for the general \emph{refl case} where $\lang(M)$ is a ref-language over $\Sigma$ and $\varset$ (i.\,e., $\llbracket \lang(M) \rrbracket$ is a refl-spanner).\par
\textbf{Regular case:} We shall denote by $\msrep{M}$ the $\NFA$ $M$ with the modification that we interpret each transition labelled with some $\sigma \in \Gamma_\X$ as a transition that can read any symbol $\Gamma \subseteq \Gamma_\X$ with $\sigma \in \Gamma$. This means that $\msrep{M}$ accepts a language over the alphabet $\Sigma \cup \Pot{\Gamma_{\X}}$ and, moreover, for every $w \in \lang(M)$ and arbitrary subword-marked word $w'$ with $\msrep{w} = \msrep{w'}$ (this includes $w$ itself), we have that $w' \in \lang(\msrep{M})$.

By definition, $t \in \llbracket \lang(M) \rrbracket(w)$ if and only if there exists a subword-marked word $v \in \lang(M)$ such that $\getSpanTuple{v} = t$ and $\getWord{v}=w$. This latter property can be checked as follows.\par

We first combine $t$ and $w$ into a subword-marked word $w'$, i.\,e., $\getSpanTuple{w'} = t$ and $\getWord{w'}=w$ (the order in which we place consecutive occurrences of symbols from $\Gamma_{\X}$ is not relevant). In the following, we shall consider $\msrep{w'}$ (the construction of which we discuss later on), and the automaton $\msrep{M}$ as explained above. 

\begin{clm}\label{checkingTheorem:firstClaim}
The following statements are equivalent:\\
(1) There exists a subword-marked word $v \in \lang(M)$ such that $\getSpanTuple{v} = t$ and $\getWord{v}=w$.\\
(2) $\msrep{w'} \in \lang(\msrep{M})$.
\end{clm}

\begin{proof}
We start with ``(1) $\implies$ (2)'' and assume that there exists a subword-marked word $v \in \lang(M)$ such that $\getSpanTuple{v} = t$ and $\getWord{v}=w$. The fact $\getWord{v}=w$ means that $v$ equals $w$ with some (possibly empty) factors over $\Gamma_{\varset}$ between the symbols of $\Sigma$. Moreover, $\getSpanTuple{v} = t$ means that if in $v$ we replace each maximal factor $\sigma_1 \sigma_2 \ldots \sigma_k$ over $\Gamma_{\X}$ with $\{\sigma_1, \sigma_2, \ldots, \sigma_k\}^k$, we get exactly $\msrep{w'}$, and, by construction of $\msrep{M}$, the fact that $v \in \lang(M)$ also implies that $\msrep{w'} \in \lang(\msrep{M})$ (i.\,e., $\msrep{M}$ accepts $\msrep{w'}$ via the same sequence of states).\par
Next, we prove ``(2) $\implies$ (1)'' and assume that $\msrep{w'} \in
\lang(\msrep{M})$, which means that there is an accepting run of $\msrep{M}$ on $\msrep{w'}$. By construction, each transition of this run that reads a symbol $\Gamma \subseteq \Gamma_{\varset}$ is a re-interpretation of an original $\sigma$-transition of $M$ for some $\sigma \in \Gamma$. Consequently, for every symbol $\Gamma \subseteq \Gamma_{\varset}$ of $\msrep{w'}$, there is a well-defined \emph{representative} $\sigma \in \Gamma$. If we replace every symbol $\Gamma \subseteq
\Gamma_{\X}$ in $\msrep{w'}$ by its representative, then we obtain a
word $v$ over $\Sigma \cup \Gamma_{\X}$, and by construction of
$\msrep{M}$, we also have $v \in \lang(M)$, which means that $v$ is a
subword-marked word. By construction of $\msrep{w'}$, it is also
obvious that $\getWord{v} = \getWord{\msrep{w'}} = w$. Moreover, in
the construction of $v$, every factor in $\msrep{w'}$ of the form
$\Gamma^{|\Gamma|}$ with $\Gamma \subseteq \Gamma_\X$, is replaced by
a factor $\sigma_1 \sigma_2 \ldots \sigma_{|\Gamma|}$ with $\sigma_i
\in \Gamma$ for every $i \in [|\Gamma|]$. Since $v$ is a
subword-marked word, this means that $\{\sigma_1, \sigma_2, \ldots,
\sigma_{|\Gamma|}\} = \Gamma$, which means that $(\sigma_1, \sigma_2,
\ldots, \sigma_{|\Gamma|})$ is some linear ordering of $\Gamma$. By
construction of $\msrep{w'}$, this directly implies that
$\getSpanTuple{v} = t$.  
\end{proof}

Claim~\ref{checkingTheorem:firstClaim} means that in order to check whether $t \in \llbracket \lang(M) \rrbracket(w)$, it is sufficient to check whether $\msrep{w'} \in \lang(\msrep{M})$. Therefore, we shall now discuss how we can efficiently check $\msrep{w'} \in \lang(\msrep{M})$. First, we have to construct (a suitable representation of) $\msrep{w'}$, which we do in the following way. \par
Let us assume that $w' = \gamma_1 w_1 \gamma_2 w_2 \ldots \gamma_n w_n \gamma_{n+1}$, where $w = w_1 w_2 \ldots w_n$ and the $\gamma_1, \gamma_2, \ldots, \gamma_{n+1}$ are (possibly empty) factors over $\Gamma_{\X}$. Consequently, 
\begin{equation*}
\msrep{w'} = (\Gamma_1)^{|\Gamma_1|} w_1 (\Gamma_2)^{|\Gamma_2|} w_2 \ldots (\Gamma_n)^{|\Gamma_n|} w_n (\Gamma_{n+1})^{|\Gamma_{n+1}|}
\end{equation*}
according to the definition of $\msrep{\cdot}$. Let us further assume that we have some fixed order on $\Gamma_{\X}$.
We initialise an array $A$ of size $n + 1$ in time $O(n)$. Then we move over $w'$ from left to right, and whenever we encounter some $\gamma_i = \sigma_1 \sigma_2 \ldots \sigma_{n_i}$, we construct a binary search tree (with respect to the order on $\Gamma_{\X}$) of the elements of $\Gamma_i = \{\sigma_1, \sigma_2, \ldots, \sigma_{n_i}\}$, and we store this search tree in entry $i$ of $A$. For this, we have to consider each symbol from $w'$ only once and we have to insert at most $|\Gamma_{\X}|$ symbols from $\Gamma_{\X}$ into a binary search tree of size at most $|\Gamma_{\X}|$. Consequently, this can be done in time $\bigO(|w'| + (|\X| \log(|\X|)) = 
\bigO(|w| + (|\X| \log(|\X|))$. We then represent $\msrep{w'}$ as the
string $1^{|\Gamma_1|} w_1 2^{|\Gamma_2|} w_2 \ldots n^{|\Gamma_n|}
w_n (n+1)^{|\Gamma_{n+1}|}$ and the array $A$, where the symbols $i$
point to the $i^{\text{th}}$ entry of $A$.

With this representation of $\msrep{w'}$, we can read $\msrep{w'}$ with $\msrep{M}$, and every transition can be evaluated in constant time for symbols from $\Sigma$, and in time $\bigO(\log(|\X|))$ for symbols from $\Gamma_{\X}$. More precisely, transitions labelled with symbols from $\Sigma$ can be evaluated in constant time by directly comparing the next input symbol with the transition label from $\Sigma$, while for evaluating transition labels from $\Gamma_{\X}$, we have to check whether the input symbol $\Gamma \subseteq \Gamma_{\X}$ (from $\msrep{w'}$) contains the transition label, which, since we have $\Gamma$ represented as binary search tree, can be done in time $\bigO(\log(|\X|))$. Consequently, we can check whether $\msrep{M}$ accepts $\msrep{w'}$ in time $\bigO(|\msrep{w'}|\cdot|\msrep{M}| \cdot \log(|\X|)) = \bigO((|w| + |\X|) |M| \log(|\X|))$. Since this also dominates the time needed for constructing the representation of $\msrep{w'}$, the total running time is $\bigO((|w| + |\X|) |M| \log(|\X|))$.\par
\textbf{Refl case:} In this case, $t \in \llbracket \lang(M) \rrbracket(w)$ if and only if there is some $v \in \lang(M)$ such that $\getSpanTuple{\deref{v}} = t$ and $\getWord{\deref{v}}=w$. We next show how this latter property can be checked.\par
For every $\x\in \Dom{t}$ let $w_\x$ be the subword of $w$ that corresponds to $t(\x)$, i.e., $w_\x= w\spann{i}{j}$ if $\spann{i}{j}=t(\x)$. Next, we obtain an $\NFA$ $M'$ from $M$ by replacing every $\varsx$-transition for every $\varsx \in \X$ by a path of transitions labelled with the word $w_{\varsx}$. It is important to observe that $M'$ accepts a subword-marked language over $\Sigma$ and $\varset$ (indeed, replacing the references of ref-words by some factors over $\Sigma$ necessarily yields valid subword-marked words).

\begin{clm}\label{checkingTheorem:secondClaim}
The following statements are equivalent:\\
(1) There exists a ref-word $v \in \lang(M)$ such that $\getSpanTuple{\deref{v}} = t$ and $\getWord{\deref{v}}=w$.\\
(2) There exists a subword-marked word $v' \in \lang(M')$ such that $\getSpanTuple{v'} = t$ and $\getWord{v'}=w$. 
\end{clm}

\begin{proof}
We start with ``(1) $\implies$ (2)'' and assume that there exists some $v \in \lang(M)$ such that $\getSpanTuple{\deref{v}} = t$ and $\getWord{\deref{v}} = w$. Since $\getSpanTuple{\deref{v}} = t$, we know that $\deref{v}$ equals the word obtained from $v$ by replacing each occurrence of a reference $\varsx$ by the word $w_{\varsx}$. Hence, defining $v' := \deref{v}$, we have $v' \in \lang(M')$, and, by assumption, $\getSpanTuple{v'} = t$ and $\getWord{v'} = w$.\par
Next, we prove ``(2) $\implies$ (1)'' and assume that there exists some $v' \in \lang(M')$ such that $\getSpanTuple{v'} = t$ and $\getWord{v'}=w$. We obtain a word $v$ from the accepting run of $M'$ on $v'$ as follows. We simply trace all the symbols in the order read by the transitions, but whenever we read some $w_{\varsx}$ by some of the paths that $M$'s $\varsx$-transitions have been replaced with, we use the symbol $\varsx$ instead of the factor $w_{\varsx}$. 
By construction of $M'$, we have $v \in \lang(M)$, which means that $v$ is a ref-word and therefore $\deref{v}$ is well-defined. Furthermore, since $\getSpanTuple{v'} = t$ and by construction of the $w_{\varsx}$, we also have $\deref{v} = v'$. Finally, by assumption, we have $\getSpanTuple{v'} = t$ and $\getWord{v'}=w$, which implies $\getSpanTuple{\deref{v}} = t$ and $\getWord{\deref{v}}=w$.
\end{proof}

We observe that Claim~\ref{checkingTheorem:secondClaim} means that in
order to check $t \in \llbracket \lang(M) \rrbracket(w)$, it is
sufficient to check whether there exists a subword-marked word $v \in
\lang(M')$ (recall that $M'$ accepts a subword-marked language) such that $\getSpanTuple{v} = t$ and $\getWord{v}=w$. This latter task is exactly the \emph{regular case} discussed above; thus, it can be done in time $\bigO((|w| + |\X|) |M'| \log(|\X|))$.
However, since we obtained $M'$ from $M$ by replacing some transitions by paths of $\bigO(|w|)$ transitions, we can only bound $|M'| = \bigO(|w|\cdot|M|)$. This yields an overall upper bound of $\bigO((|w| + |\X|) |w|\cdot|M|\cdot \log(|\X|))$, which is quadratic in $|w|$. We shall now discuss how we can use a standard data-structure for strings in order to implement our algorithmic idea more efficiently.\par
Let us recall that the algorithm for the refl case first constructs $M'$ by replacing $\varsx$-transitions by $w_{\varsx}$-paths, then we re-interpret the $\Gamma_\X$-transitions of $M'$ as described in the \emph{regular case}, and then we check whether this $\NFA$ accepts $\msrep{w'}$ (for a suitable $w'$ obtained from $w$ and $t$). The whole improvement consists in just keeping the $\varsx$-transitions instead of replacing them by $w_{\varsx}$-paths, but then, when checking whether $\msrep{w'}$ is accepted, we nevertheless want to treat $\varsx$-transitions as $w_{\varsx}$-transitions, but in constant time (i.\,e., we want to check whether the remaining input starts with $w_{\varsx}$ in constant time), for which we need the data structure. More precisely, we will compute a data structure that allows us to check, for every position $i$ of $\msrep{w'}$ and every $\varsx \in \varset$, whether $w_{\varsx}$ is a prefix of $\msrep{w'}[i..|\msrep{w'}|]$ in constant time. 
With this addition to the algorithm, the overall running time is still $\bigO((|w| + |\X|) |M'| \log(|\X|))$
(note that for this it is vital that we can check \emph{in constant time} whether the remaining input starts with $w_{\varsx}$), but we have $|M'| = \bigO(|M|)$, which leads to the overall running time $\bigO((|w| + |\X|) |M| \log(|\X|))$. Let us now give the details for the improvement.\par
For a word $u$ and for every $i \in [|w|{+}1]$ let $\suff_i(u)$ be the suffix of $u$ starting at position $i$ of $u$. A \emph{longest common extension} data structure $\lce_{u}$ for a word $u$ is defined such that $\lce_{u}(i, j)$, for $i, j$ with $1 \leq i < j \leq |u|$, is the length of the longest common prefix of $\suff_i(u)$ and $\suff_j(u)$. It is known that (see, e.\,g.,~\cite{FischerHeun2006}) $\lce_{u}$ can be constructed in time $\bigO(|u|)$ and afterwards $\lce_{u}(i, j)$ can be retrieved in constant time for arbitrary given $i$ and $j$. \par
We construct $\lce_{w\hat{w}}$, where $\hat{w}$ is obtained from
$\msrep{w'}$ by replacing all symbols from $\Pot{\Gamma_{\X}}$ by a
new symbol $\#$, i.\,e., $\# \notin \Sigma \cup \Gamma_{\X}$ (note
that $|\msrep{w'}| = |\hat{w}|$ and these words only differ with
respect to symbols $\Pot{\Gamma_{\X}}$ and $\#$). Now, for a given
$\varsx \in {\varset}$ with $t(\varsx) =
\spann{i_{\varsx}}{j_{\varsx}}$, and position $i$ of $\msrep{w'}$, we
have that $w_{\varsx}$ is a prefix of $\msrep{w'}[i..|\msrep{w'}|]$ if
and only if $\lce_{w\hat{w}}(i_{\varsx}, |w| + i) \geq j_{\varsx} -
i_{\varsx}$. Finally, observe that $\lce_{w\hat{w}}$ can be
constructed in time $\bigO(|w\hat{w}|) = \bigO(|w| + |\varset|)$.
This completes the proof of Theorem~\ref{checkingTheorem}.
\end{proof}

We discuss a few particularities about Theorem~\ref{checkingTheorem}. The result points out that $\TestProb$ for refl-spanners has the same complexity as for regular spanners,\footnote{To the best knowledge of the authors, the bound that is provided by Theorem~\ref{checkingTheorem} is also the currently best upper bound for model checking of regular spanners in the literature.} which can be considered surprising, given the fact that refl-spanners cover a large class of core spanners and are generally much more expressive than regular spanners. Moreover, for core spanners the problem $\TestProb$ is $\npclass$-complete. Another interesting fact is that in data complexity, $\TestProb$ can be solved in linear time for both regular as well as refl-spanners (whereas the latter result is slightly more complicated, since it depends on using the longest common extension data structure).\par
Next, we consider the problem $\NonemptProb$ (i.\,e., to decide whether $S(w) \neq \emptyset$ for given $S$ and $w$), for which we can obtain the following theorem by standard methods.

\begin{thm}\label{thm:nonempRefl}
$\NonemptProb$ for $\reflspanners$ is $\npclass$-complete.
\end{thm}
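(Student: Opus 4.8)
For $\npclass$-hardness the plan is to reduce from the \emph{membership problem for pattern languages} (the pattern matching problem with variables), which the introduction already invokes as a well-known $\npclass$-complete problem (see~\cite{ManeaSchmid2019}) for any fixed alphabet with at least two letters. Given a pattern $p_1 p_2 \cdots p_m$ with each $p_i \in \Sigma \cup \X$ and a word $w \in \Sigma^*$, I would build a regular expression $\beta$ over $\Sigma \cup \X \cup \Gamma_{\X}$ by scanning the pattern from left to right: a terminal $p_i \in \Sigma$ is copied verbatim; the \emph{first} occurrence of a variable $\varsx \in \X$ is replaced by the fragment $\open{\varsx}\, \Sigma^+\, \close{\varsx}$; and every \emph{later} occurrence of $\varsx$ is replaced by the reference $\varsx$. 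Then $\lang(\beta)$ is a regular ref-language over $\Sigma$ and $\X$, and an $\NFA$ $M$ for it has size polynomial in $m$. The words $v \in \lang(\beta)$ correspond exactly to substitutions $h:\X \to \Sigma^+$, and by the placement of definitions at first occurrences we get $\getWord{\deref{v}} = h(p_1 \cdots p_m)$. Hence $\sem{\lang(M)}(w) \neq \emptyset$ if and only if $w$ lies in the pattern language of $p_1 \cdots p_m$, which establishes the reduction.

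For membership in $\npclass$ the plan is to guess a witness ref-word and verify it, so the crux is a polynomial bound on the length of a shortest witness. By definition, $\sem{\lang(M)}(w) \neq \emptyset$ if and only if there is a $v \in \lang(M)$ with $\getWord{\deref{v}} = w$. The key observation is that dereferencing realises a \emph{flat} morphism: writing $\getWord{v} \in (\Sigma \cup \X)^*$ for $v$ with its meta-symbols erased, one has $w = h(\getWord{v})$, where $h$ fixes every letter of $\Sigma$ and maps each $\varsx$ to its fully dereferenced value in $\Sigma^*$. Since $h$ sends every $\Sigma$-letter and every reference with nonempty value to a nonempty, and pairwise disjoint, block of $w$, the number of $\Sigma$-letters plus the number of nonempty references occurring in $v$ is at most $|w|$. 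As $v$ is a subword-marked word, it contains at most $2|\X|$ symbols from $\Gamma_{\X}$. The only remaining symbols are references to variables with empty value, which contribute nothing to $w$.

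To bound these empty references I would fix an accepting run of $M$ on $v$ and delete any cycle that reads only $\eword$-transitions and empty references: such a cycle contains no symbol of $\Gamma_{\X}$ (so it destroys no definition), leaves $\getWord{\deref{v}}$ unchanged, and keeps the word a ref-word, so a \emph{shortest} witness has no such cycle and hence reads fewer than $|Q|$ empty references between any two ``useful'' symbols (terminals, nonempty references, or meta-symbols). This yields $|v| = \bigO(|M|(|w| + |\X|))$. A nondeterministic algorithm then guesses such a $v$ together with an accepting run of $M$, checks in polynomial time that $v$ is a ref-word and that the run is accepting, and verifies $\getWord{\deref{v}} = w$ by computing $\deref{v}$ with its length capped at $|w|+1$ (rejecting as soon as the cap is exceeded). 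I expect the main obstacle to be the witness-length bound itself, and in particular the ``flat morphism'' view of dereferencing that makes the counting argument go through despite the fact that $\deref{v}$ can be exponentially longer than $w$; the pumping step removing empty references is routine but must be phrased carefully so as to preserve the ref-word property.
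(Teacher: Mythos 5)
Your proposal is correct, and its two halves relate to the paper's proof differently. The hardness half is essentially the paper's reduction: the paper also replaces each first occurrence of a variable $\varsx$ in the pattern by $\open{\varsx}\Sigma^*\close{\varsx}$ (using the erasing variant of pattern matching, where you use $\Sigma^+$ and the non-erasing variant; both are $\npclass$-complete, so this is immaterial). The membership half, however, takes a genuinely different route. The paper does \emph{not} bound witness ref-words: it guesses an $(\X,w)$-tuple $t$ and then re-enters the machinery of Theorem~\ref{checkingTheorem}, using its Claim~\ref{claim:testing1} that $t\in\sem{\lang(M)}(w)$ iff an accepting node is reachable in the graph $G$; since building $G$ outright would incur the $(2|\X|)!$ factor for the permutation edges, it instead guesses a path in $G$ together with per-edge witnesses and verifies them. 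You instead prove a polynomial bound $\bigO(|M|(|w|+|\X|))$ on a shortest witness $v\in\lang(M)$ with $\getWord{\deref{v}}=w$, via the observation that $\getWord{\deref{v}}=h(\getWord{v})$ for the morphism $h$ fixing $\Sigma$ and sending each variable to its dereferenced value (which survives nested references, as values are resolved inside-out), plus pumping out cycles that read only empty-valued references and $\eword$-transitions; you then guess $v$ and an accepting run and verify with length-capped dereferencing. Both arguments are sound. The paper's route buys brevity by piggybacking on already-established machinery; yours is self-contained and yields an explicit witness-length bound of independent interest. Two small remarks: the ref-word preservation you worried about comes for free, since the pumped word is still accepted by $M$ and $\lang(M)$ is by assumption a ref-language; and the length cap is cleanest when applied per variable value (each value is a factor of $w$, hence of length at most $|w|$), which keeps the bottom-up computation of $\deref{v}$ polynomial despite the potential exponential blow-up you correctly identify.
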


\begin{proof}
The $\npclass$-hardness can be proven in a similar way as Theorem~3.3 in~\cite{FreydenbergerHolldack2018}. It follows easily by a reduction from the problem of matching patterns with variables, which is as follows: for a given pattern $\alpha \in (\Sigma \cup \varset)^*$ and a word $w \in \Sigma^*$, decide whether there is a mapping $h \colon \varset \to \Sigma^*$ such that $\widehat{h}(\alpha) = w$, where $\widehat{h}$ is the natural extension of $h$ to a morphism $(\varset \cup \Sigma)^* \to (\varset \cup \Sigma)^*$, by letting $h$ be the identity on $\Sigma$. This problem is $\npclass$-complete (for more information see, e.\,g.,~\cite{FernauEtAl2016, FernauSchmid2015, FreydenbergerHolldack2018}).\par
Let $\alpha \in (\varset \cup \Sigma)^*$ be a pattern with variables and let $w \in \Sigma^*$. Moreover, let $\beta$ be the regular expression obtained from $\alpha$ by replacing each first occurrence of a variable $\varsx$ by $\open{\varsx} \Sigma^* \close{\varsx}$. It can be easily seen that $\lang(\beta)$ is a ref-language over $\Sigma$ and $\varset$, and that $w \in \lang(\alpha)$ if and only if $\llbracket \lang(\beta) \rrbracket(w) \neq \emptyset$. This shows that $\NonemptProb$ for $\reflspanners$ is $\npclass$-hard.\par
For membership in $\npclass$, let $S = \sem{\lang(M)}\in \reflspanners_{\Sigma, \varset}$ and let $w \in \Sigma^*$. In order to check whether $S(w) \neq \emptyset$, we guess some $(\varset, w)$-tuple $t$, which can be done in polynomial time. Then we check whether $t \in S(w)$, which can also be done in polynomial time (see Theorem~\ref{checkingTheorem}).
\end{proof}

\subsection{Static Analysis}

We start with the following straightforward result. Recall that for the problems $\SatProb$, $\HierProb$ and $\FunctProb$, we get a single spanner $S$, represented by an $\NFA$, as input and ask whether there is a $w \in \Sigma^*$ with $S(w) \neq \emptyset$, whether $S$ is hierarchical, or whether $S$ is functional, respectively.

\begin{thm}\label{thm:SatAndHier}\hfill
\begin{mea}
\item\label{item:SatRefl}
$\SatProb$ for $\reflspanners$ can be solved in time $\bigO(|M|)$,
\item\label{item:HierRefl}
$\HierProb$ for $\reflspanners$ can be solved in time $\bigO(|M|\cdot|\varset|^3)$, and
\item\label{item:FunctRefl}
$\FunctProb$ for $\reflspanners$ can be solved in time $\bigO(|M|\cdot|\varset|^2)$,
\end{mea}
\noindent
where $M$ is an $\NFA$ that describes a refl-spanner over $\Sigma$ and $\X$.
\end{thm}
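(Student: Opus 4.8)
The plan is to treat the three problems separately, reducing each either to a reachability/emptiness question on the input $\NFA$ or to the machinery of Proposition~\ref{CheckSubwordMarkedLanguagesProposition}. For $\SatProb$, recall that $S=\sem{\lang(M)}$, so that for every $v\in\lang(M)$ the word $\deref{v}$ is a subword-marked word over $\Sigma$ and hence $\getSpanTuple{\deref{v}}\in S(\getWord{\deref{v}})$. Thus $S$ is satisfiable if and only if $\lang(M)\neq\emptyset$, which a single reachability check on the (trimmed) graph of $M$ decides in time $\bigO(|M|)$.

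For $\FunctProb$ the key observation is that $\derefmark$ only substitutes reference symbols from $\varset$ by strings over $\Sigma$ and never inserts or deletes any symbol of $\Gamma_{\varset}$; hence $|v|_{\open{\varsx}}=|\deref{v}|_{\open{\varsx}}$ for every $\varsx\in\X$, so $\deref{v}$ is functional if and only if $v$ is functional. Consequently $S$ is functional if and only if every $v\in\lang(M)$ is functional, i.e.\ if and only if the subword-marked language $\lang(M)$ (over $\Sigma\cup\varset$ and $\varset$) is functional, which by Proposition~\ref{CheckSubwordMarkedLanguagesProposition} can be tested in time $\bigO(|M||\varset|^2)$.

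For $\HierProb$ the analogous shortcut fails, and this is the main obstacle. While $\derefmark$ preserves the left-to-right order of all parentheses, it alters the lengths of the captured factors, and in particular a reference may deref to $\eword$. Thus two variables whose parentheses overlap in $v$ — say in the pattern $\open{\varsx}\ldots\open{\varsy}\ldots\close{\varsx}\ldots\close{\varsy}$ — may nonetheless yield disjoint or even equal (empty) spans in $\deref{v}$, and conversely a reference expanding to a non-empty factor may turn a nested pair into a properly overlapping one; hence hierarchicality is \emph{not} $\derefmark$-invariant, and testing hierarchicality of $\lang(M)$ directly would be unsound. I would therefore decide \emph{non}-hierarchicality of $S$: test whether there is some $v\in\lang(M)$ and some pair $\varsx,\varsy$ whose parentheses overlap in $v$ and whose three delimited gaps (between $\open{\varsx}$ and $\open{\varsy}$, between $\open{\varsy}$ and $\close{\varsx}$, and between $\close{\varsx}$ and $\close{\varsy}$) are all non-empty after dereferencing. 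By Definition~\ref{derefDefinition}, a gap is non-empty after dereferencing precisely if, in $v$, it contains a symbol of $\Sigma$ or a reference $\varsz$ whose definition derefs to a non-empty word.

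To handle the latter, recursive, condition I would first compute, by a fixpoint, the set $P\subseteq\varset$ of \emph{productive} variables — those whose definition can deref to a non-empty word — where $\varsz$ enters $P$ as soon as $M$ admits a path inside a $\varsz$-definition reading a symbol of $\Sigma$ or a reference to an already-productive variable; this stabilises after at most $|\varset|$ rounds, in time polynomial in $|M|$ and $|\varset|$. Since witnessing non-hierarchicality only ever forces certain references to be \emph{non-empty} (never empty), the productive references on which a witness relies can all be realised non-empty simultaneously and consistently. I would then extend the non-hierarchicality detector underlying Proposition~\ref{CheckSubwordMarkedLanguagesProposition} to an $\NFA$ $N$ of size $\bigO(|\varset|^3)$ over $\Sigma\cup\varset\cup\Gamma_{\varset}$ that guesses the overlapping pair $\varsx,\varsy$ and, treating symbols of $\Sigma$ and references to $P$-variables as solid content and all other references as empty, checks that each of the three gaps carries solid content while forcing the definitions of the guessed productive references actually to be traversed productively along the run; the product of $M$ with $N$, tested for non-emptiness, then decides $\HierProb$ in time $\bigO(|M||\varset|^3)$. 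The crux, as flagged, is exactly this treatment of empty reference-expansions: certifying productivity of the few references witnessing the gaps without tracking the emptiness status of all variables (which would be exponential), and it is the fixpoint computation of $P$ together with the fact that a witness imposes only non-emptiness constraints that keeps the construction polynomial in $|\varset|$.
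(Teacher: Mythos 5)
Your treatments of $\SatProb$ and $\FunctProb$ coincide with the paper's: satisfiability reduces to $\lang(M)\neq\emptyset$, and functionality is $\derefmark$-invariant because dereferencing never inserts or removes parentheses, so Proposition~\ref{CheckSubwordMarkedLanguagesProposition} applies; the paper states both equivalences without your (correct) justification.

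For $\HierProb$ you deliberately diverge, and your central objection is valid --- in fact it points at a gap in the paper's \emph{own} proof, which simply asserts that ``it is straightforward to see that $S$ is hierarchical if and only if $\lang(M)$ is hierarchical'' and then invokes Proposition~\ref{CheckSubwordMarkedLanguagesProposition}. Only one direction of that equivalence holds: dereferencing preserves the order of all parentheses and never puts content into an empty gap, so a hierarchical ref-word always derefs to a hierarchical subword-marked word (for this reason your side remark that an expansion may turn a nested pair into a properly overlapping one is false --- nested parentheses always yield nested spans --- but this error is harmless, since your detector only ever looks for overlapping-parenthesis patterns, which is exactly where non-hierarchicality can arise). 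The converse direction fails precisely as you say: for $v=\open{\varsx}\close{\varsx}\open{\varsy}\ta\open{\varsz}\varsx\close{\varsy}\ta\close{\varsz}$ the spans $\spann{1}{3}$ and $\spann{2}{4}$ of $\varsy,\varsz$ in $v$ overlap properly, yet in $\deref{v}$ they shrink to the disjoint spans $\spann{1}{2},\spann{2}{3}$ of $\ta\ta$; so the refl-spanner represented by $\{v\}$ is hierarchical although the ref-language is not, and the paper's test would answer incorrectly. Your characterisation --- a witness is an overlapping pair whose three gaps are non-empty \emph{after} dereferencing, i.e.\ each gap contains a $\Sigma$-symbol or a reference that, \emph{in that same word}, derefs to a non-empty factor --- is the right one, and the observation that a witness imposes only non-emptiness (hence monotone) constraints is what makes a regular detection feasible. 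Two caveats, though. First, the global fixpoint $P$ is by itself unsound and, given your ``traversed productively along the run'' clause, redundant: $\varsz\in P$ only says that \emph{some} word of $\lang(M)$ gives $\varsz$ non-empty content, while the witness word may realise $\varsz$'s definition as $\eword$ (e.g.\ $\lang(M)$ containing both $\open{\varsz}\close{\varsz}\open{\varsx}\tb\open{\varsy}\varsz\close{\varsx}\tb\close{\varsy}$ and a word whose $\varsz$-definition reads $\ta$ describes a hierarchical spanner, yet a $P$-consulting detector would flag the first word); the load-bearing mechanism is per-run certification, which works left-to-right with one register per chain because definitions precede references. Second, the size bound is under-justified: guessing the pair $\varsx,\varsy$ costs $\bigO(|\varset|^2)$ and up to three interleaved certification chains need three registers, suggesting $|N|=\bigO(|\varset|^5)$ rather than $\bigO(|\varset|^3)$ --- still a polynomial algorithm, but recovering the theorem's stated $\bigO(|M||\varset|^3)$ would require an additional argument that you do not supply.
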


\begin{proof}
\ref{item:SatRefl}: Let $S \in \reflspanners_{\Sigma, \varset}$ be represented by an $\NFA$ $M$. \par
If $\lang(M) \neq \emptyset$, then there is a $v \in \lang(M)$ and therefore, by definition, $\getSpanTuple{\deref{v}} \in \llbracket \lang(M) \rrbracket(\getWord{v})$, which means that $S(\getWord{v}) \neq \emptyset$.
On the other hand, if there is some $w \in \Sigma^*$ with $S(w) \neq \emptyset$, then there is a span-tuple $t \in \llbracket \lang(M) \rrbracket(w)$, which means that there is a ref-word $v \in \lang(M)$ with $\getWord{\deref{v}} = w$ and $\getSpanTuple{\deref{v}} = t$. Thus, $\lang(M) \neq \emptyset$. Consequently, there is a word $w \in \Sigma^*$ with $S(w) \neq \emptyset$ if and only if $\lang(M) \neq \emptyset$. Checking whether $\lang(M) \neq \emptyset$ can be done in time $\bigO(|M|)$, by searching $M$ for a path from the start state to an accepting state.

\ref{item:HierRefl}:
Let $S \in \reflspanners_{\Sigma, \varset}$ be represented by an $\NFA$ $M$. It is straightforward to see that $S$ is hierarchical if and only if $\lang(M)$ is hierarchical. According to Proposition~\ref{CheckSubwordMarkedLanguagesProposition} the latter can be checked in time $\bigO(|M|\cdot|\varset|^3)$.

\ref{item:FunctRefl}:
Let $S \in \reflspanners_{\Sigma, \varset}$ be represented by an $\NFA$ $M$. It is straightforward to see that $S$ is functional if and only if $\lang(M)$ is functional. According to Proposition~\ref{CheckSubwordMarkedLanguagesProposition} the latter can be checked in time $\bigO(|M|\cdot|\varset|^2)$.
\end{proof}

With respect to Theorem~\ref{thm:SatAndHier}, it is worth recalling that for core spanners, $\SatProb$ and $\HierProb$ are $\pspaceclass$-complete, even for restricted classes of core spanners (see~\cite{FreydenbergerHolldack2018}).\par
Finally, we investigate the problems $\ContProb$ and $\EquProb$, which consist in deciding whether $S_1 \subseteq S_2$ or $S_1 = S_2$, respectively, for given spanners $S_1$ and $S_2$. \par
Recall that for core spanners $\ContProb$ and $\EquProb$ are not semi-decidable (see~\cite{FreydenbergerHolldack2018}). We now show that for refl-spanners we can achieve decidability of $\ContProb$ and $\EquProb$ by imposing suitable restrictions. \par
Let us first briefly discuss the case of regular spanners. Let $S_1$ and $S_2$ be regular spanners represented by $\NFA$s $M_1$ and $M_2$. Due to Proposition~\ref{setRepInclusionProp}, we can check whether $S_1 \subseteq S_2$ by checking whether $\msrep{\lang(M_1)} \subseteq \msrep{\lang(M_2)}$. Checking whether $\msrep{\lang(M_1)} \not \subseteq \msrep{\lang(M_2)}$ can be done in $\pspaceclass$ (i.\,e., we guess a witness $w'$ and check whether $w' \in \msrep{\lang(M_1)} \setminus \msrep{\lang(M_2)}$); we skip all the details here, since the result will follow from our more general result with respect to a subclass of refl-spanners; moreover, the $\pspaceclass$-completeness of containment of regular spanners has already been mentioned in~\cite{MaturanaEtAl2018}.\par
Let us give some intuition of why the situation is more complicated for refl-spanners. We first note that also for ref-words (since they are subword-marked words), we can use the function $\msrep{\cdot}$. However, while $\msrep{w_1} = \msrep{w_2}$ for ref-words $w_1, w_2$ is still sufficient for $\getWord{\deref{w_1}} = \getWord{\deref{w_2}}$ and $\getSpanTuple{\deref{w_1}} = \getSpanTuple{\deref{w_2}}$, it is not a necessary condition. For example, the ref-words 
\begin{align*}
w_1 = &\open{\varsx} \ta \tb \close{\varsx} \tb \open{\varsy} \ta \tb \tb \close{\varsy} \varsy \ta \tb\,, \\
w_2 = &\open{\varsx} \ta \tb \close{\varsx} \tb \open{\varsy} \varsx \tb \close{\varsy} \varsx \tb \varsx\,, \\
w_3 = &\open{\varsx} \ta \tb \close{\varsx} \tb \open{\varsy} \ta \tb \tb \close{\varsy} \varsy\varsx
\end{align*}
have all the same image $\open{\varsx} \ta \tb \close{\varsx} \tb \open{\varsy} \ta \tb \tb \close{\varsy} \ta \tb \tb \ta \tb$ under the function $\deref{\cdot}$ (and therefore they all describe the same document and the same span-tuple), but their marker-set representations are obviously pairwise different.\par
The main idea in the following is that we impose a restriction to ref-words (and therefore to ref-languages and refl-spanners) that allows us to obtain an analogue of Proposition~\ref{setRepInclusionProp} for refl-spanners. Let us first informally explain our approach. Intuitively speaking, we require all variable references to be extracted by their own private extraction variable, i.\,e., in the ref-words we encounter all variable references $\varsx$ in the form $\open{\varsy_{\varsx}} \varsx \close{\varsy_{\varsx}}$, where $\varsy_{\varsx}$ has in all ref-words the sole purpose of extracting the content of some reference of variable $\varsx$. With this requirement, the positions of the repeating factors described by variables and their references must be explicitly present as spans in the span-tuples. This seems like a strong restriction for refl-spanners, but we should note that for core spanners we necessarily have a rather similar situation: if we want to use string-equality selections on some spans, we have to explicitly extract them by variables first. \par
In the following, we assume that the set of variables is partitioned into the following sets. There is a set $\varset_{r}$ of \emph{reference-variables}, there is a set $\varset_{e}$ of \emph{extraction-variables}, and, for each reference-variable $\varsx \in \varset_r$, there is a set $\varset_{e, \varsx}$ of \emph{extraction-variables for reference-variable $\varsx$}.
The intuitive idea is that for every $\varsx \in \varset_r$, every reference of $\varsx$ is extracted by some $\varsy \in \varset_{e, \varsx}$, i.\,e., it occurs directly between $\open{\varsy}$ and $\close{\varsy}$; moreover, every $\varsy \in \varset_{e, \varsx}$ either does not occur at all, or it is used as extractor for $\varsx$, i.\,e., $\varsy$'s definition contains exactly one occurrence of a symbol from $\Sigma \cup \varset$ which is $\varsx$. The variables $\varset_e$ are also only used for extraction (i.\,e., they are never referenced), but their respective markers can parenthesise any kind of factor. In addition, we also require that for every $\varsx \in \varset_{r}$ with at least one reference, the image of $\varsx$ under $\deref{\cdot}$ is non-empty.

\begin{defi}\label{def:extracting}
A ref-word $w$ over $\Sigma$ and $\varset$ is a \emph{strongly reference extracting} ref-word over $\Sigma$ and $(\varset_{r}, \varset_{e}, \{\varset_{e, \varsx} \mid \varsx \in \varset_r\})$, if it satisfies the following:
\begin{itemize}
\item $\varset_{r}, \varset_{e}, \{\varset_{e, \varsx} \mid \varsx \in \varset_r\}$ is a partition of $\varset$.
\item For every $\varsx \in \varset_r$, if $w\spann{i}{i+1} = \varsx$, 
\begin{itemize}
\item 
then $w\spann{i-1}{i+2} = \open{\varsy} \varsx \close{\varsy}$ with $\varsy \in \varset_{e, \varsx}$, and
\item $\getSpanTuple{\deref{w}}(\varsx) = \spann{i'}{j'}$ with $j' - i' \geq 1$.
\end{itemize}
\item For every $\varsy \in \varset_e \cup \bigcup_{\varsx \in \varset_r} \varset_{e,\varsx}$ we have that $|w|_{\varsy} = 0$. 
\item For every $\varsy \in \varset_{e, \varsx}$, if $w\spann{i}{i+1} = \open{\varsy}$ then $w\spann{i}{i+3} = \open{\varsy} \varsx \close{\varsy}$.
\end{itemize}
\end{defi}

A ref-language $L$ over $\Sigma$ and $\varset$ is a \emph{strongly reference extracting} ref-language over $\Sigma$ and $(\varset_{r}, \varset_{e}, \{\varset_{e, \varsx} \mid \varsx \in \varset_r\})$ if every ref-word $w \in L$ is a strongly reference extracting ref-word over $\Sigma$ and $(\varset_{r}, \varset_{e}, \{\varset_{e, \varsx} \mid \varsx \in \varset_r\})$.

\begin{obs}\label{stronglyRefMsrepMainObs}
Let $w$ be a strongly reference extracting ref-word over $\Sigma$ and\linebreak $(\varset_{r}, \varset_{e}, \{\varset_{e, \varsx} \mid \varsx \in \varset_r\})$. 
\begin{enumerate}
\item\label{stronglyRefMsrepMainObsPointOne} For every $\varsx \in \varset_r$, every occurrence of $\varsx$ in $\msrep{w}$ is directly preceded by (and followed by) a symbol $\Gamma \subseteq \Gamma_{\varset}$ such that $\Gamma$ contains some $\open{\varsy}$ (some $\close{\varsy}$, respectively) with $\varsy \in \varset_{e, \varsx}$ and no other $\open{\varsy'}$ ($\close{\varsy'}$, respectively) with $\varsy' \in \bigcup_{\varsx' \in \varset_{r}} \varset_{e, \varsx'}$ and $\varsy \neq \varsy'$.
\item\label{stronglyRefMsrepMainObsPointTwo} 
Every occurrence of a symbol $\Gamma \subseteq \Gamma_{\varset}$ with $\open{\varsy} \in \Gamma$ (or $\close{\varsy} \in \Gamma$) for some $\varsy \in \varset_{e, \varsx}$, is followed by (preceded by, respectively) a sequence of occurrences of $\Gamma$ followed by (preceded by) a reference $\varsx$. 
\end{enumerate}
\end{obs}

\begin{lem}\label{setRepSimpleLemmaRefVersion}
Let $w_1$ and $w_2$ be strongly reference extracting ref-words over $\Sigma$ and $(\varset_{r}, \varset_{e}, \{\varset_{e, \varsx} \mid \varsx \in \varset_r\})$. Then $\getWord{\deref{w_1}} = \getWord{\deref{w_2}}$ and $\getSpanTuple{\deref{w_1}} = \getSpanTuple{\deref{w_2}}$ if and only if $\msrep{w_1} = \msrep{w_2}$.
\end{lem}

\begin{proof}
If $\msrep{w_1} = \msrep{w_2}$, then we obviously also have $\msrep{\deref{w_1}} = \msrep{\deref{w_2}}$. Since both $\deref{w_1}$ and $\deref{w_2}$ are subword-marked words, we can use Proposition~\ref{setRepSimpleProp} to conclude that $\getWord{\deref{w_1}} = \getWord{\deref{w_2}}$ and $\getSpanTuple{\deref{w_1}} = \getSpanTuple{\deref{w_1}}$. \par
Next, we assume that $\getWord{\deref{w_1}} = \getWord{\deref{w_2}}$ and $\getSpanTuple{\deref{w_1}} = \getSpanTuple{\deref{w_2}}$. Since both $\deref{w_1}$ and $\deref{w_2}$ are subword-marked words, we can conclude with Proposition~\ref{setRepSimpleProp} that $\msrep{\deref{w_1}} = \msrep{\deref{w_2}}$. In order to conclude the proof, we have to show that $\msrep{w_1} = \msrep{w_2}$. To this end, we show that the assumption $\msrep{w_1} \neq \msrep{w_2}$ leads to a contradiction. \par
We prove by induction that, for every $i \in \{0, 1, \ldots, \max\{|\msrep{w_1}|, |\msrep{w_2}|\}\}$, $\msrep{w_1}\spann{1}{i+1} = \msrep{w_2}\spann{1}{i+1}$. As the base of the induction, we observe that $\msrep{w_1}\spann{1}{0+1} = \msrep{w_1}\spann{1}{1} = \emptyword$ and $\msrep{w_2}\spann{1}{0+1} = \emptyword$. Next, we assume that, for some $i \in \{0, 1, \ldots, \max\{|\msrep{w_1}|, |\msrep{w_2}|\} - 1\}$, we have that $\msrep{w_1}\spann{1}{i+1} = \msrep{w_2}\spann{1}{i+1}$. We let 
$\sigma_1 = \msrep{w_1}\spann{i+1}{i+2}$ and $\sigma_2 = \msrep{w_2}\spann{i+1}{i+2}$. We make a case distinction with respect to $\sigma_1$ (note that by symmetry this will cover all possible cases).\medskip \par
\textbf{Case $\sigma_1 \subseteq \Gamma_{\varset}$}: In this case, $\sigma_1 \neq \sigma_2$ would directly imply that for some $\varsx \in \varset$, we have that $\getSpanTuple{\deref{w_1}}(\varsx) \neq \getSpanTuple{\deref{w_2}}(\varsx)$ (note that this holds for all possible choices of $\sigma_2$). Hence, we would get the contradiction that $\getSpanTuple{\deref{w_1}} \neq \getSpanTuple{\deref{w_2}}$, which implies that we have $\sigma_1 = \sigma_2$.\par
\textbf{Case $\sigma_1 \in \Sigma$ and $\sigma_2 \in \Sigma$}: This directly implies that there is some position $j$ such that $\msrep{\deref{w_1}}\spann{j}{j+1} = \sigma_1$ and $\msrep{\deref{w_2}}\spann{j}{j+1} = \sigma_2$. Thus, since $\msrep{\deref{w_1}} = \msrep{\deref{w_2}}$, we can directly conclude that $\sigma_1 = \sigma_2$.\par
\textbf{Case $\sigma_1 = \varsx \in \varset_r$}: If $\sigma_2 \subseteq \Gamma_{\varset}$, then we can proceed like in the first case, but with $\sigma_2$ playing the role of $\sigma_1$. Hence, we may assume that $\sigma_2 \in \varset \cup \Sigma$. By Point~\ref{stronglyRefMsrepMainObsPointOne} of Observation~\ref{stronglyRefMsrepMainObs}, we can conclude that $\msrep{w_1}\spann{i}{i+1} = \Gamma \subseteq \Gamma_{\varset}$ such that $\open{\varsy} \in \Gamma$ for some $\varsy \in \varset_{e, \varsx}$. By assumption, we therefore also have $\msrep{w_2}\spann{i}{i+1} = \Gamma$. Hence, Point~\ref{stronglyRefMsrepMainObsPointTwo} of Observation~\ref{stronglyRefMsrepMainObs} implies that symbol $\Gamma$ at position $i$ of $\msrep{w_2}$ is followed by a sequence of occurrences of $\Gamma$ followed by $\varsx$. Since we made the assumption that $\sigma_2 \in \varset \cup \Sigma$, we know that $\sigma_2 \neq \Gamma$, which means that $\sigma_2 = \varsx$. Consequently, $\sigma_1 = \sigma_2$.\medskip\par
We have shown that
$\msrep{w_1}\spann{1}{i+1} = \msrep{w_2}\spann{1}{i+1}$ holds 
for every $i \in \{0, 1, \ldots,
\max\{|\msrep{w_1}|, |\msrep{w_2}|\}\}$. This means that either $\msrep{w_1} =
\msrep{w_2}$, or one of these words is a proper prefix of the
other. Let us assume that the latter case applies and, without loss of
generality, that $\msrep{w_1}$ is a proper prefix of
$\msrep{w_2}$. More formally, for some $j \in \{|\msrep{w_1}|,
|\msrep{w_1}|+1, \ldots, |\msrep{w_2}|\}$, we have that
$\msrep{w_2}\spann{1}{j+1} = \msrep{w_1}$, and $|\msrep{w_2}| >
|\msrep{w_1}|$. If $\msrep{w_2}\spann{j+1}{j+2} \in \Sigma$ or
$\msrep{w_2}\spann{j+1}{j+2} \in \varset$, then $|\deref{w_2}| >
|\deref{w_1}|$, which leads to the contradiction that
$\msrep{\deref{w_1}} \neq \msrep{\deref{w_2}}$ (note that in the
latter case, i.\,e., $\msrep{w_2}\spann{j+1}{j+2} \in \varset$, it is
important that for strongly reference extracting ref-words we require
for every $\varsx \in \varset_r$ with $|w|_{\varsx} \neq 0$ that
$\getSpanTuple{\deref{w}}(\varsx) = \spann{i}{j}$ with $j - i \geq
2$). Moreover, if $\msrep{w_2}\spann{j+1}{j+2} = \Gamma \subseteq
\Gamma_{\varset}$, then, for some $\varsx \in \varset$, we have that
$\getSpanTuple{\deref{w_1}}(\varsx) \neq
\getSpanTuple{\deref{w_2}}(\varsx)$. This leads to the contradiction
that $\getSpanTuple{\deref{w_1}} \neq
\getSpanTuple{\deref{w_2}}$. Consequently, it is not possible that one
of $\msrep{w_1}$ or $\msrep{w_2}$ is a proper prefix of the other;
thus, $\msrep{w_1} = \msrep{w_2}$.
This concludes the proof of Lemma~\ref{setRepSimpleLemmaRefVersion}.
\end{proof}

\begin{lem}\label{setRepInclusionLemmaRefVersion}
Let $L_1, L_2$ be strongly reference extracting ref-languages over $\Sigma$ and $(\varset_{r}, \varset_{e},\allowbreak \{\varset_{e, \varsx} \mid \varsx \in \varset_r\})$. Then $\llbracket L_1 \rrbracket \subseteq \llbracket L_2 \rrbracket$ if and only if $\msrep{L_1} \subseteq \msrep{L_2}$.
\end{lem}

\begin{proof}
Let us first assume that $\llbracket L_1 \rrbracket \subseteq \llbracket L_2 \rrbracket$. Let $v_1 \in \msrep{L_1}$ and let $v'_1 \in L_1$ be such that $\msrep{v'_1} = v_1$. Moreover, let $w = \getWord{\deref{v'_1}}$, which also means that $\getSpanTuple{\deref{v'_1}} \in \llbracket L_1 \rrbracket(w)$. Since $\llbracket L_1 \rrbracket \subseteq \llbracket L_2 \rrbracket$, we have $\llbracket L_1 \rrbracket(w) \subseteq \llbracket L_2 \rrbracket(w)$ and therefore $\getSpanTuple{\deref{v'_1}} \in \llbracket L_2 \rrbracket(w)$. By definition, this means that there is some $v'_2 \in L_2$ with $\getWord{\deref{v'_2}} = w = \getWord{\deref{v'_1}}$ and $\getSpanTuple{\deref{v'_2}} = \getSpanTuple{\deref{v'_1}}$. By Proposition~\ref{setRepSimpleLemmaRefVersion}, this means that $\msrep{v'_1} = \msrep{v'_2}$, and, by definition, $\msrep{v'_2} \in \msrep{L_2}$. Thus, $\msrep{v'_1} = v_1 \in \msrep{L_2}$. Since $v_1 \in \msrep{L_1}$ has been chosen arbitrarily, we can conclude that $\msrep{L_1} \subseteq \msrep{L_2}$. \par
Next, we assume that $\msrep{L_1} \subseteq \msrep{L_2}$. Let $w \in
\Sigma^*$ and $t \in \llbracket L_1 \rrbracket(w)$ be arbitrarily
chosen. This means that there is some $v_1 \in L_1$ with
$\getWord{\deref{v_1}} = w$ and $\getSpanTuple{\deref{v_1}} = t$. By
assumption, $\msrep{v_1} \in \msrep{L_2}$, which means that there is some $v_2 \in L_2$ with $\msrep{v_1} = \msrep{v_2}$, which, by Proposition~\ref{setRepSimpleLemmaRefVersion}, means that $\getWord{\deref{v_1}} = \getWord{\deref{v_2}} = w$ and $\getSpanTuple{\deref{v_1}} = \getSpanTuple{\deref{v_2}} = t$. Thus, $t \in \llbracket L_2 \rrbracket(w)$. Since $w \in \Sigma^*$ and $t \in \llbracket L_1 \rrbracket(w)$ have been chosen arbitrarily, we can conclude that $\llbracket L_1 \rrbracket \subseteq \llbracket L_2 \rrbracket$. 
\end{proof}

\begin{thm}\label{mainContainmentDecidabilityTheorem}
$\ContProb$ and $\EquProb$ for strongly reference extracting refl-spanners over $\Sigma$ and $(\varset_{r}, \varset_{e}, \{\varset_{e, \varsx} \mid \varsx \in \varset_r\})$ are $\pspaceclass$-complete.
\end{thm}

\begin{proof}
We first note that hardness follows from the fact that the inclusion and equivalence problem for $\NFA$s is $\pspaceclass$-hard (see~\cite{MeyerStockmeyer1972,StockmeyerMeyer1973}). Next, we prove the upper bound for the case of $\ContProb$ (which also implies the upper bound for $\EquProb$). Note that the general proof idea is very similar to \cite[Thm. 6.4]{MaturanaEtAl2018_arxiv}, i.\,e., the $\pspaceclass$-completeness of regular spanners with schemaless semantics.\par
Let $L_1, L_2$ be strongly reference extracting ref-languages over
$\Sigma$ and $(\varset_{r}, \varset_{e}, \{\varset_{e, \varsx} \mid
\varsx \in \varset_r\})$, represented by $\NFA$ $M_1$ and $M_2$. We
wish to decide whether $\llbracket L_1 \rrbracket \subseteq \llbracket
L_2 \rrbracket$. By Lemma~\ref{setRepInclusionLemmaRefVersion}, this
can be done by checking whether $\msrep{L_1} \subseteq
\msrep{L_2}$. We shall now devise a nondeterministic algorithm that
checks whether $\msrep{L_1} \not \subseteq \msrep{L_2}$. Note that we cannot modify automata for $L_1$ and $L_2$ in polynomial space such that they accept the languages $\msrep{L_1}$ and $\msrep{L_2}$, respectively. \par 
Without loss of generality, we assume that $M_1$ and $M_2$ are complete. The algorithm guesses words $w_1, w_2$ over $\Sigma \cup \Gamma_{\varset} \cup \varset$ letter by letter, which satisfy $\msrep{w_1} = \msrep{w_2}$, and checks whether $w_1$ is accepted by $M_1$ and rejected by $M_2$. To this end, we run the automata in parallel and, since the automata are nondeterministic, we have to explore all possible paths labelled with $w_1$ and $w_2$, which can be done by maintaining the sets $S_1$ and $S_2$ of current active states of $M_1$ and $M_2$, respectively. More precisely, we guess the next symbols $\sigma_1$ and $\sigma_2$ of $w_1$ and $w_2$, and then we update the sets of active states by applying the $\sigma_1$- and $\sigma_2$-transitions for the active states of $M_1$ and $M_2$, respectively. Obviously, in order to satisfy $\msrep{w_1} = \msrep{w_2}$, we must have $\sigma_1 = \sigma_2$ whenever $\sigma_1 \in \Sigma \cup \varset$. If, however, in $w_1$ we guess that instead of a symbol from $\Sigma$ or $\varset$ a sequence over $\Gamma_{\varset}$ follows, then we must process with $M_1$ and $M_2$ all possible permutations of this sequence. This can be done as follows. If we guess the next $\sigma_1$ to be neither from $\Sigma$ or from $\varset$, we guess a set $\Gamma \subseteq \Gamma_{\varset}$ instead. Then we enumerate all linear orders $\vec{\Gamma}$ of the symbols from $\Gamma$, and for each such $\vec{\Gamma}$, we update the sets of active states by exploring all $\vec{\Gamma}$-labelled paths in $M_1$ and $M_2$ that start in states of $S_1$ and $S_2$, respectively. Moreover, after such a step, the algorithm must next again guess some $\sigma_1$ from $\Sigma$ or $\varset$.\par
We start this procedure with $S_1 = \{s_{0, 1}\}$ and $S_2 = \{s_{0, 2}\}$, where $s_{0, 1}$ and $s_{0, 2}$ are the initial states of $M_1$ and $M_2$, respectively. The algorithm terminates with output $\mathsf{yes}$ as soon as $S_1$ contains an accepting state and $S_2$ does not contain an accepting state. Note that if $S_1$ contains an accepting state, then this also means that the currently read input $w_1$ is a valid ref-word, which, since $\msrep{w_1} = \msrep{w_2}$, also means that $w_2$ is a valid ref-word. \par
We shall now consider the correctness of the algorithm. The basic observation is that, for some (nondeterministically chosen) word over $w \in \Sigma \cup \Gamma_{\varset} \cup \varset$, the algorithm simulates $M_1$ and $M_2$ in parallel on \emph{all} inputs $w'$ that satisfy $\msrep{w'} = \msrep{w}$. 
If $\msrep{L_1} \not \subseteq \msrep{L_2}$, then there is some $w \in \lang(M_1)$ such that for every $w'$ over $\Sigma \cup \Gamma_{\varset} \cup \varset$ with $\msrep{w} = \msrep{w'}$, we have that $w' \notin \lang(M_2)$. Consequently, if the algorithm guesses $w$ (this means it guesses $w$'s symbols from $\Sigma \cup \varset$ in the right order, and whenever some sequence $\gamma$ over $\Gamma_{\varset}$ follows, it guesses the set $\Gamma$ that contains exactly the symbols from $\gamma$), then, after having completely read $w$, we will reach a set of active states $S_1$ with an accepting state, while, due to $w' \notin \lang(M_2)$ for all $w'$ with $\msrep{w} = \msrep{w'}$, the set $S_2$ of active states cannot contain an accepting state. This means that the algorithm produces output $\mathsf{yes}$. On the other hand, if the algorithm produces output $\mathsf{yes}$, then it has guessed some $w$ such that, once it its completely consumed, $S_1$ contains an accepting state while $S_2$ does not contain any accepting state. Consequently, $\lang(M_1)$ accepts some $w'$ with $\msrep{w'} = \msrep{w}$, while there is no $w'' \in \lang(M_2)$ with $\msrep{w''} = \msrep{w}$. This directly implies that $\msrep{w} \in \msrep{L_1}$ and $\msrep{w} \notin \msrep{L_2}$; thus, $\msrep{L_1} \not \subseteq \msrep{L_2}$.\par
With respect to the complexity of the algorithm, we observe that we only store sets $S_1$ and $S_2$ that are subsets of the states of $M_1$ and $M_2$, respectively. Furthermore, updating these steps can be easily done in polynomial space by consulting the transition functions of $M_1$ and $M_2$. In particular, we observe that guessing a set $\Gamma \subseteq \Gamma_{\varset}$ and enumerating all linear orders $\vec{\Gamma}$ of the symbols from $\Gamma$ can also be done in polynomial space. 
\end{proof}

\section{Expressive Power of Refl-Spanners}\label{sec:ExpressivePower}

It is a straightfoward observation that the expressive power of refl-spanners properly exceeds the one of regular spanners, but it is less clear which refl-spanners are also core spanners and which core spanners are refl-spanners. 

\subsection{From Refl-Spanners to Core Spanners}

A ref-language $L$ over $\Sigma$ and $\varset$ is \emph{reference-bounded} if there is a number $k$ with $|w|_{\varsx} \leq k$ for every $\varsx \in \varset$ and every $w \in L$. A refl-spanner is \emph{reference-bounded} if it is represented by a reference-bounded ref-language. The following is easy to see.

\begin{thm}\label{reflToCoreTheorem}
Every reference-bounded refl-spanner is a core spanner.
\end{thm}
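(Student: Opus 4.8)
The plan is to take a reference-bounded refl-spanner $\llbracket L \rrbracket$ and re-express its reference mechanism using fresh extraction variables together with string-equality selections, producing an expression of the shape $\pi_{\varset}\,\varsigma^{=}_E(S')$ with $S' \in \regspanners$; since $\corespanners$ is the closure of $\regspanners$ under $\{\union,\bowtie,\pi,\varsigma^{=}\}$, this already witnesses $\llbracket L \rrbracket \in \corespanners$. Concretely, fix a regular ref-language $L$ over $\Sigma$ and $\varset$ given by an $\NFA$ $M$, and let $k$ be a bound with $|w|_{\varsx} \le k$ for all $\varsx \in \varset$ and all $w \in L$. I introduce a set of fresh \emph{copy variables} $\varset' = \varset \cup \{\varsx^{(\ell)} \mid \varsx \in \varset,\ \ell \in [k]\}$, the idea being that the $\ell$-th reference of $\varsx$ met along a run, instead of being literally copied, is captured by its own fresh variable $\varsx^{(\ell)}$, and a string-equality selection later forces $\varsx^{(\ell)}$ to carry the same $\Sigma$-content as the definition of $\varsx$.

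The construction is to build an $\NFA$ $M'$ over $\Sigma \cup \Gamma_{\varset'}$ by augmenting each state of $M$ with a counter vector $(c_\varsx)_{\varsx\in\varset}$ ranging over $\{0,\ldots,k\}^{\varset}$. Transitions reading $\Sigma$ or a symbol from $\Gamma_{\varset}$ are kept (leaving the counters unchanged), while a transition of $M$ reading a reference symbol $\varsx$ is replaced, when $c_\varsx = \ell-1$, by a gadget that reads $\open{\varsx^{(\ell)}}\,u\,\close{\varsx^{(\ell)}}$ for an arbitrary $u \in \Sigma^*$ and then sets $c_\varsx := \ell$. Because each $\varsx$ has at most $k$ references along any accepting run, every copy variable is used at most once per word, so $\lang(M')$ is a regular subword-marked language over $\Sigma$ and $\varset'$; hence $S' \isdef \llbracket \lang(M') \rrbracket \in \regspanners$ by Proposition~\ref{regSpannerProposition}. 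Setting $E = \{\{\varsx, \varsx^{(1)},\ldots,\varsx^{(k)}\} \mid \varsx \in \varset\}$, I claim that $\llbracket L \rrbracket = \pi_{\varset}\,\varsigma^{=}_E(S')$.

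The crux is proving this equality via a correspondence between accepting runs of $M$ and of $M'$. The key identity is that if $v \in L$ and $v' \in \lang(M')$ arise from the same run, and $v'$ satisfies the constraints of $E$, then $\deref{v} = \erasemorphism{\Gamma_{\varset'\setminus\varset}}(v')$: each reference of $\varsx$ in $v$ expands under $\deref{\cdot}$ to the $\Sigma$-value of $\varsx$'s fully dereferenced definition, which is precisely the $\Sigma$-factor captured by the matching $\varsx^{(\ell)}$-marking in $v'$ once $\varsigma^{=}_{\{\varsx,\varsx^{(\ell)}\}}$ forces these values to coincide. This treats reference chains automatically, since all comparisons happen at the level of the underlying $\Sigma$-strings: a nested reference inside $\varsx$'s definition lies inside $\varsx$'s captured span in $v'$ as an already-expanded $\Sigma$-factor whose own constraint group keeps it consistent. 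From the identity, $\getWord{\deref{v}} = \getWord{v'}$ and the $\varset$-spans of $\deref{v}$ coincide with those of $v'$, so $\getSpanTuple{\deref{v}}$ equals the restriction of $v'$'s span-tuple to $\varset$; conversely, any $t \in (\varsigma^{=}_E(S'))(w)$ is realised by some $v'$ that can be folded back (replacing every $\open{\varsx^{(\ell)}}u\close{\varsx^{(\ell)}}$ by the single reference symbol $\varsx$) into a ref-word of $L$ producing $t|_{\varset}$.

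The main obstacle I anticipate is the bookkeeping in this last step: verifying rigorously that the erasure identity $\deref{v} = \erasemorphism{\Gamma_{\varset'\setminus\varset}}(v')$ holds even when definitions contain nested references, i.e.\ that the iterated substitution of $\deref{\cdot}$ genuinely matches the single-pass ``$\Sigma^*$-capture plus equality selection'' of the construction, and that folding copy-markings back into reference symbols always yields a word accepted by $M$ with the intended span-tuple. The reference bound $k$ is essential precisely here: without it the number of copies of a variable could grow with $|w|$, and then no finite $\varset'$ — hence no single string-equality selection $\varsigma^{=}_E$ — could enforce all the required equalities.
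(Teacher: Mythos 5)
Your proposal is correct and follows essentially the same route as the paper's proof: an $\NFA$ $M'$ that augments $M$ with per-variable counters bounded by $k$ and replaces the $\ell$-th $\varsx$-reference transition by a gadget reading $\open{\varsy_{\varsx,\ell}} \Sigma^* \close{\varsy_{\varsx,\ell}}$ for a fresh copy variable, followed by a string-equality selection tying the copies to $\varsx$ and a projection back onto $\varset$. If anything, your explicit equality groups $\{\varsx, \varsx^{(1)}, \ldots, \varsx^{(k)}\}$ correctly include $\varsx$ itself (which is necessary, since the copies must be bound to the content of $\varsx$'s definition and not merely to one another), and your erasure identity $\deref{v} = \erasemorphism{\Gamma_{\varset' \setminus \varset}}(v')$ spells out precisely the correctness claim that the paper's proof leaves as ``moderate effort''.
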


\begin{proof}
Let $S$ be a refl-spanner over $\Sigma$ and $\varset$ represented by an $\NFA$ $M$, and let $k \in \mathbb{N}$ be such that $|w|_{\varsx} \leq k$ for every $\varsx \in \varset$ and every $w \in \lang(M)$. For every $\varsx \in \varset$, let $\mathcal{Y}_{\varsx} = \{\varsy_{\varsx, 1}, \varsy_{\varsx, 2}, \ldots, \varsy_{\varsx, k}\}$ be $k$ fresh variables. We define an $\NFA$ $M'$ as follows. The automaton $M'$ simulates $M$, but for every $\varsx \in \varset$, it maintains a counter $c_{\varsx}$ in its finite state control that is initially $0$ and that can hold values from $\{0\} \cup [k]$. Whenever $M$ takes an $\varsx$-transition, $M'$ increments $c_{\varsx}$ and then can read any word from $\lang(\open{\varsy_{\varsx, c_{\varsx}}} \Sigma^* \close{\varsy_{\varsx, c_{\varsx}}})$. It can be easily seen that $\lang(M')$ is a subword-marked language over $\Sigma$ and $\varset$. Furthermore, it can be shown with moderate effort that $\llbracket \lang(M) \rrbracket = \pi_{\varset} \varsigma^{=}_{\{\mathcal{Y}_{\varsx} \mid \varsx \in \varset\}}(\llbracket \lang(M') \rrbracket)$. 
\end{proof}

It is interesting to note that the refl-spanner $\llbracket \lang(\ta^+  \open{\varsx} \tb^+ \close{\varsx} (\ta^+ \varsx)^* \ta^+) \rrbracket$, which is \emph{not} reference-bounded, is provably not a core spanner (see \cite[Theorem 6.1]{FaginEtAl2015})).

\subsection{From Core Spanners to Refl-Spanners}\label{nonOverlappingCoreSpanner}

The question of which core spanners can be represented as refl-spanners is a much more difficult one and we shall investigate it in more detail. There are simple core spanners which translate to refl-spanners in an obvious way, e.\,g., $\pi_{\{\varsx\}} \varsigma_{\{\varsx, \varsy\}}^{=} \llbracket L \rrbracket$ with $L = \lang(\open{\varsx} (\ta^* \altop \tb^*) \close{\varsx} \tc \open{\varsy} (\ta^* \altop \tb^*) \close{\varsy})$ can be represented as $\llbracket L' \rrbracket$ where $L' = \lang(\open{\varsx} (\ta^* \altop \tb^*) \close{\varsx} \tc \varsx)$. However, if we change $L$ to $\lang(\open{\varsx} \Sigma^* \ta \Sigma^* \close{\varsx} \tc \open{\varsy} \Sigma^* \tb \Sigma^* \close{\varsy})$, then neither of the ref-languages $\lang(\open{\varsx} \Sigma^* \ta \Sigma^* \close{\varsx} \tc \varsx)$ nor $\lang(\open{\varsx} \Sigma^* \tb \Sigma^* \close{\varsx} \tc \varsx)$ yield an equivalent refl-spanner and we have to use $\lang(\open{\varsx} r \close{\varsx} \tc \varsx)$, where $r$ is a regular expression for $\lang(\Sigma^* \ta \Sigma^*) \cap \lang(\Sigma^* \tb \Sigma^*)$. \par
Another problem is that core spanners can also use string-equality selections on spans that contain start or end positions of other spans. For example, it seems difficult to transform $\varsigma_{\{\varsx, \varsy\}}^{=} \llbracket \lang(\open{\varsx} \ta^* \close{\varsx} \open{\varsy} \open{\varsz} \ta^* \close{\varsz} \ta^* \close{\varsy}) \rrbracket$ into a refl-spanner. The situation gets even more involved if we use the string-equality selections directly on overlapping spans, e.\,g., as in core spanners of the form $\varsigma^{=}_{\{\varsx, \varsy\}}(\llbracket \lang(\open{\varsx} \ldots \open{\varsy} \ldots \close{\varsx} \ldots \close{\varsy})\rrbracket)$. For an in-depth analysis of the capability of core spanners to describe word-combinatorial properties, we refer to~\cite{FreydenbergerHolldack2018, Freydenberger2019}. \par
These considerations suggest that the refl-spanner formalism is less powerful than core spanners, which is to be expected, since we have to pay a price for the fact that we can solve many problems for refl-spanners much more efficiently than for core spanners (see our results presented in Section~\ref{sec:decisionProblems} and summarised in Table~\ref{comparisonTable}). However, we can show that a surprisingly large class of core spanners can nevertheless be represented by a single refl-spanner along with the application of a simple spanner operation (to be defined in the next subsection) that just combines several variables (or columns in the spanner result) into one variable (or column) in a natural way, and a projection; see Theorem~\ref{coreSpannersToReflSpannersTheorem} for our respective main result.

In this section, we shall also use the marker-set representation of
subword-marked words (and therefore ref-words) as defined at the
beginning of Section~\ref{sec:decisionProblems}. However, in the
following it will be more convenient to simplify this concept as
follows. Let $w$ be a subword-marked word over $\Sigma$ and $\varset$,
let $t = \getSpanTuple{w}$, and let $\getWord{w} = w_1 w_2 \ldots w_n$
with $w_i \in \Sigma$ for every $i \in [n]$. For each $i\in
\set{1,\ldots,n{+}1}$ let $\Gamma_i$ again be the set comprising of all symbols $\open{\x}$ where $\x\in\Dom{t}$ and $t(\x)=\spann{i}{j}$ for some $j$, and all symbols $\close{\x}$ where $\x\in \Dom{t}$ and $t(\x)=\spann{j}{i}$ for some $j$. We now set $\msrep{w} = \Gamma'_1 w_1 \Gamma'_2 w_2 \ldots \Gamma'_n w_n \Gamma'_{n+1}$, where, for every $i \in \{1, 2, \ldots, n+1\}$, $\Gamma'_i = \Gamma_i$ if $\Gamma_i \neq \emptyset$, and $\Gamma'_i = \varepsilon$ otherwise. Note that this corresponds to our original version of the marker-set representation with the only difference that the factors $\Gamma_i^{|\Gamma_i|}$ are replaced by just one occurrence of the symbol $\Gamma_i$, or completely erased if $\Gamma_i = \emptyset$. Moreover, note that we have re-defined $\msrep{\cdot}$ and that from now on, we will only use this definition of the marker-set representation. \par
For example, if $w = \open{\varsx_1} \open{\varsx_3} \ta \tb \close{\varsx_1} \open{\varsx_2} \tc \tb \close{\varsx_2} \ta \tb \tc \ta \close{\varsx_3}$, then we have $\msrep{w} = \{\open{\varsx_1}, \open{\varsx_3}\} \ta \tb \{\close{\varsx_1}, \open{\varsx_2}\} \tc \tb \{\close{\varsx_2}\} \ta \tb \tc \ta \{\close{\varsx_3}\}$.\par
We again extend the function $\msrep{\cdot}$ in the natural way to a subword-marked language $L$ by setting $\msrep{L} = \{\msrep{w} \mid w \in L\}$. 

\begin{rem}\label{msrepRemark}
In this section, we assume that all our regular and refl-spanners are represented as (automata that accept) subword-marked languages and ref-languages, respectively, all the words of which are given in marker-set representation. It can be easily seen that any $\NFA$ accepting a subword-marked language or a ref-language can be transformed into an $\NFA$ that accepts the $\msrep{\cdot}$ image of that language. This transformation may cause an exponential blow-up, which is no problem, since we are here only concerned with questions of expressive power, and not with complexity issues. 
\end{rem}

We next give an intuitive explanation of the result to be proven in
this subsection. As the central question of this subsection, we
investigate which core spanners of the form $S =
\varsigma_{E}^{=}(\llbracket L \rrbracket)$ (where $L$ is a regular
subword-marked language) can be described by $\llbracket L'
\rrbracket$ for some regular ref-language $L'$. Our first respective
observation is that $S$ can in fact be described in the form
$\llbracket L' \rrbracket$ for a regular ref-language $L'$, provided
that all the variables that are subject to the string-equality
selection $\varsigma_{E}^{=}$ are \emph{simple} (with respect to
$L$). A variable $\varsx$ is called simple with respect to a subword-marked language $L$, if any definition for $\varsx$ (in any $w \in L$) is always of the form $\Gamma_1 u \Gamma_2$ with $\Gamma_1, \Gamma_2 \subseteq \Gamma_{\varset}$, $\open{\varsx} \in \Gamma_1, \close{\varsx} \in \Gamma_2$ and $u \in \Sigma^+$ (or it defines the empty string, i.\,e., there is an occurrence of $\Gamma_1 \subseteq \Gamma_{\varset}$ with $\{\open{\varsx}, \close{\varsx}\} \subseteq \Gamma_1$). 

This result -- i.\,e., that a core spanner $\varsigma_{E}^{=}(\llbracket L \rrbracket)$ is a refl-spanner, if all the variables that are subject to the string-equality selection are simple -- is in fact a generalisation of the following example already mentioned above (note that both $\varsx$ and $\varsy$ are simple): 
\begin{equation*}
\varsigma_{\{\varsx, \varsy\}}^{=} \llbracket \lang(\open{\varsx}
(\ta^* \altop \tb^*) \close{\varsx} \tc \open{\varsy} (\ta^* \altop
\tb^*) \close{\varsy}) \rrbracket \ \ = \ \ \llbracket \lang(\open{\varsx} (\ta^* \altop \tb^*) \close{\varsx} \tc \open{\varsy} \varsx \close{\varsy}) \rrbracket\,.
\end{equation*}

Next, we show that we can use this result in order to prove that a much larger class of
core spanners can also be described by ref-languages, if we also allow
an additional operation on spanners, which we call the
\emph{span-fusion} $\bigspanfusion$. Intuitively, for a set $\lambda =
\{\varsy_1, \varsy_2, \ldots, \varsy_k\} \subseteq \varset$ of
variables, and a completely new variable $\varsx$ with $\varsx \notin
\varset$, the span-fusion $\bigspanfusion_{\lambda \to \varsx}$
replaces in a span-relation the set of columns $\{\varsy_1, \varsy_2,
\ldots, \varsy_k\}$  by a single new column $\varsx$ that, for each
row $t$ (i.\,e., span-tuple $t$), contains a single span that
corresponds to the union of all the spans $t(\varsy_1), t(\varsy_2),
\ldots, t(\varsy_k)$. For example, $\bigspanfusion_{\{\varsx_1,
  \varsx_2\} \to \varsz}$ turns a $\{\varsx_1, \varsx_2,
\varsx_3\}$-tuple $t = (\spann{3}{17}, \spann{17}{23}, \spann{4}{20})$
into a $\{\varsz, \varsx_3\}$-tuple $t'$ with $t'(\varsz) =
\spann{3}{23}$ and $t'(\varsx_3) = \spann{4}{20})$. The span-fusion operation will be formally defined below in Section~\ref{sec:spanFusion}.\par
We can show that provided that $\varsigma_{E}^{=}$ satisfies a certain
\emph{non-overlapping} property (basically, we require that any two
variables that are subject to the string-equality selection are
non-overlapping with respect to $L$), we can represent
$\varsigma_{E}^{=}(\llbracket L \rrbracket)$ in the form $\bigspanfusion_{\lambda_1 \to \varsx_1} \bigspanfusion_{\lambda_2 \to \varsx_2} \ldots \bigspanfusion_{\lambda_k \to \varsx_k}(\llbracket L' \rrbracket)$ for a regular ref-language $L'$ over $\Sigma$ and a set $\varset'$ of variables with $|\varset'| = \bigO(|\varset|^3)$.

The proof idea for this result is as follows (formal details are given later on). Every regular
subword-marked language $L$ over $\Sigma$ and $\varset$ can be
transformed into a subword-marked language $L'$ over $\Sigma$ and
$\varset'$ such that, for every $w \in L$, there is a $w' \in L'$ with
$\getWord{w} = \getWord{w'}$ and $\getSpanTuple{w'}$ is a \emph{split}
of $\getSpanTuple{w}$. An $\varset'$-tuple $t'$ is a split of an
$\varset$-tuple $t$, if every span $t(\varsx) =
\spann{\ell_{\varsx}}{r_{\varsx}}$ is factorised into $\varno$ factors
with respect to $t'$, i.\,e., there are variables $\varsx^1, \varsx^2,
\ldots, \varsx^{\varno} \in \varset'$ such that $t'(\varsx^1) =
\spann{\ell_{\varsx}}{k_1}, t'(\varsx^2) = \spann{k_1}{k_2}, \ldots,
t'(\varsx^{\varno - 1}) = \spann{k_{\varno - 1}}{k_{\varno}},
t'(\varsx^{\varno}) = \spann{k_{\varno}}{r_{\varsx}}$ (the number
$\varno$ is explained later and satisfies $\varno =
\bigO(|\varset|^2)$). Moreover, we require these factorisations to be
such that the spans for any two variables are non-overlapping. The
subword-marked language $L'$ of all these splits of subword-marked
words from $L$ will be called the \emph{split} of $L$, and it can be
shown that if $L$ is regular, then so is its split $L'$. In
particular, by definition of the span-fusion, we immediately have that
$\bigspanfusion_{\Lambda} \llbracket L' \rrbracket = \llbracket L
\rrbracket$, where $\Lambda = \{\lambda_\varsx \to \varsx \mid \varsx
\in \varset\}$ with $\lambda_{\varsx} = \{\varsx^1, \varsx^2, \ldots,
\varsx^{\varno}\}$ for every $\varsx \in \varset$. The split-operation will be formally defined below in Section~\ref{sec:splitOp}.\par
The next idea is that we can simulate the string-equality selection
$\varsigma_{E}^{=}$ (for $L$) with $E = \{\mathcal{Z}_1,
\mathcal{Z}_2, \ldots, \mathcal{Z}_k\}$ directly on the split
$L'$. Intuitively speaking, for every $\varsx, \varsy \in
\mathcal{Z}_i$, we require the pairs $\varsx^j, \varsy^j$ to be
subject to a string-equality selection for every $j \in [\varno]$, or,
more formally, we define the string-equality selection
$\varsigma^=_{E'}$, where $E' = \{\mathcal{Y}_i^j \mid i \in [k], j
\in [\varno]\}$ and $\mathcal{Y}_i^{j} = \{\varsx^j \mid \varsx \in
\mathcal{Z}_i\}$ for $i \in [k]$ and $j \in [\varno]$. It can be
easily seen that if $t \in \bigspanfusion_{\Lambda}
\varsigma_{E'}^{=}(\llbracket L' \rrbracket)(w)$ for some $w \in
\Sigma^*$, then we also have $t \in \varsigma_{E}^{=}(\llbracket L
\rrbracket)(w)$. The converse, however, is only true if $t$ has a
split $t'$ that is in $\varsigma_{E'}^{=}(\llbracket L'
\rrbracket)(w)$. A split $t'$ of $t$ is only in
$\varsigma_{E'}^{=}(\llbracket L' \rrbracket)(w)$ if, for every $i \in
[k]$, all $t(\varsx)$ with $\varsx \in \mathcal{Z}_i$ are factorised
into the parts $t'(\varsx^1), t'(\varsx^2), \ldots,
t'(\varsx^{\varno})$ in exactly the same way, since otherwise $t'$
would not satisfy $\varsigma_{E'}^{=}$. In the general case, the set
$\varset'$ of variables is not large enough to make such a split $t'$
possible for every $t \in \varsigma_{E}^{=}(\llbracket L
\rrbracket)(w)$. The problem is that due to the property of $t'$ that
all the spans for each two variables of $\varset'$ are
non-overlapping, the number of variables needed for this property may
depend on $|w|$ (as will also be demonstrated below by an
example). This, however, can only happen if $E$ is \emph{overlapping},
i.\,e., there are variables $\varsx, \varsy \in \bigcup_{i \in [k]}
\mathcal{Z}_i$ such that the spans of $\varsx$ and $\varsy$ may refer
to overlapping spans. Consequently, we have $\bigspanfusion_{\Lambda} \varsigma_{E'}^{=}(\llbracket L' \rrbracket) = \varsigma_{E}^{=}(\llbracket L \rrbracket)$, if $E$ is \emph{non-overlapping} (see Lemmas~\ref{splitEasyDirectionLemma}~and~\ref{splitHardDirectionLemma}). \par
Finally, since the split $L'$ of $L$ has necessarily only simple variables, we can express $\varsigma_{E'}^{=}(\llbracket L' \rrbracket)$ as $\llbracket L'' \rrbracket$ for a regular ref-language $L''$ (recall that this was our first result, sketched at the beginning of this subsection). Consequently, $\varsigma_{E}^{=}(\llbracket L \rrbracket)$ can be expressed as $\bigspanfusion_{\Lambda} (\llbracket L'' \rrbracket)$, where $L''$ is a regular ref-language.\par
In the following subsections, we shall formally carry out this proof roadmap.

\subsubsection{Simple Variables}

As sketched above, a variable is called simple with respect to a
subword-marked language if all its definitions do not contain parts of
other variable definitions, i.\,e., symbols $\Gamma \subseteq
\varset$. We will show next that string-equality selections for simple variables can be expressed by the ref-language mechanism, i.\,e., by using variable references. 

Let us now define simple variables formally.\footnote{Recall that by Remark~\ref{msrepRemark} we assume that all subword-marked and all ref-languages  are given in marker-set representation.} Let $L$ be a subword-marked language over $\Sigma$ and $\varset$, let $w \in L$ and $t = \getSpanTuple{w}$. We say that  a variable $\varsx \in \varset$ is \emph{simple} (\emph{with respect to $w$}) if $t(\varsx)$ is undefined or the empty span, or $w = u_1 \Gamma u_2 \Gamma' u_3$ with $\open{\varsx} \in \Gamma$ and $\close{\varsx} \in \Gamma'$ implies that $u_2 \in \Sigma^+$. For example, $\varsx$ and $\varsz$ are simple in $\ta \tb \{\open{\varsx}, \open{\varsy}\} \ta \tb \{\close{\varsx}\} \tc \{\close{\varsy}, \open{\varsz}\} \ta \{\close{\varsz}\}$, whereas $\varsy$ is not simple. We extend this definition to subword-marked languages in the obvious way, i.\,e., $\varsx$ is simple \emph{with respect to $L$} if $\varsx$ is simple with respect to every $w \in L$. \par

Our next goal is to show that for a subword-marked language $L$ over $\Sigma$ and $\varset$, and an $E = \{\mathcal{Z}_1, \mathcal{Z}_2, \ldots, \mathcal{Z}_k\} \subseteq \mathcal{P}(\varset)$ such that all variables from $\bigcup_{i \in [k]} \mathcal{Z}_i$ are simple with respect to $L$, we can construct a regular ref-language $L'$ such that $\llbracket L' \rrbracket = \varsigma_{\mathcal{Z}}^{=}(\llbracket L \rrbracket)$; see Theorem~\ref{stronglyNonOverlappingTheorem} below. We first sketch the intuitive idea of the construction and then give a formal proof. \par
For simplicitly, let us assume that $E = \{\mathcal{Z}\}$. Every $w \in L$ may contain some \emph{$\mathcal{Z}$-definitions}, which are definitions that define at least one variable $\varsz \in \mathcal{Z}$. Since all variables from $\mathcal{Z}$ are simple, such $\mathcal{Z}$-definitions are either \emph{non-empty} and of the form $\Gamma_1 u \Gamma_2$ with 
$\{\varsz \in \mathcal{Z} \mid \open{\varsz} \in \Gamma_1\} = \{\varsz \in \mathcal{Z} \mid \close{\varsz} \in \Gamma_2\} \neq \emptyset$, and $u \in \Sigma^+$, or they are \emph{empty}, which means that they are represented by a single symbol $\Gamma$ with $\{\open{\varsz}, \close{\varsz}\} \subseteq \Gamma$ for at least one $\varsz \in \mathcal{Z}$. The first step is to remove all subword-marked words that have both non-empty and empty $\mathcal{Z}$-definitions, since they describe span-tuples that are filtered out by the string-equality selection $\varsigma_{\mathcal{Z}}^{=}$. Moreover, this can be easily done by simply storing in the finite state control whether the first $\mathcal{Z}$-definition is empty or not, and then abort any computation that reads both types of $\mathcal{Z}$-definition.\par
The next task is to handle all other subword-marked words, i.\,e.,
those $w \in L$ in which some $\mathcal{Z}$-definitions assign to
variables from $\mathcal{Z}$ spans referring to different non-empty
strings (which we somehow have to get rid of, since they represent
span-tuples filtered out by the string-equality selection), and those
that assign the same non-empty string to all variables from
$\mathcal{Z}$. These properties cannot be explicitly checked by a
finite automaton. The idea of the construction is based on the
observation that for all subword-marked words $w \in L$ that assign to
every variable from $\mathcal{Z}$ spans referring to the same string,
we can as well only keep the very first $\mathcal{Z}$-definition
$\Gamma_1 u \Gamma_2$ of $w$, and replace every further $\mathcal{Z}$-definition $\Gamma'_1 u
\Gamma'_2$ in $w$ by $\Gamma'_1 \widehat{\varsz} \Gamma'_2$, where
$\widehat{\varsz}$ is some arbitrarily chosen variable defined by the
first $\mathcal{Z}$-definition. Such a modified $w'$ satisfies $w =
\deref{w'}$ and therefore also $\getSpanTuple{w} =
\getSpanTuple{\deref{w'}}$. Consequently, this modification yields a
ref-language $L'$ with $\llbracket L' \rrbracket =
\varsigma_{\mathcal{Z}}^{=}(\llbracket L \rrbracket)$. Obviously, we
have to show that $L'$ is regular. To this end, we can directly modify
the $\NFA$ $M$ for $L$ as follows. We simulate $M$ up to the point
where we read the first $\mathcal{Z}$-definition $\Gamma_1 u
\Gamma_2$. However, before reading $u$, we guess a state for every
variable from $\mathcal{Z}$ that might be defined by a later
$\mathcal{Z}$-definition. Then, while reading $u$, we compute some
states that are reachable from the guessed states by reading $u$
(since $M$ is nondeterministic, these target states are also
implicitly guessed by the specific path that is chosen). The idea is
of course that if the currently read input turns out to be of the kind
that is not filtered out by $\varsigma_{\mathcal{Z}}^{=}$, then it
must also be possible to guess exactly the state pairs between which
all the further occurrences of $u$ are read in the following
$\mathcal{Z}$-definitions of $w$ (after all, these state pairs must be
such that $u$ can be read between them, so it is possible to guess
them while processing the string $u$ of the first
$\mathcal{Z}$-definition). Then, in the rest of the computation,
whenever we encounter a $\mathcal{Z}$-definition $\Gamma'_1 u'
\Gamma'_2$, we simply verify if we have guessed valid state pairs
(i.\,e., whether the computation of $M$ also ends up in the states
guessed for the variables of this $\mathcal{Z}$-definition when we
encounter this $\mathcal{Z}$-definition), and then we simply read
$\Gamma'_1 \widehat{\varsz} \Gamma'_2$ instead of $\Gamma'_1 u'
\Gamma'_2$. Let us now give a formal proof.

\begin{thm}\label{stronglyNonOverlappingTheorem}
Let $L$ be a regular subword-marked language over $\Sigma$ and $\varset$, let $E = \{\mathcal{Z}_1, \mathcal{Z}_2, \ldots, \mathcal{Z}_k\} \subseteq \mathcal{P}(\varset)$ be such that all variables from $\bigcup_{i \in [k]} \mathcal{Z}_i$ are simple with respect to $L$. There is a regular ref-language $L'$ over $\Sigma$ and $\varset$ with $\llbracket L' \rrbracket = \varsigma_{E}^{=}(\llbracket L \rrbracket)$. Furthermore, $L'$ is reference-bounded.
\end{thm}

\begin{proof}
Let $\mathcal{Z} = \bigcup_{i \in [k]} \mathcal{Z}_i$. Since all
variables from $\mathcal{Z}$ are simple with respect to $L$, we know
that, for every $w \in \lang(M)$ and $i \in [k]$, all definitions of
variables from $\mathcal{Z}_i$ are either \emph{non-empty
  $\mathcal{Z}_i$-definitions} of the form $\Gamma_1 u \Gamma_2$ with
$\{\varsz \in \mathcal{Z} \mid \open{\varsz} \in \Gamma_1\} = \{\varsz \in \mathcal{Z} \mid \close{\varsz} \in \Gamma_2\} \neq \emptyset$, and $u \in \Sigma^+$, or they are \emph{empty
  $\mathcal{Z}_i$-definitions} represented by a single symbol $\Gamma$
with $\{\open{\varsz}, \close{\varsz}\} \subseteq \Gamma$ for at least
one $\varsz \in \mathcal{Z}_i$. \par 
We now make two assumptions. Firstly, we assume that $\mathcal{Z}_i \cap \mathcal{Z}_{i'} = \emptyset$ for all $i, i' \in [k]$ with $i \neq i'$. Secondly, we assume that, for every $w \in L$ and for every non-empty $\mathcal{Z}$-definition $\Gamma_1 u \Gamma_2$, there is some $i \in [k]$ such that $\{\varsz \in \mathcal{Z} \mid \open{\varsz} \in \Gamma_1\} \subseteq \mathcal{Z}_i$, and, likewise, for every empty $\mathcal{Z}$-definition $\Gamma_1$ in $w$, there is some $i \in [k]$ such that $\{\varsz \in \mathcal{Z} \mid \open{\varsz} \in \Gamma_1\} \subseteq \mathcal{Z}_i$. This means that every $\mathcal{Z}_i$-definition cannot be at the same time a $\mathcal{Z}_{i'}$-definition for some $i' \in [k]$ with $i \neq i'$. However, $|\{\varsz \in \mathcal{Z} \mid \open{\varsz} \in \Gamma_1\}| > 1$ is possible. These assumptions are obviously not without loss of generality, and we will explain later on how our construction can be adapted to the general case.\par 
Let $L_1$ be the subset of $L$ containing those $w \in L$ that, for every $i \in [k]$, have either only non-empty $\mathcal{Z}_i$-definitions or only empty $\mathcal{Z}_i$-definitions (i.\,e., not both types occur in the same subword-marked word). All $w \in L$ that, for some $i \in [k]$, have both empty and non-empty $\mathcal{Z}_i$-definitions, represent span-tuples that are filtered out by the string-equality selection. Thus, we have that $\varsigma_{E}^{=}(\llbracket L \rrbracket) = \varsigma_{E}^{=}(\llbracket L_1 \rrbracket)$. Moreover, it can be seen that $L_1$ is a regular subword-marked language. To this end, let $M$ be an $\NFA$ that represents $L$. 
We describe how $M$ can be changed into an $\NFA$ $M_1$ that accepts $L_1$. The $\NFA$ $M_1$ simulates $M$, but whenever we read the first $\mathcal{Z}_i$-definition for some $i \in [k]$, we store in the finite state control whether it is empty or non-empty. Then, we check whether all the following $\mathcal{Z}_i$-definitions are also empty or non-empty, respectively, and we stop the computation in a non-accepting state, if this is not the case. \par

For every $w \in L_1$, we now define a ref-word $\mu(w)$ as
follows. If, for every $i \in [k]$, $w$ contains at most one non-empty
$\mathcal{Z}_i$-definition (this includes the case that it contains
only empty $\mathcal{Z}_i$-definitions, or no
$\mathcal{Z}_i$-definition at all), then we set $\mu(w) = w$. Note
that in this case the span-tuple represented by $w$ is not filtered
out by $\varsigma_{E}^{=}$. If, for some $i \in [k]$, $w$ contains at least two non-empty $\mathcal{Z}_i$-definitions $\Gamma_1 u \Gamma_2$ and $\Gamma'_1 u' \Gamma'_2$ with $u \neq u'$, then $\mu(w) = \bot$ is undefined. Note that the string-equality selection $\varsigma_{E}^{=}$ filters out the span-tuple represented by such subword-marked words. If for a subword-marked word $w$ neither of these two cases apply, then the span-tuple represented by $w$ is not filtered out by $\varsigma_{E}^{=}$, and in this case we define $\mu(w)$ as follows. \par
For every $i \in [k]$ such that $w$ contains at least two non-empty
$\mathcal{Z}_i$-definitions, we do the following replacement. Let
$\Gamma_1 u \Gamma_2$ be the leftmost $\mathcal{Z}_i$-definition, and choose one fixed $\widehat{\varsz} \in \mathcal{Z}_i$ such that $\open{\widehat{\varsz}} \in \Gamma_1$. Then we replace every
$\mathcal{Z}_i$-definition $\Gamma'_1 u' \Gamma'_2$ that follows the
initial $\Gamma_1 u \Gamma_2$ by $\Gamma'_1 \widehat{\varsz}
\Gamma'_2$ (note that, by assumption, we have $u = u'$). This is
well-defined, since, by assumption, we have $\{\varsz \in \mathcal{Z}
\mid \open{\varsz} \in \Gamma_1\} \subseteq \mathcal{Z}_i$ for every
$\mathcal{Z}_i$-definition $\Gamma_1 u \Gamma_2$. The string obtained
by these replacements will be denoted by $\mu(w)$. It can be easily seen that, for every $w \in L_1$, if $\mu(w) \neq \bot$, then $\mu(w)$ is a ref-word with $w = \deref{\mu(w)}$. Finally, we define $\mu(L_1) = \{\mu(w) \mid w \in L_1, \mu(w) \neq \bot\}$. \par

We next observe that $\llbracket \mu(L_1) \rrbracket = \varsigma_{E}^{=}(\llbracket L_1 \rrbracket)$. To this end, let $t \in (\varsigma_{E}^{=}(\llbracket L_1 \rrbracket))(v)$ for some $v \in \Sigma^*$. This means that there is some $w \in L_1$ with $v = \getWord{w}$ and $t = \getSpanTuple{w}$. Moreover, due to the string-equality selection $\varsigma_{E}^{=}$, we know that $\mu(w) \neq \bot$, and therefore $\deref{\mu(w)} = w$, which also implies that $\getWord{\deref{\mu(w)}} = \getWord{w} = v$ and $\getSpanTuple{\deref{\mu(w)}} = \getSpanTuple{w} = t$. Consequently, $t \in \llbracket \mu(L_1) \rrbracket(v)$.\par
For the other direction, let $t \in \llbracket \mu(L_1) \rrbracket(v)$ for some $v \in \Sigma^*$. This means that there is some $w \in \mu(L_1)$ with $v = \getWord{\deref{w}}$ and $t = \getSpanTuple{\deref{w}}$. By definition of $\mu(L_1)$, we have $w = \mu(w')$ for some $w' \in L_1$ with $\mu(w') \neq \bot$, which also means that $w' = \deref{\mu(w')} = \deref{w}$. Hence, $\getWord{\deref{w}} = \getWord{w'} = v$ and $\getSpanTuple{\deref{w}} = \getSpanTuple{w'} = t$. Consequently, $t \in \llbracket L_1 \rrbracket(v)$. Moreover, since $\mu(w') \neq \bot$, we know that all (if any) $\mathcal{Z}$-definitions of $w'$ define the same factor, which implies that $t \in (\varsigma_{E}^{=}(\llbracket L_1 \rrbracket))(v)$. \par

Next, we have to show that $\mu(L_1)$ is regular. Let $M$ be an $\NFA$
that represents $L_1$ with a set $Q$ of states. We describe how $M$ can be changed into an $\NFA$ $\mu(M)$ that accepts $\mu(L_1)$. The $\NFA$ $\mu(M)$ has the same states as $M$, but in each state it also stores, for every $\varsz \in \mathcal{Z}$, a pair $(s_{\varsz, \openmark}, s_{\varsz, \closemark}) \in (Q \cup \{\bot\})^2$. Initially, we have $(s_{\varsz, \openmark}, s_{\varsz, \closemark}) = (\bot, \bot)$ for every $\varsz \in \mathcal{Z}$. \par
The $\NFA$ $\mu(M)$ simulates $M$, but every encountered non-empty $\mathcal{Z}$-definition is processed in one of two possible ways, depending on whether or not it is the \emph{first} $\mathcal{Z}_i$-definition for some $i \in [k]$.\par
Let us assume that for some $i \in [k]$ we reach the first non-empty $\mathcal{Z}_i$-definition $\Gamma_1 u \Gamma_2$. Moreover, let $\widehat{\mathcal{Z}}_i$ be the variables from $\mathcal{Z}$ defined by this definition, i.\,e., $\widehat{\mathcal{Z}}_i = \{\varsz \in \mathcal{Z}_i \mid \open{\varsz} \in \Gamma_1\}$ (it is helpful to note that $\mathcal{Z}_i \setminus \widehat{\mathcal{Z}}_i$ is therefore exactly the set of variables that might be defined by any other $\mathcal{Z}_i$-definition to appear later in the input). Let $\widehat{\varsz}_i \in \widehat{\mathcal{Z}}_i$ be arbitrarily chosen (this variable $\widehat{\varsz}_i$ will play a central role later on, when we describe how $\mathcal{Z}_i$-definitions are processed that are not \emph{first} $\mathcal{Z}_i$-definitions). 
Let us further assume that after reading $\Gamma_1$, we are in some state $s$. 
Now, $\mu(M)$ sets every $s_{\varsz, \openmark}$ with $\varsz \in
\mathcal{Z}_i \setminus \widehat{\mathcal{Z}}_i$ to some
nondeterministically guessed state. It then reads $u$ simultaneously
from the states $s$ and all the guessed states $s_{\varsz, \openmark}$
with $\varsz \in \mathcal{Z}_i \setminus \widehat{\mathcal{Z}}_i$. The states that are reached from the guessed states $s_{\varsz, \openmark}$ with $\varsz \in \mathcal{Z}_i \setminus \widehat{\mathcal{Z}}_i$ by reading $u$ are stored in the components $s_{\varsz, \closemark}$ for all $\varsz \in \mathcal{Z}_i \setminus \widehat{\mathcal{Z}}_i$ (or $s_{\varsz, \closemark} = \undefin$ if no state can be reached from $s_{\varsz, \openmark}$ by reading $u$). This means that after reading $u$, $\mu(M)$ is in some state $t$ (the state reached from $s$ by reading $u$) and the additionally stored pairs $(s_{\varsz, \openmark}, s_{\varsz, \closemark})$ are such that, for every $\varsz \in \mathcal{Z}_i \setminus \widehat{\mathcal{Z}}_i$, reading $u$ can change $M$ from state $s_{\varsz, \openmark}$ to $s_{\varsz, \closemark}$ (unless $s_{\varsz, \closemark} = \undefin$). In particular, note that the start states $s_{\varsz, \openmark}$ are non-deterministically guessed, while the states  $s_{\varsz, \closemark}$ (unless we have $s_{\varsz, \closemark} = \undefin$) are also non-deterministically determined, since there might be several states rechable from $s_{\varsz, \openmark}$ by reading $u$. \par
Whenever we encounter a non-empty $\mathcal{Z}_i$-definition $\Gamma'_1 u' \Gamma'_2$, for some $i \in [k]$, that is \emph{not} a first $\mathcal{Z}_i$-definition, then we proceed as follows. We let again $\widehat{\mathcal{Z}}'_i$ be the variables from $\mathcal{Z}$ defined by this definition. By the way we have processed the first $\mathcal{Z}_i$-definition $\Gamma_1 u \Gamma_2$, we know that for every $\varsz' \in \widehat{\mathcal{Z}}'_i$, the pair $(s_{\varsz', \openmark}, s_{\varsz', \closemark})$ is such that we can reach $s_{\varsz', \closemark}$ from $s_{\varsz', \openmark}$ by reading $u$ (or $s_{\varsz', \closemark} = \undefin$). Moreover, recall that $\widehat{\varsz}_i$ is an arbitrary variable defined by this first $\mathcal{Z}_i$-definition $\Gamma_1 u \Gamma_2$.
Let $s'$ be the state reached by $M$ after having read $\Gamma'_1$. 
We check whether $s_{\varsz', \openmark} = s'$ for every $\varsz' \in \widehat{\mathcal{Z}}'_i$, and whether $s_{\varsz', \closemark} = t'$ for every $\varsz' \in \widehat{\mathcal{Z}}'_i$ and the same state $t'$, and, if this is the case, we read the variable reference $\widehat{\varsz}_i$ and move to state $t'$. \par
This concludes the definition of $\mu(M)$. We will show next the correctness of the construction, i.\,e., $\lang(\mu(M)) = \mu(L_1)$.\par
Let $w$ be accepted by $M$ and assume that $\mu(w) \neq \bot$, i.\,e., $\mu(w) \in \mu(L_1)$. If, for every $i \in [k]$, $w$ has at most one non-empty $\mathcal{Z}_i$-definition, then $\mu(M)$ accepts $\mu(w) = w$ (by the same run as $M$ accepts $w$). So let us assume that $i_1, i_2, \ldots, i_{p} \in [k]$ are such that $w$ has at least two non-empty $\mathcal{Z}_{i_j}$-definitions for $j \in [p]$. Since $\mu(w) \neq \bot$, we know that, for every $j \in [p]$, all $\mathcal{Z}_{i_j}$-definitions of $w$ define the same factor $u_{j}$, i.\,e., every non-empty $\mathcal{Z}_{i_j}$-definition has the form $\Gamma_1 u_{j} \Gamma_2$. We have to show that $\mu(M)$ accepts $\mu(w)$. \par
We assume that, for every $j \in [p]$, $\Gamma^{j}_1 u_{j} \Gamma^j_2$ is the first $\mathcal{Z}_{i_j}$-definition of $w$, and let $\widehat{\mathcal{Z}}_{i_j}$ be the variables from $\mathcal{Z}_{i_j}$ defined by this $\mathcal{Z}_{i_j}$-definition $\Gamma^{j}_1 u_{j} \Gamma^j_2$.
For every $\varsz \in \mathcal{Z}_{i_j} \setminus
\widehat{\mathcal{Z}}_{i_j}$ that is defined by some
$\mathcal{Z}_{i_j}$-definition in $w$, let $\widetilde{s}_{\varsz,
  \openmark}$ and $\widetilde{s}_{\varsz, \closemark}$ be the states
between which $M$ reads the factor $u_j$ in the
$\mathcal{Z}_{i_j}$-definition for $\varsz$. The $\NFA$ $\mu(M)$ can
now simulate $M$ until it encounters the first
$\mathcal{Z}_{i_j}$-definition $\Gamma^j_1 u_j \Gamma^j_2$. Then,
after reading $\Gamma_1^j$, $\mu(M)$ can guess for every $\varsz \in \mathcal{Z}_{i_j} \setminus
\widehat{\mathcal{Z}}_{i_j}$ the states $\widetilde{s}_{\varsz,
  \openmark}$ and store them in the components $s_{\varsz,
  \openmark}$, if $w$ contains some $\mathcal{Z}_{i_j}$-definition
that defines $z$, and it can guess some arbitrary state
otherwise. Then, it can read $u_j$ in such a way that exactly the
states $\widetilde{s}_{\varsz, \closemark}$ are stored in the
components $s_{\varsz, \closemark}$, if $w$ contains some
$\mathcal{Z}_{i_j}$-definition that defines $z$, and
$\widetilde{s}_{\varsz, \closemark} = \undefin$ otherwise. This must
be possible by definition of these states $\widetilde{s}_{\varsz,
  \openmark}$ and $\widetilde{s}_{\varsz, \closemark}$, and by
$\mu(M)$'s definition. The $\NFA$ $\mu(M)$ can now proceed with
simulating $M$, but whenever it encounters another
$\mathcal{Z}_{i_j}$-definition $\Gamma'^j_1 u_j \Gamma'^j_2$, due to
the stored states, it will read $\Gamma'^j_1 \widehat{\varsz}_{j}
\Gamma'^j_2$ instead of $\Gamma'^j_1 u_j \Gamma'^j_2$ (recall that $\widehat{\varsz}_i$ is the fixed variable from $\widehat{\mathcal{Z}}_{i_j}$ selected earlier). Hence, it accepts the word $\mu(w)$.\par

On the other hand, let some $w'$ be accepted by $\mu(M)$. If, for every $i \in [k]$, $w'$ has at most one non-empty $\mathcal{Z}_i$-definition, then $M$ accepts $w'$, and $\mu(w') = w'$, i.\,e., $w' \in \mu(L_1)$. So let us assume that $i_1, i_2, \ldots, i_{p} \in [k]$ are such that $w'$ has at least two non-empty $\mathcal{Z}_{i_j}$-definitions for $j \in [p]$. For every $j \in [p]$, the first $\mathcal{Z}_{i_j}$-definition has the form $\Gamma^j_1 u_j \Gamma^j_2$, and all other $\mathcal{Z}_{i_j}$-definitions have the form $\Gamma'^j_1 \widehat{\varsz}_j \Gamma'^j_2$, where $\widehat{\varsz}_j$ is from the set $\widehat{\mathcal{Z}}_{i_j}$ of those variables from $\mathcal{Z}_{i_j}$ that are defined by the first $\mathcal{Z}_{i_j}$-definition $\Gamma^{j}_1 u_{j} \Gamma^j_2$.
Moreover, for every such $\mathcal{Z}_{i_j}$-definition $\Gamma'^j_1 \widehat{\varsz}_j \Gamma'^j_2$ where $\widehat{\varsz}_j$ is read by going from some state $s$ to some state $t$, we know that $M$ can read $u_j$ by going from state $s$ to $t$ (this follows from the definition $\mu(M)$). This directly implies that the word $w$ obtained from $w'$ by replacing each $\Gamma'^j_1 \widehat{\varsz}_j \Gamma'^j_2$ by $\Gamma'^j_1 u_j \Gamma'^j_2$ satisfies $\mu(w) = w'$ and $w \in \lang(M) = L_1$. Thus, $w' \in \mu(L_1)$.

We conclude that we have shown that $\varsigma_{E}^{=}(\llbracket L \rrbracket) = \varsigma_{E}^{=}(\llbracket L_1 \rrbracket)$, that
$\llbracket \mu(L_1) \rrbracket = \varsigma_{E}^{=}(\llbracket L_1 \rrbracket)$, and that $\mu(L_1)$ is a regular ref-language over $\Sigma$ and $\varset$. Thus, $L' := \mu(L_1)$ is a regular ref-language over $\Sigma$ and $\varset$ with $\llbracket L' \rrbracket = \varsigma_{E}^{=}(\llbracket L \rrbracket)$. Moreover, $L'$ is obviously reference-bounded. \par

In order to conclude this proof, we have to show how to adapt the construction to the general case, i.\,e., the case where, for some $i, i' \in [k]$ with $i \neq i'$, it is possible that $\mathcal{Z}_i \cap \mathcal{Z}_j \neq \emptyset$, and $w \in L$ might contain $\mathcal{Z}$-definitions that are both $\mathcal{Z}_i$- and $\mathcal{Z}_{i'}$-definitions. 

For a fixed $w \in L$, we can define an equivalence relation over
$\mathcal{Z}$ as follows. We set $\varsx \sim'_w \varsz$ if $\varsx,
\varsz \in \mathcal{Z}_i$ for some $i \in [k]$, or if $w$ has a
non-empty $\mathcal{Z}$-definition $\Gamma_1 u \Gamma_2$ or an empty
$\mathcal{Z}$-definition $\Gamma_1$ such that $\{\open{\varsx},
\open{\varsz}\} \subseteq \Gamma_1$, and we let $\sim_w$ be the transitive closure of $\sim'_w$. Let $E_{\sim_w} = \{\mathcal{Z}^{\sim_w}_1, \mathcal{Z}^{\sim_w}_2, \ldots, \mathcal{Z}^{\sim_w}_{k_{\sim_w}}\}$ be the set of the equivalence classes of $\sim_w$. We note that, for every $w \in L$, we have that $E_{\sim_w}$ satisfies the two properties from above: (1) $\mathcal{Z}^{\sim_w}_i \cap \mathcal{Z}^{\sim_w}_{i'} = \emptyset$ for all $i, i' \in [k_{\sim_w}]$ with $i \neq i'$, and (2), for every $w \in L$ and for every non-empty $\mathcal{Z}$-definition $\Gamma_1 u \Gamma_2$ and every empty $\mathcal{Z}$-definition $\Gamma_1$ in $w$, there is some $i \in [k_{\sim_w}]$ such that $\{\varsz \in \mathcal{Z} \mid \open{\varsz} \in \Gamma_1\} \subseteq \mathcal{Z}^{\sim_w}_i$. \par
For the following, it is a crucial observation that, for every $w \in L$, the tuple described by $w$ is filtered out by $\varsigma_{E}^{=}$ if and only if it is filtered out by $\varsigma_{E_{\sim_w}}^{=}$, i.\,e., $\varsigma_{E}^{=}(\{\getSpanTuple{w}\}) = \varsigma_{E_{\sim_w}}^{=}(\{\getSpanTuple{w}\})$. Indeed, if the tuple described by $w$ is filtered out by $\varsigma_{E}^{=}$, then it must be filtered out by $\varsigma_{E_{\sim_w}}^{=}$ as well, since $E$ is a refinement of $E_{\sim_w}$. On the other hand, if the tuple described by $w$ satisfies $\varsigma_{E}^{=}$, then, by definition of $\sim_w$, it must also satisfy all the string-equality selections described by $E_{\sim_w}$.

For every $w \in L$, we have defined above the condition for $w$ to be in $L_1$, and we have defined the ref-word $\mu(w)$. For every fixed $w \in L$, these definitions depend only on $w$ and the sets $\mathcal{Z}_1, \mathcal{Z}_2, \ldots, \mathcal{Z}_k$. However, we had to require that these sets satisfied our two conditions: (1) $\mathcal{Z}_i \cap \mathcal{Z}_{i'} = \emptyset$ for every $i, i' \in [k]$ with $i \neq i'$, and (2), for every $w \in L$ and for every non-empty $\mathcal{Z}$-definition $\Gamma_1 u \Gamma_2$ and every empty $\mathcal{Z}$-definition $\Gamma_1$ in $w$, there is some $i \in [k]$ such that $\{\varsz \in \mathcal{Z} \mid \open{\varsz} \in \Gamma_1\} \subseteq \mathcal{Z}_i$. Since the sets from $E_{\sim_w}$ also satisfy these conditions, we can also define, for every $w \in L$, the condition for $w$ to be in $L_1$, and the ref-word $\mu(w)$, but with respect to the sets $\mathcal{Z}^{\sim_w}_1, \mathcal{Z}^{\sim_w}_2, \ldots, \mathcal{Z}^{\sim_w}_{k_{\sim_w}}$.\par

It is not hard to see that for this adapted version of $L_1$, we also have that $\varsigma_{E}^{=}(\llbracket L \rrbracket) = \varsigma_{E}^{=}(\llbracket L_1 \rrbracket)$. Indeed, every word $w \in L_1$ that, for some $i \in [k_{\sim_w}]$, contains both a non-empty $\mathcal{Z}^{\sim_w}_i$-definition and an empty $\mathcal{Z}^{\sim_w}_i$-definition must be filtered out by $\varsigma_{E}^{=}$. \par

Moreover, the adapted version of $\mu$ also satisfies that, if $w \in L_1$ with $\mu(w) \neq \bot$, then $\mu(w)$ is a ref-word with $w = \deref{\mu(w)}$. In particular, this also means that $\llbracket \mu(L_1) \rrbracket = \varsigma_{E}^{=}(\llbracket L_1 \rrbracket)$.\par

Consequently, also in the general case, we have that $\varsigma_{E}^{=}(\llbracket L \rrbracket) = \llbracket \mu(L_1) \rrbracket$. However, it remains to show that with the adapted definitions of $L_1$ and $\mu(w)$, $L_1$ and $\mu(L_1)$ are still regular. In the constructions of the $\NFA$s for $L_1$ and $\mu(L_1)$, we have explained how the computation should proceed on a fixed input $w$, and in dependency of the sets $\mathcal{Z}_1, \mathcal{Z}_2, \ldots, \mathcal{Z}_k$. Consequently, for a fixed input $w$, we can carry out the constructions also for the sets $\mathcal{Z}^{\sim_w}_1, \mathcal{Z}^{\sim_w}_2, \ldots, \mathcal{Z}^{\sim_w}_{k_{\sim_w}}$, as long as the automaton has knowledge of these sets, which, after all, depend on the input. Hence, we let the $\NFA$s simply initially guess the equivalence relation $\sim_{w}$, and then we will verify this guess while processing the input $w$, and we will abort the computation if the guess of $\sim_{w}$ has been incorrect. In this way, we can assume that for any input $w$ the sets $\mathcal{Z}^{\sim_w}_1, \mathcal{Z}^{\sim_w}_2, \ldots, \mathcal{Z}^{\sim_w}_{k_{\sim_w}}$ are known to the $\NFA$, which means that we can carry out the $\NFA$ constructions described above, just with respect to the sets $\mathcal{Z}^{\sim_w}_1, \mathcal{Z}^{\sim_w}_2, \ldots, \mathcal{Z}^{\sim_w}_{k_{\sim_w}}$ instead of the set $\mathcal{Z}_1, \mathcal{Z}_2, \ldots, \mathcal{Z}_{k}$.\par
This completes the proof of Theorem~\ref{stronglyNonOverlappingTheorem}.
\end{proof}

\subsubsection{The Span Fusion Operation}\label{sec:spanFusion}

The \emph{span-fusion}\label{spanfusionDef} $\spanfusion$ is a binary
operation $\spans \times \spans \to \spans$ defined by $\spann{i}{j}
\spanfusion \spann{i'}{j'} = \spann{\min\{i, i'\}}{\max\{j, j'\}}$ and
$\spann{i}{j} \spanfusion \undefin = \undefin \spanfusion \spann{i}{j}
= \spann{i}{j}$.
Intuitively speaking, the operation $\spanfusion$
constructs the set-union of two spans and fills in the gaps to turn it
into a valid span.
Note that $\spanfusion$ is obviously associative.
For a set $K \subseteq \spans(w)$, we define
$\bigspanfusion(K) = \undefin$ if $K = \emptyset$ and
$\bigspanfusion(K) = \bigspanfusion(K\setminus\{s\}) \spanfusion s$ if
$s \in K$.
\par 
We next lift this operation to an operation on spanners with the following intended meaning. In a table $S(w)$ for some spanner $S$ over $\Sigma$ and $\varset$, and $w \in \Sigma^*$, we want to replace a specified set of columns $\{\varsy_1, \varsy_2, \ldots, \varsy_k\} \subseteq \varset$ by a single new column $\varsx$ that, for each row $t$ (i.\,e., span-tuple $t$) in $S(w)$, contains the span $\bigspanfusion(\{t(\varsy_i) \mid i \in [k]\})$.

\begin{defi}
Let $\lambda \subseteq \varset$ and let $\varsx$ be a new variable
with $\varsx \notin \varset \setminus \lambda$. For any
$\varset$-tuple $t$, $\bigspanfusion_{\lambda \to \varsx}(t)$ is the
$((\varset \setminus \lambda) \cup \{\varsx\})$-tuple with
$(\bigspanfusion_{\lambda \to \varsx}(t))(\varsx) =
\bigspanfusion(\{t(\varsy) \mid \varsy \in \lambda\})$ and
$(\bigspanfusion_{\lambda \to \varsx}(t))(\varsz) = t(\varsz)$ for
every $\varsz \in \varset \setminus \lambda$. For a set $R$ of
$\varset$-tuples, $\bigspanfusion_{\lambda \to \varsx}(R) =
\{\bigspanfusion_{\lambda \to \varsx}(t) \mid t \in R\}$. Moreover,
for a spanner $S$ over $\Sigma$ and $\varset$, the spanner
$\bigspanfusion_{\lambda \to \varsx}(S)$ over $\Sigma$ and $((\varset \setminus \lambda)\cup \{\varsx\})$ is defined by $(\bigspanfusion_{\lambda \to \varsx}(S))(w) = \bigspanfusion_{\lambda \to \varsx}(S(w))$ for every word $w$.
\end{defi}

We use the following generalised application of the operation $\bigspanfusion_{\lambda \to \varsx}$. For $\Lambda = \{\lambda_1, \lambda_2, \ldots, \allowbreak \lambda_k\} \subseteq \mathcal{P}(\varset)$ such that all $\lambda_i$ with $i \in [k]$ are pairwise disjoint, a spanner $S$ over $\varset$ and fresh variables $\varsx_1, \varsx_2, \ldots, \varsx_k$, we define $\bigspanfusion_{\{\lambda_i \to \varsx_i \mid i \in [k]\}}(S) = \bigspanfusion_{\lambda_1 \to \varsx_1}(\bigspanfusion_{\lambda_2\to \varsx_2}(\ldots \bigspanfusion_{\lambda_k\to \varsx_k}(S)\ldots))$. If the new variables $\varsx_i$ are clear from the context, we also write $\bigspanfusion_{\lambda}$ or $\bigspanfusion_{\{\lambda_i \mid i \in [k]\}}$ instead of $\bigspanfusion_{\lambda \to \varsx}$ or $\bigspanfusion_{\{\lambda_i \to \varsx_i \mid i \in [k]\}}$, respectively.

\begin{exa}
Let $L = \lang(\open{\varsx_1} \ta^* \open{\varsx_2} \tb^* \close{\varsx_1} \ta^* \close{\varsx_2})$ be a subword-marked language over $\{\ta, \tb\}$ and $\{\varsx_1, \varsx_2\}$. For $w = \ta \ta \tb \ta \ta \ta$, we have $\llbracket L \rrbracket(w) = \{(\spann{1}{4}, \spann{3}{7})\}$. 
Moreover, let $L' = \lang(\open{\varsy_1} \ta^* \close{\varsy_1} \open{\varsy_2} \open{\varsy_3} \tb^* \close{\varsy_3} \close{\varsy_2} \open{\varsy_4} \ta^* \close{\varsy_4})$ be a subword-marked language over $\{\ta, \tb\}$ and $\{\varsy_1, \varsy_2, \varsy_3, \varsy_4\}$, and let $\Lambda = \{\{\varsy_1, \varsy_2\} \to \varsx_1, \{\varsy_3, \varsy_4\} \to \varsx_2\}$. Then 
\begin{equation*}
\bigspanfusion_{\Lambda}\llbracket L' \rrbracket(w) = \{\bigspanfusion_{\Lambda}(\spann{1}{3}, \spann{3}{4}, \spann{3}{4}, \spann{4}{7})\} = \{(\spann{1}{4}, \spann{3}{7})\} = \llbracket L \rrbracket(w)
\end{equation*}
In fact, it can be easily verified that $\bigspanfusion_{\Lambda}\llbracket L' \rrbracket = \llbracket L \rrbracket$.  
\end{exa}

\subsubsection{The Split Operation}\label{sec:splitOp}

Intuitively speaking, a split $t'$ of a span-tuple $t$ is any
span-tuple obtained from $t$ by splitting each span $t(\varsx)$ into
several spans, e.\,g., splitting $\spann{3}{24}$ into spans
$\spann{3}{5}, \spann{5}{17}, \spann{17}{17}, \spann{17}{24}$. Any
such split $t'$ can be easily translated back into $t$ by applications
of the span fusion, i.\,e., $\bigspanfusion_{\{\lambda_{\varsx} \to
  \varsx \mid \varsx \in \varset\}}(t') = t$, where the sets
$\lambda_{\varsx}$ are the variables used for the split version of
$t(\varsx)$. There are two important aspects of this
operation. Firstly, if we allow large enough splits, i.\,e., if we
choose the sets $\lambda_{\varsx}$ large enough, then we can always
split in such a way that all variables are non-overlapping. Secondly,
the split of a regular spanner (i.\,e., the spanner that extracts all splits of the original span-tuples) is also a regular spanner. Let us next define this formally.\par
Let us first recall the definition of non-overlapping variables, which
has already been introduced
in Section~\ref{sec:prelim}. Let $L$ be a subword-marked
language over $\Sigma$ and $\varset$, let $w \in L$ and $t =
\getSpanTuple{w}$. We say that variables $\varsx, \varsy \in \varset$
are \emph{non-overlapping} (\emph{with respect to $w$} (\emph{or
  $t$})) if $t(\varsx) = t(\varsy)$ or $t(\varsx)$ and $t(\varsy)$ are
disjoint (i.\,e., $t(\varsx) = \spann{\ell_{\varsx}}{r_{\varsx}}$ and
$t(\varsy) = \spann{\ell_{\varsy}}{r_{\varsy}}$ with $r_{\varsx} \leq
\ell_{\varsy}$ or $r_{\varsy} \leq \ell_{\varsx}$). The variables
$\varsx$ and $\varsy$ are non-overlapping with respect to the
subword-marked language $L$, if $\varsx$ and $\varsy$ are
non-overlapping with respect to every $w \in L$.\par 
For any set $\varset$ of variables, we define the \emph{extended variable set} $\extVarset{\varset}$ for $\varset$ by $\extVarset{\varset} := \bigcup_{\varsx \in \varset} \{\varsx^1, \varsx^2, \ldots, \varsx^{\varno}\}$, where $\varno := 4|\varset|^2 - 1$ (the choice of the number $\varno$ is crucial and will become clear later). Intuitively, for every $\varsx \in \varset$, the extended variable set for $\varset$ contains $\varno$ new variables, e.\,g., $\extVarset{\{\varsx, \varsy, \varsz\}} = \{\varsx^1, \ldots, \varsx^{37}, \varsy^1, \ldots, \varsy^{37}, \varsz^1, \ldots, \varsz^{37}\}$. With respect to $\extVarset{\varset}$, we can also make the following observation.

\begin{obs}\label{extVarsetSizeObs}
For every set $\varset$ of variables, we have that $|\extVarset{\varset}| = \bigO(|\varset|^3)$.
\end{obs}

Let $t$ be an $\varset$-tuple. Any $\extVarset{\varset}$-tuple $t'$ is called a \emph{split} of $t$ if it satisfies the following properties:
\begin{itemize}
\item For every $\varsx \in \varset$, 
\begin{itemize}
\item if $t(\varsx) = \undefin$, then $t'(\varsx^1) = t'(\varsx^2) = \ldots = t'(\varsx^{\varno}) = \undefin$,
\item if $t(\varsx) = \spann{k}{\ell}$, then, for every $i \in [\varno]$, $t'(\varsx^i) = \spann{k_i}{\ell_i}$ such that $k_1 = k$, $\ell_{\varno} = \ell$, and, for every $i \in [\varno-1]$, $\ell_i = k_{i+1}$.
\end{itemize}
\item Any two variables of $\extVarset{\varset}$ are non-overlapping with respect to $t$.
\end{itemize}
For example, let $t$ be an $\{\varsx, \varsy\}$-tuple with $t(\varsx)
= \spann{3}{24}$ and $t(\varsy) = \spann{7}{15}$. Then a possible
split of $t$ would be the $\extVarset{\varset}$-tuple $t'$ with
$t'(\varsx^1) = \spann{3}{7}, t'(\varsx^2) = \spann{7}{12},
t'(\varsx^3) = \spann{12}{15}, t'(\varsx^4) = \spann{15}{24}$ and $t'(\varsx^j) = \spann{24}{24}$ for
every $j \in \{5, 6, \ldots, 17\}$, and $t'(\varsy^1) = \spann{7}{12},
t'(\varsy^2) = \spann{12}{12}, t'(\varsy^3) = \spann{12}{15}$ and $t'(\varsy^j) = \spann{15}{15}$ for every $j \in
\{4, 5 \ldots, 17\}$.\par
For every $\varsx \in \varset$, we define $\lambda^{\varno}_{\varsx} = \{\varsx^1, \varsx^2, \ldots, \varsx^{\varno}\}$. Hence, $\extVarset{\varset} = \bigcup_{\varsx \in \varset} \lambda^{\varno}_{\varsx}$. By definition, every split $t'$ of $t$ satisfies $\bigspanfusion_{\{\lambda^{\varno}_\varsx \to \varsx \mid \varsx \in \varset\}}(t') = t$. \par
We can define splits analogously for subword-marked words. Let $L$ be a subword-marked language over $\Sigma$ and $\varset$. For any $w \in L$, we say that a subword-marked word $w'$ over $\Sigma$ and $\extVarset{\varset}$ is a \emph{split} of $w$, if $\getWord{w} = \getWord{w'}$ and $\getSpanTuple{w'}$ is a split of $\getSpanTuple{w}$. \par  
For an $\varset$-tuple $t$ (or subword-marked word $w$), let $\splitset{\varset}{t}$ ($\splitset{\varset}{w}$, respectively) be the set of all splits of $t$ (splits of $w$, respectively). 
For any subword-marked language over $\Sigma$ and $\varset$, let $\splitset{\varset}{L} = \{\splitset{\varset}{w} \mid w \in L\}$, and let $\splitset{\varset}{\llbracket L \rrbracket} = \llbracket \splitset{\varset}{L} \rrbracket$.\par
We will next see that the split of any regular spanner is also a
regular spanner. We will show that any $\NFA$ $M$ for a
subword-marked language $L$ can be modified such that it accepts $\splitset{\varset}{L}$. To this end, we simply non-deterministically use the markers $\open{\varsx^1}, \close{\varsx^1}, \open{\varsx^2}, \close{\varsx^2}, \ldots, \open{\varsx^{\varno}}, \close{\varsx^{\varno}}$ for processing the part of the input that by $M$ is read between markers $\open{\varsx}$ and $\close{\varsx}$. Moreover, to ensure that each two variables are non-overlapping, we have to make sure that all these non-deterministic split points agree with each other in the sense that whenever some split point is created by reading $\close{\varsx^j}$ and $\open{\varsx^{j+1}}$, then we also must create a split-point for all other variables by reading $\close{\varsy^{j'}}$ and $\open{\varsy^{j' + 1}}$ for the correct $j' \in [\varno]$. This construction is formally carried out in the proof of the next lemma.

\begin{lem}
Let $L$ be a regular subword-marked language over $\Sigma$ and $\varset$. Then $\splitset{\varset}{L}$ is a regular subword-marked language over $\Sigma$ and $\extVarset{\varset}$.
\end{lem}

\begin{proof}
Let $M$ be an $\NFA$ for $L$. To transform $M$ into an $\NFA$ for $\splitset{\varset}{L}$, we perform two transformation steps. In general, we use the following terminology. Reading a symbol $\Gamma \subseteq \Gamma_{\varset}$ with $\open{\varsx} \in \Gamma$ is called \emph{opening} variable $\varsx$, and reading a symbol $\Gamma \subseteq \Gamma_{\varset}$ with $\close{\varsx} \in \Gamma$ is called \emph{closing} variable $\varsx$. Moreover, at any step of a computation, we say that variable $\varsx$ is \emph{open}, if we have already read a symbol $\Gamma \subseteq \Gamma_{\varset}$ with $\open{\varsx} \in \Gamma$, but we have not yet read a symbol $\Gamma' \subseteq \Gamma_{\varset}$ with $\close{\varsx} \in \Gamma$. We assume that any $\NFA$ always stores in its finite state control which variables are currently open (this information can be maintained during any computation).

\medskip

\noindent\textbf{Step 1} (Changing $M$ into $M'$): The $\NFA$ $M'$ simulates the computation of $M$ on $w$, but with the following differences. Whenever $M$ opens some variable $\varsx \in \varset$, then $M'$ opens variable $\varsx^1$. At any point in the computation, $M'$ can nondeterministically choose some currently open variable $\varsx^j$ with $j < \varno$, and then close $\varsx^j$ and open $\varsx^{j+1}$ (observe that closing $\varsx^{\varno}$ is not included here). Whenever $M$ closes some variable $\varsx \in \varset$, then $M'$ closes the currently open variable $\varsx^j$ (which may be $\varsx^{\varno}$) and then opens and closes all remaining variables $\varsx^{j+1}, \ldots, \varsx^{\varno}$. \par 
Let us say that a subword-marked word $w'$ over $\Sigma$ and $\extVarset{\varset}$ is a \emph{pseudo-split} of a subword marked word $w$ over $\Sigma$ and $\varset$, if the first property of the definition of splits is satisfied:
\begin{itemize}
\item if $\getSpanTuple{w}(\varsx) = \undefin$, then $\getSpanTuple{w'}(\varsx^1) = \getSpanTuple{w'}(\varsx^2) = \ldots = \getSpanTuple{w'}(\varsx^{\varno}) = \undefin$,
\item if $\getSpanTuple{w}(\varsx) = \spann{k}{\ell}$, then, for every $i \in [\varno]$, $\getSpanTuple{w'}(\varsx^i) = \spann{k_i}{\ell_i}$ such that $k_1 = k$, $\ell_{\varno} = \ell$, and, for every $i \in [\varno-1]$, $\ell_i = k_{i+1}$.
\end{itemize}
It can be easily seen that $M'$ accepts exactly all subword-marked words over $\Sigma$ and $\extVarset{\varset}$ that are pseudo-splits of subword-marked words accepted by $M$.\par
We now have to filter out those pseudo-splits which are not valid splits, i.\,e., those that do not satisfy the second property of the definition of splits: any two variables of $\extVarset{\varset}$ are non-overlapping with respect to $\getSpanTuple{w'}$.

\medskip

\noindent\textbf{Step 2} (Changing $M'$ into $M''$): The $\NFA$ $M''$ simulates the computation of $M'$ on $w$, but it can interrupt certain computations and reject the input as follows. Let us assume that $M'$ reads a symbol $\Gamma \subseteq \Gamma_{\varset}$, which, since $\Gamma$ is non-empty, means that some variable is either opened or closed at this step. Then $M'$ checks whether there is some variable $\varsy$ that is open right before $M'$ reads $\Gamma$ and $\close{\varsy} \notin \Gamma$ and, if this is the case, $M'$ interrupts the computation and rejects. \par
Obviously, if some $w' \in \lang(M')$ is a valid split of some $w \in \lang(M)$, then upon reading $w'$ the situation that causes $M''$ to interrupt will never occur, which means that $w' \in \lang(M'')$. On the other hand, if some $w' \in \lang(M')$ is only a pseudo-split of some $w \in \lang(M)$, but not a valid split, then there are two variables in $\extVarset{\varset}$ that are overlapping with respect to $w'$, which means that any computation of $M''$ on $w'$ must lead to the situation that causes $M'$ to interrupt and reject. Consequently, $M''$ accepts the subset of $\lang(M')$ of valid splits. Thus, $\lang(M'') = \splitset{\varset}{L}$
\end{proof}

Let $L$ be a regular subword-marked language over $\Sigma$ and $\varset$, and let $\varsigma_{E}^{=}$ be a string-equality selection with $E = \{\mathcal{Z}_1, \mathcal{Z}_2, \ldots, \mathcal{Z}_k\} \subseteq \mathcal{P}(\varset)$. The next question is how we can represent the spanner $\varsigma_{E}^{=}(\llbracket L \rrbracket)$ by first applying the split version of $\llbracket L \rrbracket$, then applying some string-equality selections tailored to the split span relation, and then, finally, using the span-fusion operation to translate back the split span-tuples into the original span-tuples. More precisely, we want to represent $\varsigma_{E}^{=}(\llbracket L \rrbracket)$ in the form $(\bigspanfusion_{\Lambda}(\varsigma_{E'}^{=}(\llbracket \splitset{\varset}{L} \rrbracket)))$, for some suitable string-equality selection $\varsigma_{E'}^{=}$. \par
Intuitively speaking, $\varsigma_{E'}^{=}$ simulates $\varsigma^=_{E}$ in the sense that for any $\varsx, \varsy \in \mathcal{Z}_i$ it checks the string-equality of $t(\varsx)$ and $t(\varsy)$ by checking the string-equalities for all the variable pairs $t(\varsx_j)$ and $t(\varsy_j)$ for every $j \in [\varno]$. 
More formally, for every $i \in [k]$ and every $j \in [\varno]$, we define $\mathcal{Y}_i^{j} = \{\varsx^j \mid \varsx \in \mathcal{Z}_i\}$, 
and we set $\splitset{\varset}{\varsigma_{E}^{=}} = \varsigma_{\{\mathcal{Y}_i^{j} \mid j \in [\varno], i \in [k]\}}^{=}$. \par

The following lemma follows more or less directly from the
definitions. Recall that, for every $\varsx \in \varset$, we have defined $\lambda^{\varno}_{\varsx} = \{\varsx^1, \varsx^2, \ldots, \varsx^{\varno}\}$.

\begin{lem}\label{splitEasyDirectionLemma}
Let $L$ be a regular subword-marked language over $\Sigma$ and $\varset$, let $\varsigma_{E}^{=}$ be a string-equality selection with $E \subseteq \mathcal{P}(\varset)$, let $\Lambda = \{\lambda^{\varno}_\varsx \to \varsx \mid \varsx \in \varset\}$ and let $w \in \Sigma^*$. Then $(\bigspanfusion_{\Lambda}(\splitset{\varset}{\varsigma_{E}^{=}}(\llbracket \splitset{\varset}{L} \rrbracket)))(w) \subseteq (\varsigma_{E}^{=}(\llbracket L \rrbracket))(w)$.
\end{lem}

\begin{proof}
Let $t \in
(\bigspanfusion_{\Lambda}(\splitset{\varset}{\varsigma_{E}^{=}}(\llbracket
\splitset{\varset}{L} \rrbracket)))(w)$. This means that there is a
split $t'$ of $t$ with $t' \in (\splitset{\varset}{\varsigma_{E}^{=}}(\splitset{\varset}{\llbracket L \rrbracket}))(w)$ and $\bigspanfusion_{\Lambda}(t') = t$. Due to the string-equality selection $\splitset{\varset}{\varsigma_{E}^{=}}$,
this means that, for every $i \in [k]$ and all $\varsx, \varsy \in
\mathcal{Z}_i$, if $t'(\varsx^1) \neq \undefin$ and $t'(\varsy^1) \neq
\undefin$, then the spans $t'(\varsx^j)$ and $t'(\varsy^j)$ refer to equal factors of $w$ for every $j \in [\varno]$. This directly
implies that also $(\bigspanfusion_{\Lambda}(t'))(\varsx)$ and
$(\bigspanfusion_{\Lambda}(t'))(\varsy)$ refer to equal factors in $w$,
which means that $\bigspanfusion_{\Lambda}(t') \in
(\varsigma_{E}^{=}(\llbracket L \rrbracket))(w)$. Since
$\bigspanfusion_{\Lambda}(t') = t$, this means that $t \in
(\varsigma_{E}^{=}(\llbracket L \rrbracket))(w)$.  
\end{proof}

The converse of Lemma~\ref{splitEasyDirectionLemma} does not
necessarily hold. This is due to the fact that for a $t \in (\varsigma_{E}^{=}(\llbracket L \rrbracket))(w)$ and $\varsx, \varsy \in \mathcal{Z}_i$ such that $t(\varsx)$ and $t(\varsy)$ refer to equal factors, there might not be any split $t'$ of $t$ with the property that, for every $j \in [\varno]$, $t(\varsx^j)$ and $t(\varsy^j)$ refer to equal factors, which is necessary for $t$ being in $(\bigspanfusion_{\Lambda}(\splitset{\varset}{\varsigma_{E}^{=}}(\llbracket \splitset{\varset}{L} \rrbracket)))(w)$. Let us clarify this with an example.

\begin{exa}
Let $L = \lang(\open{\varsx} \ta \open{\varsy} \ta^* \close{\varsx} \ta \close{\varsy})$ be a regular subword-marked language over $\Sigma = \{\ta\}$ and $\varset = \{\varsx, \varsy\}$. Then $t = (\spann{1}{19}, \spann{2}{20})$ is in $(\varsigma_{\{\varsx, \varsy\}}^{=} \llbracket L \rrbracket)(\ta^{19})$. Now consider an arbitrary split $t'$ of $t$. By definition, $t'$ is a span-tuple over $\extVarset{\varset} = \{\varsx^1, \ldots, \varsx^{17}, \varsy^1, \ldots, \varsy^{17}\}$, such that 
\begin{itemize}
\item $t'(\varsx^1) = \spann{1}{\ell_1}, t'(\varsx^2) = \spann{\ell_1}{\ell_2}, t'(\varsx^3) = \spann{\ell_2}{\ell_3}, \ldots, t'(\varsx^{17}) = \spann{\ell_{16}}{19}$, 
\item $t'(\varsy^1) = \spann{2}{k_1}, t'(\varsy^2) = \spann{k_1}{k_2}, t'(\varsy^3) = \spann{k_2}{k_3}, \ldots, t'(\varsy^{17}) = \spann{k_{16}}{20}$, 
\item any two variables from $\extVarset{\varset}$ are non-overlapping. 
\end{itemize}
Now assume that $t'$ also has the property that, for every $i \in
[17]$, $t(\varsx^i)$ and $t(\varsy^i)$ refer to equal factors of $\ta^{19}$. For simplicity, let us assume that none of the spans are
empty (we will briefly discuss the general case below). Since $\varsx^1$ and $\varsy^1$ are non-overlapping, we must have $\ell_1 = 2$, which means that $t'(\varsx^1)$ refers to a factor $\ta$. Since, by assumption, $t'(\varsx^1)$ and $t'(\varsy^1)$ refer to equal factors, we therefore must also have $k_1 = 3$. Analogously, we can conclude that $\ell_2 = 3$ and $k_2 = 4$, and 
inductively repeating this argument implies that every span must refer
to a factor of size $1$, which is not possible since $t(\varsx)$ and
$t(\varsy)$ both refer to a factor of size $18$, and we only have $17$
variables $\varsx^1, \ldots, \varsx^{17}$ and $17$ variables
$\varsy^1, \ldots, \varsy^{17}$. Note that the same kind of
pigeon-hole argument also applies if some of the spans of the
variables from $\extVarset{\varset}$ are empty.
Consequently, no split $t'$ can have
the property that, for every $i \in [17]$, $t(\varsx^i)$ and
$t(\varsy^i)$ refer to equal factors of $\ta^{19}$, which means that
$t$ cannot be in
$(\bigspanfusion_{\Lambda}(\splitset{\varset}{\varsigma_{E}^{=}}(\llbracket
\splitset{\varset}{L} \rrbracket)))(\ta^{19})$.\par 
Note that this problem cannot be resolved by simply enlarging the
extended variable set $\extVarset{\varset}$ to $\{\varsx^1, \ldots,
\varsx^p, \varsy^1, \ldots, \varsy^p\}$ for some sufficiently large
$p$, since then the same contradiction can be obtained with the
document $\ta^{p+2}$. 
\end{exa}

This example points out that using a string-equality selection with respect to variables that can overlap is the reason that the converse of Lemma~\ref{splitEasyDirectionLemma} is not necessarily true. We will next restrict the string-equality selections accordingly.

Let $L$ be a regular subword-marked language over $\Sigma$ and $\varset$, and let $E =$\linebreak$\{\mathcal{Z}_1, \mathcal{Z}_2, \ldots, \mathcal{Z}_k\} \subseteq \mathcal{P}(\varset)$. We say that $E$ is \emph{non-overlapping} (\emph{with respect to $L$}) if all $\varsx, \varsy \in \bigcup_{i \in [k]} \mathcal{Z}_i$ are non-overlapping with respect to $L$.

\begin{lem}\label{splitHardDirectionLemma}
Let $L$ be a regular subword-marked language over $\Sigma$ and
$\varset$, let $\varsigma_{E}^{=}$ be a string-equality selection such
that $E \subseteq \mathcal{P}(\varset)$ is non-overlapping with
respect to $L$, let $\Lambda = \{\lambda^{\varno}_\varsx \to \varsx
\mid \varsx \in \varset\}$ and let $w \in \Sigma^*$. Then
$(\varsigma_{E}^{=}(\llbracket L \rrbracket))(w) \subseteq
(\bigspanfusion_{\Lambda}(\splitset{\varset}{\varsigma_{E}^{=}}(\llbracket
\splitset{\varset}{L} \rrbracket)))(w)$. 
\end{lem}

\begin{proof}
Let $E = \{\mathcal{Z}_1, \mathcal{Z}_2, \ldots, \mathcal{Z}_k\}$, and
let $t \in (\varsigma_{E}^{=}(\llbracket L \rrbracket))(w)$. We will
show that there is a split $t'$ of $t$, such that $t' \in
(\splitset{\varset}{\varsigma_{E}^{=}}(\llbracket
\splitset{\varset}{L} \rrbracket))(w)$. If such a split exists, then,
due to the fact that $\bigspanfusion_{\Lambda}(t') = t$, we have $t
\in
(\bigspanfusion_{\Lambda}(\splitset{\varset}{\varsigma_{E}^{=}}(\llbracket
\splitset{\varset}{L} \rrbracket)))(w)$. \par
Since $E$ is a non-overlapping string equality selection, we know that all the spans $t(\varsx)$ with $\varsx \in \bigcup^k_{i = 1} \mathcal{Z}_i$ are pairwise non-overlapping. However, it is not sufficient to simply create the span-tuple $t'$ where, for every $\varsx \in \bigcup^k_{i = 1} \mathcal{Z}_i$, $t'(\varsx^1) = t(\varsx) = \spann{\ell}{k}$ and $t'(\varsx^i) = \spann{k}{k}$ for every $i \in \{2, 3, \ldots, \varno\}$. In general, this definition would yield a span-tuple $t'$ that is not a valid split of $t$, since $t$ may have overlapping spans with respect to variables that are not in $\bigcup^k_{i = 1} \mathcal{Z}_i$, i.\,e., for a variable $\varsx \in \varset \setminus (\bigcup^k_{i = 1} \mathcal{Z}_i)$ and some other variable $\varsy \in \varset$, the spans $t(\varsx)$ and $t(\varsy)$ can be overlapping. 

For every $\varsx \in \varset$ with $t(\varsx) \neq \undefin$, let $t(\varsx) = \spann{\ell_{\varsx}}{k_{\varsx}}$. We observe that every $t(\varsx)$ can be interpreted as the interval $\{\ell_{\varsx}, \ell_{\varsx}+1, \ldots, k_{\varsx}\}$.
A \emph{split point of $t(\varsx)$} is any element from $\{\ell_{\varsx}, \ell_{\varsx}+1, \ldots, k_{\varsx}\}$. Now let us assume that, for every $\varsx \in \varset$, we have a set $\{p_{\varsx, 1}, p_{\varsx, 2}, \ldots, p_{\varsx, s_{\varsx}}\}$ of split points of $t(\varsx)$, which satisfy the following properties. 
\begin{enumerate}
\item For every $\varsx \in \varset$, $p_{\varsx, 1} = \ell_{\varsx}$, $p_{\varsx, s_{\varsx}} = k_{\varsx}$ and $s_{\varsx} \leq \varno + 1$.
\item For every $\varsx, \varsy \in \varset$ and every $i_{\varsx} \in [s_{\varsx} - 1]$ and $i_{\varsy} \in [s_{\varsy} - 1]$, the spans $\spann{p_{\varsx, i_{\varsx}}}{p_{\varsx, i_{\varsx} + 1}}$ and $\spann{p_{\varsy, i_{\varsy}}}{p_{\varsy, i_{\varsy} + 1}}$ are non-overlapping.
\item For every $i \in [k]$ and $\varsx, \varsy \in \mathcal{Z}_i$, $s_{\varsx} = s_{\varsy}$ and, for every $j \in [s_{\varsx}]$, $p_{\varsx, j} - \ell_{\varsx} = p_{\varsy, j} - \ell_{\varsy}$. 
\end{enumerate}

It can be easily seen that such split points induce a split $t'$ of $t$ such that $t' \in (\splitset{\varset}{\varsigma_{E}^{=}}(\llbracket \splitset{\varset}{L} \rrbracket))(w)$. Indeed, for every $\varsx \in \varset$, we define $t'(\varsx^1) = \spann{p_{\varsx, 1}}{p_{\varsx, 2}}$, $t'(\varsx^2) = \spann{p_{\varsx, 2}}{p_{\varsx, 3}}$, $\ldots$, $t'(\varsx^{s_{\varsx}-1}) = \spann{p_{\varsx, s_{\varsx} - 1}}{p_{\varsx, s_{\varsx}}}$, and we define $t'(\varsx^{q}) = \spann{p_{\varsx, s_{\varsx}}}{p_{\varsx, s_{\varsx}}}$ for every $q$ with $s_{\varsx} \leq q \leq \varno$. Since $s_{\varsx} \leq \varno + 1$ for every $\varsx \in \varset$, this is well-defined. Due to the first property of the split points, the span-tuple $t'$ satisfies the first condition of a split, and due to the second property of the split points, it also satisfies the second condition of a split. Thus, $t'$ is a split of $t$. Since $t$ satisfies the string-equality selection $\varsigma_{E}^{=}$, we must have $k_{\varsx} - \ell_{\varsx} = k_{\varsy} - \ell_{\varsy}$ for every $\varsx, \varsy \in \mathcal{Z}_i$ and $i \in [k]$. The fact that $t$ satisfies the string equality selection $\varsigma_{E}^{=}$ together with the third property of the split points directly implies that $t'$ satisfies the string-equality selection $\splitset{\varset}{\varsigma_{E}^{=}}$. This means that $t' \in (\splitset{\varset}{\varsigma_{E}^{=}}(\llbracket \splitset{\varset}{L} \rrbracket))(w)$.\par
In order to complete the proof, we have to show that split points with the above properties can be created. Let us first observe that if we declare some elements of $\{1, 2, \ldots, |w|+1\}$ to be \emph{general split points}, then this defines a set $\{p_{\varsx, 1}, p_{\varsx, 2}, \ldots, p_{\varsx, s_{\varsx}}\}$ of split points for every $\varsx \in \varset$. More precisely, for every $\varsx \in \varset$, $\{p_{\varsx, 1}, p_{\varsx, 2}, \ldots, p_{\varsx, s_{\varsx}}\}$ is exactly the set of all general split points $p$ that satisfy $p \in \{\ell_{\varsx}, \ell_{\varsx} + 1, \ldots, k_{\varsx}\}$. We declare the general split points in two steps.

\begin{itemize}
\item \emph{First step}: For every $\varsx \in \varset$ with $t(\varsx) \neq \undefin$, both $\ell_{\varsx}$ and $r_{\varsx}$ are declared general split points. \par
\item \emph{Second step}: Let us call the general split points added in the first step \emph{old} general split points, and the ones to be added in this step \emph{new} general split points. For every $i \in [k]$ we now proceed as follows.\par
Let $\mathcal{Z}_i = \{\varsz_1, \varsz_2, \ldots, \varsz_{k_i}\}$. For every $j \in [k_i]$, let $\{q_{j, 1}, q_{j, 2}, \ldots, q_{j, s_j}\}$ be the old general split points in the interval $\{\ell_{\varsz_j}, \ell_{\varsz_j} + 1, \ldots, k_{\varsz_j}\}$, i.\,e., the old general split points defined in the first step that will be split points of $t(\varsz_j)$, and let $A_j = \{q_{j, 1} - \ell_{\varsz_j}, q_{j, 2} - \ell_{\varsz_j}, \ldots, q_{j, s_j} - \ell_{\varsz_j}\}$ be the set of these split points shifted to the left by $\ell_{\varsz_j}$. Then, we join all these sets into $B_i = \bigcup^{k_i}_{j = 1} A_j$ and shift all these points back to their corresponding positions in  each of the intervals $\{\ell_{\varsz_j}, \ell_{\varsz_j} + 1, \ldots, k_{\varsz_j}\}$, i.\,e., we declare all elements of $\{p + \ell_{\varsz_j} \mid j \in [k_i], p \in B_i\}$ to be new split points.
\end{itemize}
This concludes the definition of general split points and therefore defines, for every $\varsx \in \varset$, a set $\{p_{\varsx, 1}, p_{\varsx, 2}, \ldots, p_{\varsx, s_{\varsx}}\}$ of split points of $t(\varsx)$. By definition, for every $\varsx \in \varset$, $p_{\varsx, 1} = \ell_{\varsx}$ and $p_{\varsx, s_{\varsx}} = k_{\varsx}$, since both $\ell_{\varsx}$ and $k_{\varsx}$ are declared general split points in the first step.\par
We now estimate the total number of general split points introduced by the two steps from above. We first note that in the first step we have introduced at most $2|\varset|$ general split points (the maximum $2|\varset|$ is reached if $t(\varsx) \neq \undefin$ for all variables $\varsx \in \varset$). In the second step, for every $i \in [k]$, we create at most $k_i|B_i|$ new general split points, where $B_i = \bigcup^{k_i}_{j = 1} A_j$. Since each $A_j$ only contains old general split points, we know that $|A_j| \leq 2|\varset|$ for every $j \in [k_i]$. Thus, $|B_i| \leq k_i 2|\varset|$. This means that in the second step, we create at most $\sum^{k}_{i = 1} k_i 2|\varset| = 2|\varset| \sum^{k}_{i = 1} k_i \leq 2|\varset|^2$. Finally, we conclude that the total number of general split points is at most $2|\varset| + 2|\varset|^2 \leq 4|\varset|^2$. In particular, this means that $s_{\varsx} \leq 4|\varset|^2 = \varno + 1$ for every $\varsx \in \varset$. We therefore conclude that the defined split points satisfy the first property mentioned above.\par
That the split points satisfy the second and third property mentioned above is obvious by construction.
\end{proof}

\subsubsection{Core-Spanners with Non-Overlapping String-Equality Selections}

We can now plug together the previously proven lemmas in the way sketched at the beginning of Section~\ref{sec:ExpressivePower} in order to obtain this section's main result. In particular, note that due to the normal form of Lemma~\ref{lemma:CoreSimplificationLemma}, every core spanner (with non-overlapping string equality selections) has a representation as in the statement of the following theorem.

\begin{thm}\label{coreSpannersToReflSpannersTheorem}
Let $S$ be a core spanner with $S =
\pi_{\mathcal{Y}}\varsigma_{E}^{=}(S')$, where $S'$ is a regular
spanner over $\Sigma$ and $\varset$, $\mathcal{Y} \subseteq \varset$
and $E \subseteq \mathcal{P}(\varset)$ such that $E$ is
non-overlapping with respect to $S'$. There
are a variable set $\varset'$,
a reference-bounded refl-spanner $S''$ over $\Sigma$ and $\varset'$, and a set $\Lambda \subseteq \mathcal{P}(\varset')$ such that $S = \pi_{\mathcal{Y}}\bigspanfusion_{\Lambda}(S'')$. Moreover, $|\varset'| = \bigO(|\varset|^3)$.
\end{thm}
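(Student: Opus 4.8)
I would represent the regular spanner $S'$ as $\llbracket \lang(M) \rrbracket$ for a subword-marked $\NFA$ $M$ over $\Sigma$ and $\varset$ (Proposition~\ref{regSpannerProposition}), and write $E=\{\mathcal{Z}_1,\dots,\mathcal{Z}_\ell\}$ with the $\mathcal{Z}_i$ pairwise disjoint. The guiding idea is to enforce each string-equality class $\mathcal{Z}_i$ directly inside the language by turning the \emph{leftmost} defined occurrence of a variable of $\mathcal{Z}_i$ into a reference-\emph{definition} and every later defined occurrence of $\mathcal{Z}_i$ into a \emph{reference} to it. Non-overlappingness of $\varsigma^{=}_E$ on $S'$ guarantees that, in any subword-marked word accepted by $M$, the defined occurrences of a class are linearly ordered, so ``leftmost'' is well-defined and each reference provably occurs to the right of its definition (hence we obtain a genuine ref-word). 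Crucially, because $\varsigma^{=}_E$ keeps only tuples in which all these occurrences carry the \emph{same} string, replacing later occurrences by references does not change the underlying word: $\getWord{\deref{v}}=w$ still holds.

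The main obstacle is that non-overlappingness only constrains the equality variables \emph{among themselves}: a non-equality variable, or a variable in $\mathcal{Y}$, may sit inside (or straddle the boundary of) an occurrence of an equality variable whose content we are about to replace by a reference. A bare reference has no markable interior, so such a variable's span cannot be captured directly. I would resolve this by \emph{splitting} each occurrence of a class into pieces at every offset at which some variable boundary falls in \emph{some} occurrence of that class — a common refinement of all the occurrences' boundary sets. Since all occurrences of a class share the same string (by the selection) and I split them at the same offsets, the corresponding piece strings coincide, so a piecewise reference faithfully reproduces $w$. The $\NFA$ $M''$ can realise this in a finite-state manner: while reading the master occurrence it \emph{guesses} the split points (fixing the piece lengths), and at each later occurrence it emits the pieces as references of the matching fixed lengths, accepting only if every variable boundary there lands on a piece boundary; some nondeterministic branch guesses the union of all boundaries correctly.

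Each original variable's span is then recovered as a span-fusion of its pieces: at the master occurrence the pieces are captured by fresh reference-definition variables $\rho_{i,p}$, at later occurrences by fresh marking variables wrapping the bare references, and segments lying outside all equality regions are marked directly. Since $\bigspanfusion_\Lambda$ requires the groups in $\Lambda$ to be \emph{pairwise disjoint}, any piece covered by several original variables must be marked once per covering variable, using nested markings of the same region; this is exactly where the cubic count arises — $\bigO(|\varset|^2)$ triples (class, occurrence, piece), summed using $\sum_i|\mathcal{Z}_i|\le|\varset|$, multiplied by $\bigO(|\varset|)$ covering variables, giving $|\varset'|=\bigO(|\varset|^3)$. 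I set $\Lambda=\{\lambda_{\varsv}\mid \varsv\in\varset\}$, where $\lambda_{\varsv}$ collects precisely the marking variables introduced for $\varsv$, so that $\bigspanfusion_{\lambda_{\varsv}}$ returns $\varsv$'s span (the $\undefin$-absorbing clause of $\spanfusion$ handles classes in which $\varsv$ is undefined), and finish with $\pi_{\mathcal{Y}}$. Each reference $\rho_{i,p}$ is used at most once per later occurrence of its class, i.e.\ at most $|\varset|$ times, so $S''$ is reference-bounded.

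Finally I would verify $S=\pi_{\mathcal{Y}}\bigspanfusion_\Lambda(S'')$ by the two inclusions. For completeness ($\supseteq$), any word of $\lang(M)$ whose tuple passes $\varsigma^{=}_E$ is rewritten into a ref-word of $\lang(M'')$ (choosing masters and guessing the correct common refinement) whose fused, projected tuple is the given tuple. For soundness ($\subseteq$), every ref-word of $M''$ derefs to a subword-marked word of $\lang(M)$ whose tuple automatically satisfies the equalities of $E$ (references force equal strings), so fusion and projection yield a tuple of $S$. I expect the genuinely delicate part to be the bookkeeping that makes ``split every occurrence of a class at a common set of offsets and re-fuse each original variable from its pieces'' precise, finite-state, and compatible with ref-word validity (and with $\preceq$-normalisation, via Lemma~\ref{makeNFANormalisedLemma} if needed); the underlying enforcement-by-reference mechanism and the word-equality argument are routine by comparison.
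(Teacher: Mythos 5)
Your overall route coincides with the paper's: split the occurrences of each equality class into pieces at a common refinement of variable boundaries (the paper does this uniformly, splitting \emph{every} variable into $m=|\varset|^2$ split-variables in Theorem~\ref{quasiDisjointNormalFormTheorem}, which also makes the fusion groups of $\Lambda$ pairwise disjoint for free), recover the original variables via $\bigspanfusion$, and replace all non-master occurrences of a class by references. But there is a genuine gap in your soundness direction, and it sits exactly where you declare the work ``routine by comparison''. When your $M''$ reaches a later occurrence and ``emits the pieces as references'', nothing in the construction you describe forces the simulated $M$ to be able to read the \emph{master's} content (with brackets at the guessed piece boundaries) along the path it takes at that later occurrence. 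The language of valid inputs --- those in which all occurrences of a class carry the same string --- is not regular, so $M''$ cannot filter for it after the fact; and under the natural reading of your sketch, where a reference edge is available from $s$ to $t$ whenever \emph{some} word labels an $s$-to-$t$ path of $M$, the automaton over-accepts: the deref of an accepted ref-word need not lie in $\lang(M)$, so your claimed inclusion of $\pi_{\mathcal{Y}}\bigspanfusion_{\Lambda}(S'')$ in $S$ fails. The paper itself exhibits the obstruction in Section~\ref{sec:ExpressivePower}: for $S'=\llbracket\lang(\open{\varsx}\Sigma^*\ta\Sigma^*\close{\varsx}\,\tc\,\open{\varsy}\Sigma^*\tb\Sigma^*\close{\varsy})\rrbracket$ and $E=\{\{\varsx,\varsy\}\}$, the naive replacement accepts a ref-word derefing to $\open{\varsx}\ta\close{\varsx}\tc\open{\varsy}\ta\close{\varsy}$ on $w=\ta\tc\ta$, a word not in $\lang(M)$ (no $\tb$ inside $\varsy$'s region), although the string equality itself holds; the correct refl-spanner must instead capture the intersection $\lang(\Sigma^*\ta\Sigma^*)\cap\lang(\Sigma^*\tb\Sigma^*)$ at the master.

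What is missing is precisely the paper's pseudo-validity construction, which is the crux of its proof. There, the automaton guesses up front the entire bracket subsequence $\open{\varsx_1}\close{\varsx_1}\cdots\open{\varsx_k}\close{\varsx_k}$ of the $\mathcal{Y}_{i+1}$-factorisation together with state pairs $(s_j,t_j)$ between which each later definition will be read; then, while reading the master content $u$, it runs $k$ parallel copies of $M$ through product states $(q_{1,\cdot},\ldots,q_{k,\cdot})$, verifying that $u$ also labels a path from $s_j$ to $t_j$ for every later occurrence --- a finite-state intersection check --- and only then substitutes references on-the-fly at those occurrences. With this mechanism every accepted ref-word derefs into the previous language and both of your inclusions go through; without it, soundness breaks as above. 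You would also need the preparatory normalisations the paper uses to make the factorisation clean, namely $\Gamma_{\varset}$-order-invariance (Proposition~\ref{invariantClosureProposition}) and the non-$\eword$-referencing property (Proposition~\ref{nonewordrefProposition}), to argue that each occurrence's content is a pure $\Sigma$-word and that definitions can be arranged non-overlapping --- steps your sketch glosses over. Your splitting idea, the cubic variable count, reference-boundedness, and the fusion/projection bookkeeping are all sound and essentially the paper's; but the equality-enforcement-by-reference step is the part that actually carries the theorem, and as written yours would fail.
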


\begin{proof}
Let $L$ be a regular subword-marked language with $S' = \llbracket L
\rrbracket$. Since $E$ is non-overlapping with respect to $L$,
Lemmas~\ref{splitEasyDirectionLemma}~and~\ref{splitHardDirectionLemma}
imply that 
\[
\varsigma_{E}^{=}(\llbracket L \rrbracket) \ \ = \ \ 
\bigspanfusion_{\Lambda}
\splitset{\varset}{\varsigma_{E}^{=}}(\llbracket \splitset{\varset}{L}
\rrbracket),
\]
where 
$\Lambda = \{\lambda^{\varno}_\varsx \to \varsx
\mid \varsx \in \varset\}$.
\par 
Moreover, $\splitset{\varset}{L}$ is a subword-marked language over
$\Sigma$ and $\extVarset{\varset}$ such that any two variables from $\extVarset{\varset}$ are
non-overlapping with respect to $\splitset{\varset}{L}$. This directly
implies that every variable from $\extVarset{\varset}$ is simple with
respect to $\splitset{\varset}{L}$. By applying
Theorem~\ref{stronglyNonOverlappingTheorem}, we can conclude that
there is a reference-bounded regular ref-language $L'$ over $\Sigma$ and
$\extVarset{\varset}$ with $\llbracket L' \rrbracket =
\splitset{\varset}{\varsigma_{E}^{=}}(\llbracket \splitset{\varset}{L}
\rrbracket)$. Hence, $S=\pi_{\mathcal{Y}}\varsigma_{E}^{=}(\llbracket L
\rrbracket) = \pi_{\mathcal{Y}}\bigspanfusion_{\Lambda}
\splitset{\varset}{\varsigma_{E}^{=}}(\llbracket \splitset{\varset}{L}
\rrbracket) = \pi_{\mathcal{Y}}\bigspanfusion_{\Lambda}(\llbracket L'
\rrbracket)$.\par 
Finally, by Observation~\ref{extVarsetSizeObs}, we have that $|\extVarset{\varset}| = \bigO(|\varset|^3)$.
\end{proof}

\section{Conclusion}\label{sec:conclusions}

In this work, we introduced reference-bounded refl-spanners, a new fragment of core spanners. In terms of expressive power, this fragment 
is slightly less powerful than 
the class of core spanners, but has lower evaluation complexity (see
Table~\ref{comparisonTable}, and note further that these upper bounds even hold for refl-spanners that are \emph{not} necessarily reference-bounded). If we add the \emph{span-fusion} -- a natural binary operation on spanners -- followed by a projection,
to reference-bounded
refl-spanners (see
Section~\ref{sec:ExpressivePower})
then they 
have the same expressive power as 
core spanners with \emph{non-overlapping} string-equality
selections. This demonstrates that our formalism covers all aspects of
core spanners except for the possibility of applying string-equality
selections on variables with overlapping spans. Moreover, since we
achieve better complexities for refl-spanners compared to core
spanners, this also shows that overlapping string-equality selections, followed by a projection, are a source of complexity for core spanners. \par 
From a conceptional point of view, our new angle was to treat the classical two-stage approach of core spanners, i.\,e., first producing the output table of a regular spanner and then filtering it by applying the string-equality selections, as a single $\NFA$. This is achieved by using ref-words in order to represent a document along with a span-tuple \emph{that satisfies the string-equality selections}, instead of just using subword-marked words to represent a document along with a span-tuple, which might not satisfy the string-equality selections and therefore will be filtered out later in the second evaluation stage.\par
A question that is left open for further research is about enumeration for some 
fragment of core spanners 
strictly more powerful than the regular spanners. In this regard, we note
that for efficient enumeration for (some fragments of) 
core spanners, 
we must overcome the general intractability of $\NonemptProb$ (which we have for core spanners as well as for (reference bounded) refl-spanners). 
We believe that refl-spanners are a promising candidate for further
restrictions that may lead to a fragment of 
core spanners 
that enables constant delay enumeration.

\section*{Acknowledgment}
\noindent
We thank the anonymous reviewers for their valuable comments.
In particular, the differences between
this paper and the preliminary conference version \cite{SchmidSchweikardt2021}, as
described in Section~\ref{subsection:DifferencesToPreviousVersion},
mainly go back to the reviewers' suggestions. 
They helped to considerably improve the paper both in terms of results and
in terms of the presentation.


\begin{thebibliography}{DKM{\etalchar{+}}19}

\bibitem[ABMN20]{DBLP:journals/sigmod/AmarilliBMN20}
Antoine Amarilli, Pierre Bourhis, Stefan Mengel, and Matthias Niewerth.
\newblock Constant-delay enumeration for nondeterministic document spanners.
\newblock {\em {SIGMOD} Rec.}, 49(1):25--32, 2020.
\newblock \href {https://doi.org/10.1145/3422648.3422655}
  {\path{doi:10.1145/3422648.3422655}}.

\bibitem[ABMN21]{AmarilliEtAl2021}
Antoine Amarilli, Pierre Bourhis, Stefan Mengel, and Matthias Niewerth.
\newblock Constant-delay enumeration for nondeterministic document spanners.
\newblock {\em {ACM} Trans. Database Syst.}, 46(1):2:1--2:30, 2021.
\newblock \href {https://doi.org/10.1145/3436487} {\path{doi:10.1145/3436487}}.

\bibitem[AJMR22]{AmarilliEtAl2022}
Antoine Amarilli, Louis Jachiet, Martin Mu{\~{n}}oz, and Cristian Riveros.
\newblock Efficient enumeration for annotated grammars.
\newblock In {\em {PODS} '22: International Conference on Management of Data,
  Philadelphia, PA, USA, June 12 - 17, 2022}, pages 291--300, 2022.
\newblock \href {https://doi.org/10.1145/3517804.3526232}
  {\path{doi:10.1145/3517804.3526232}}.

\bibitem[DBKM21]{DoleschalEtAl2021}
Johannes Doleschal, Noa Bratman, Benny Kimelfeld, and Wim Martens.
\newblock The complexity of aggregates over extractions by regular expressions.
\newblock In {\em 24th International Conference on Database Theory, {ICDT}
  2021, March 23-26, 2021, Nicosia, Cyprus}, pages 10:1--10:20, 2021.
\newblock \href {https://doi.org/10.4230/LIPICS.ICDT.2021.10}
  {\path{doi:10.4230/LIPICS.ICDT.2021.10}}.

\bibitem[DKM{\etalchar{+}}19]{DoleschalEtAl2019}
Johannes Doleschal, Benny Kimelfeld, Wim Martens, Yoav Nahshon, and Frank
  Neven.
\newblock Split-correctness in information extraction.
\newblock In {\em Proceedings of the 38th {ACM} {SIGMOD-SIGACT-SIGAI} Symposium
  on Principles of Database Systems, {PODS} 2019, Amsterdam, The Netherlands,
  June 30 -- July 5, 2019}, pages 149--163, 2019.

\bibitem[DKMP20]{DoleschalEtAl2020}
Johannes Doleschal, Benny Kimelfeld, Wim Martens, and Liat Peterfreund.
\newblock Weight annotation in information extraction.
\newblock In {\em 23rd International Conference on Database Theory, {ICDT}
  2020, March 30-April 2, 2020, Copenhagen, Denmark}, pages 8:1--8:18, 2020.
\newblock \href {https://doi.org/10.4230/LIPIcs.ICDT.2020.8}
  {\path{doi:10.4230/LIPIcs.ICDT.2020.8}}.

\bibitem[FH06]{FischerHeun2006}
Johannes Fischer and Volker Heun.
\newblock Theoretical and practical improvements on the {RMQ}-problem, with
  applications to {LCA} and {LCE}.
\newblock In {\em Combinatorial Pattern Matching, 17th Annual Symposium, {CPM}
  2006, Barcelona, Spain, July 5-7, 2006, Proceedings}, pages 36--48, 2006.

\bibitem[FH18]{FreydenbergerHolldack2018}
Dominik~D. Freydenberger and Mario Holldack.
\newblock Document spanners: From expressive power to decision problems.
\newblock {\em Theory Comput. Syst.}, 62(4):854--898, 2018.
\newblock URL: \url{https://doi.org/10.1007/s00224-017-9770-0}.

\bibitem[FKP18]{FreydenbergerEtAl2018}
Dominik~D. Freydenberger, Benny Kimelfeld, and Liat Peterfreund.
\newblock Joining extractions of regular expressions.
\newblock In {\em Proceedings of the 37th {ACM} {SIGMOD-SIGACT-SIGAI} Symposium
  on Principles of Database Systems, Houston, TX, USA, June 10--15, 2018},
  pages 137--149, 2018.

\bibitem[FKRV15]{FaginEtAl2015}
Ronald Fagin, Benny Kimelfeld, Frederick Reiss, and Stijn Vansummeren.
\newblock Document spanners: {A} formal approach to information extraction.
\newblock {\em J. {ACM}}, 62(2):12:1--12:51, 2015.

\bibitem[Fre19]{Freydenberger2019}
Dominik~D. Freydenberger.
\newblock A logic for document spanners.
\newblock {\em Theory Comput. Syst.}, 63(7):1679--1754, 2019.
\newblock \href {https://doi.org/10.1007/s00224-018-9874-1}
  {\path{doi:10.1007/s00224-018-9874-1}}.

\bibitem[FRU{\etalchar{+}}20]{FlorenzanoEtAl2020}
Fernando Florenzano, Cristian Riveros, Mart{\'{\i}}n Ugarte, Stijn Vansummeren,
  and Domagoj Vrgoc.
\newblock Efficient enumeration algorithms for regular document spanners.
\newblock {\em {ACM} Trans. Database Syst.}, 45(1):3:1--3:42, 2020.
\newblock \href {https://doi.org/10.1145/3351451} {\path{doi:10.1145/3351451}}.

\bibitem[FS15]{FernauSchmid2015}
Henning Fernau and Markus~L. Schmid.
\newblock Pattern matching with variables: {A} multivariate complexity
  analysis.
\newblock {\em Information and Computation ({I\&C})}, 242:287--305, 2015.

\bibitem[FSV16]{FernauEtAl2016}
Henning Fernau, Markus~L. Schmid, and Yngve Villanger.
\newblock On the parameterised complexity of string morphism problems.
\newblock {\em Theory of Computing Systems ({ToCS})}, 59(1):24--51, 2016.

\bibitem[FT20]{FreydenbergerThompson2020}
Dominik~D. Freydenberger and Sam~M. Thompson.
\newblock Dynamic complexity of document spanners.
\newblock In {\em 23rd International Conference on Database Theory, {ICDT}
  2020, March 30 -- April 2, 2020, Copenhagen, Denmark}, pages 11:1--11:21,
  2020.
\newblock URL: \url{https://doi.org/10.4230/LIPIcs.ICDT.2020.11}.

\bibitem[FT22]{FreydenbergerThompson2022}
Dominik~D. Freydenberger and Sam~M. Thompson.
\newblock Splitting spanner atoms: {A} tool for acyclic core spanners.
\newblock In {\em 25th International Conference on Database Theory, {ICDT}
  2022, March 29 to April 1, 2022, Edinburgh, {UK} (Virtual Conference)}, pages
  10:1--10:18, 2022.
\newblock \href {https://doi.org/10.4230/LIPIcs.ICDT.2022.10}
  {\path{doi:10.4230/LIPIcs.ICDT.2022.10}}.

\bibitem[MR23]{MunozRiveros2023}
Martin Mu{\~{n}}oz and Cristian Riveros.
\newblock Constant-delay enumeration for slp-compressed documents.
\newblock In {\em 26th International Conference on Database Theory, {ICDT}
  2023, March 28-31, 2023, Ioannina, Greece}, pages 7:1--7:17, 2023.
\newblock \href {https://doi.org/10.4230/LIPICS.ICDT.2023.7}
  {\path{doi:10.4230/LIPICS.ICDT.2023.7}}.

\bibitem[MRV17]{MaturanaEtAl2018_arxiv}
Francisco Maturana, Cristian Riveros, and Domagoj Vrgoc.
\newblock Document spanners for extracting incomplete information:
  Expressiveness and complexity.
\newblock {\em CoRR}, abs/1707.00827, 2017.
\newblock URL: \url{http://arxiv.org/abs/1707.00827}, \href
  {http://arxiv.org/abs/1707.00827} {\path{arXiv:1707.00827}}.

\bibitem[MRV18]{MaturanaEtAl2018}
Francisco Maturana, Cristian Riveros, and Domagoj Vrgoc.
\newblock Document spanners for extracting incomplete information:
  Expressiveness and complexity.
\newblock In {\em Proceedings of the 37th {ACM} {SIGMOD-SIGACT-SIGAI} Symposium
  on Principles of Database Systems, Houston, TX, USA, June 10--15, 2018},
  pages 125--136, 2018.

\bibitem[MS72]{MeyerStockmeyer1972}
Albert~R. Meyer and Larry~J. Stockmeyer.
\newblock The equivalence problem for regular expressions with squaring
  requires exponential space.
\newblock In {\em 13th Annual Symposium on Switching and Automata Theory,
  College Park, Maryland, USA, October 25-27, 1972}, pages 125--129, 1972.
\newblock \href {https://doi.org/10.1109/SWAT.1972.29}
  {\path{doi:10.1109/SWAT.1972.29}}.

\bibitem[MS19]{ManeaSchmid2019}
Florin Manea and Markus~L. Schmid.
\newblock Matching patterns with variables.
\newblock In {\em Combinatorics on Words -- 12th International Conference,
  {WORDS} 2019, Loughborough, UK, September 9--13, 2019, Proceedings}, pages
  1--27, 2019.
\newblock \href {https://doi.org/10.1007/978-3-030-28796-2\_1}
  {\path{doi:10.1007/978-3-030-28796-2\_1}}.

\bibitem[Pet19]{Peterfreund2019PhD}
Liat Peterfreund.
\newblock {\em The Complexity of Relational Queries over Extractions from
  Text}.
\newblock PhD thesis, 2019.
\newblock Computer science department, Technion.

\bibitem[Pet21]{Peterfreund2021}
Liat Peterfreund.
\newblock Grammars for document spanners.
\newblock In {\em 24th International Conference on Database Theory, {ICDT}
  2021, March 23-26, 2021, Nicosia, Cyprus}, pages 7:1--7:18, 2021.
\newblock Extended version available at \url{https://arxiv.org/abs/2003.06880}.
\newblock \href {https://doi.org/10.4230/LIPIcs.ICDT.2021.7}
  {\path{doi:10.4230/LIPIcs.ICDT.2021.7}}.

\bibitem[PFKK19]{PeterfreundEtAl2019}
Liat Peterfreund, Dominik~D. Freydenberger, Benny Kimelfeld, and Markus
  Kr{\"{o}}ll.
\newblock Complexity bounds for relational algebra over document spanners.
\newblock In {\em Proceedings of the 38th {ACM} {SIGMOD-SIGACT-SIGAI} Symposium
  on Principles of Database Systems, {PODS} 2019, Amsterdam, The Netherlands,
  June 30 -- July 5, 2019.}, pages 320--334, 2019.

\bibitem[PtCFK19]{PeterfreundEtAl2019_2}
Liat Peterfreund, Balder ten Cate, Ronald Fagin, and Benny Kimelfeld.
\newblock Recursive programs for document spanners.
\newblock In {\em 22nd International Conference on Database Theory, {ICDT}
  2019, March 26-28, 2019, Lisbon, Portugal}, pages 13:1--13:18, 2019.

\bibitem[Sch16]{Schmid2016}
Markus~L. Schmid.
\newblock Characterising {REGEX} languages by regular languages equipped with
  factor-referencing.
\newblock {\em Information and Computation}, 249:1--17, 2016.

\bibitem[SM73]{StockmeyerMeyer1973}
Larry~J. Stockmeyer and Albert~R. Meyer.
\newblock Word problems requiring exponential time: Preliminary report.
\newblock In {\em Proceedings of the 5th Annual {ACM} Symposium on Theory of
  Computing, April 30 - May 2, 1973, Austin, Texas, {USA}}, pages 1--9, 1973.
\newblock \href {https://doi.org/10.1145/800125.804029}
  {\path{doi:10.1145/800125.804029}}.

\bibitem[SS21a]{SchmidSchweikardt2021}
Markus~L. Schmid and Nicole Schweikardt.
\newblock A purely regular approach to non-regular core spanners.
\newblock In {\em 24th International Conference on Database Theory, {ICDT}
  2021, March 23-26, 2021, Nicosia, Cyprus}, pages 4:1--4:19, 2021.
\newblock \href {https://doi.org/10.4230/LIPIcs.ICDT.2021.4}
  {\path{doi:10.4230/LIPIcs.ICDT.2021.4}}.

\bibitem[SS21b]{SchmidSchweikardt2021_PODS}
Markus~L. Schmid and Nicole Schweikardt.
\newblock Spanner evaluation over slp-compressed documents.
\newblock In {\em PODS'21: Proceedings of the 40th {ACM} {SIGMOD-SIGACT-SIGAI}
  Symposium on Principles of Database Systems, Virtual Event, China, June
  20-25, 2021}, pages 153--165, 2021.
\newblock \href {https://doi.org/10.1145/3452021.3458325}
  {\path{doi:10.1145/3452021.3458325}}.

\bibitem[SS22]{SchmidSchweikardt2022}
Markus~L. Schmid and Nicole Schweikardt.
\newblock Query evaluation over slp-represented document databases with complex
  document editing.
\newblock In {\em {PODS} '22: International Conference on Management of Data,
  Philadelphia, PA, USA, June 12 - 17, 2022}, pages 79--89, 2022.
\newblock \href {https://doi.org/10.1145/3517804.3524158}
  {\path{doi:10.1145/3517804.3524158}}.

\end{thebibliography}

\newcommand{\etalchar}[1]{$^{#1}$}

\end{document}